\newtheoremstyle{owntheorem}{}% measure of space to leave above the theorem. E.g.: 3pt
  {}% measure of space to leave below the theorem. E.g.: 3pt
  {\itshape}% name of font to use in the body of the theorem
  {}% measure of space to indent
  {}% name of head font
  {:}% punctuation between head and body
  {1em}% space after theorem head; " " = normal interword space
  {\textbf{\thmname{#1}\thmnumber{ #2}}\thmnote{ [#3]}}% Manually specify head
\newtheoremstyle{owndefinition}%
  {}{}%
  {}{}%
  {}{:}%
  {1em}%
  {\textbf{\thmname{#1}\thmnumber{ #2}}\thmnote{ [#3]}}
\theoremstyle{owntheorem}
\newtheorem{theorem}{Theorem}[section]
\newtheorem{lemma}[theorem]{Lemma}
\newtheorem{cor}[theorem]{Corollary}
\theoremstyle{owndefinition}
\newtheorem{defn}{Definition}[section]
\newtheorem{example}{Example}[section]
\newtheorem*{remark}{Remark}
\newtheorem*{acknowledgements}{Acknowledgements}
\numberwithin{equation}{section}
\renewcommand\Re{\operatorname{Re}} 		% real part as operator Re
\renewcommand{\d}[1][]{\mathrm{d}#1\,} % differential d
\newcommand{\comma}{\quad \textrm{ ,}} % comma in math mode
\newcommand{\point}{\quad \textrm{ .}} % point in math mode
\newcommand{\monthword}[1]{\ifcase#1\or \or \or \or April\fi}
\newcommand{\HypF}[3]{ {}_2F_1 \left.\left(\genfrac{}{}{0pt}{}{#1}{#2}\right| #3 \right) }
\newcommand{\vol}{\operatorname{vol}}
\newcommand{\rank}{\operatorname{rank}}
\newcommand{\Aa}{\mathcal{A}}
\newcommand{\Aas}{\mathcal{A}_\sigma}
\newcommand{\Aabs}{\mathcal{A}_{\bar{\sigma}}}
\newcommand{\nuu}{{\underline{\nu}}}
\newcommand{\Conv}{\operatorname{Conv}}
\newcommand{\relint}{\operatorname{relint}}
\newcommand{\order}[1]{\mathcal{O} \left( #1 \right)}
\newcommand{\spacepar}{\vspace{\baselineskip}}
\title{Hypergeometric Series Representations of Feynman Integrals by GKZ Hypergeometric Systems}
\author{René Pascal Klausen\footnote{Department of Physics at Humboldt University of Berlin and Max-Planck-Institute for Gravitational Physics (Albert-Einstein-Institute) in Potsdam-Golm (klausen@physik.hu-berlin.de)}}
\date{May 27, 2020}
\begin{document}
  \maketitle
  
  \begin{abstract}
    We show that almost all Feynman integrals as well as their coefficients in a Laurent series in dimensional regularization can be written in terms of Horn hypergeometric functions. By applying the results of Gelfand-Kapranov-Zelevinsky (GKZ) we derive a formula for a class of hypergeometric series representations of Feynman integrals, which can be obtained by triangulations of the Newton polytope $\Delta_G$ corresponding to the Lee-Pomeransky polynomial $G$. Those series can be of higher dimension, but converge fast for convenient kinematics, which also allows numerical applications. Further, we discuss possible difficulties which can arise in a practical usage of this approach and give strategies to solve them.
  \end{abstract}

%   \tableofcontents

% -------------------------------------------------------------------------------------------------------------------------------

  \section*{Introduction}

% -------------------------------------------------------------------------------------------------------------------------------

  In the early days of calculating Feynman amplitudes, it was proposed by Regge to consider Feynman integrals as a kind of generalized hypergeometric functions \cite{ReggeAlgebraicTopologyMethods1967}, where the singularities of those hypergeometric functions coincide with the Landau singularities. Later on Kashiwara and Kawai \cite{Kashiwara.KawaiHolonomicSystemsLinear1976} showed that Feynman integrals satisfy indeed holonomic differential equations, where the singularities of those holonomic differential equations are determined by the Landau singularities.
    
  Apart from characterizing the Feynman integral by ``hypergeometric'' partial differential equation systems, many applications determine the Feynman integral as a generalized hypergeometric series. Usually, the often used Mellin-Barnes approach \cite{SmirnovFeynmanIntegralCalculus2006} results in Pochhammer series ${}_pF_q$, Appell functions, Lauricella functions and related functions by applying the residue theorem \cite{Boos.DavydychevMethodEvaluatingMassive1991}. Furthermore, for arbitrary one-loop Feynman integrals it is known that they can always be represented by a small set of hypergeometric series \cite{Fleischer.Jegerlehne.TarasovNewHypergeometricRepresentation2003}.
  Thirdly, the Feynman integral may be expressed by ``hypergeometric'' integrals like the generalized Meijer $G$- or Fox $H$-functions \cite{Buschman.SrivastavaFunctionAssociatedCertain1990, Inayat-HusNewPropertiesHypergeometric1987, Inayat-HusNewPropertiesHypergeometric1987a}.
    
  Thus there arise three different notions of the term ``hypergeometric'' in the Feynman integral calculus, where every notion generalizes different characterizations of the classical hypergeometric Gauß function ${}_2F_1(a,b,c;x)$. In the late 1980s Gelfand, Kapranov, Zelevinsky (GKZ) and collaborators \cite{Gelfand.Kapranov.ZelevinskyGeneralizedEulerIntegrals1990, Gelfand.Graev.RetakhGeneralHypergeometricSystems1992, Gelfand.Graev.ZelevinskiHolonomicSystemsEquations1987, Gelfand.Kapranov.ZelevinskiHypergeometricFunctionsToric1991, GelfandPartGeneralTheory1989, Gelfand.Kapranov.ZelevinskyDiscriminantsResultantsMultidimensional2008a} were starting to develop a comprehensive method to generalize the notion of ``hypergeometric'' functions in a consistent way. Those functions are called $\Aa$-hypergeometric functions and are defined by a special holonomic system of partial differential equations.
    
  As Gelfand, Kapranov and Zelevinsky illustrated with Euler integrals, the GKZ approach not only generalizes the concept of hypergeometric functions but can also be used for analyzing and solving integrals \cite{Gelfand.Kapranov.ZelevinskyGeneralizedEulerIntegrals1990}.
    
  For physicists the GKZ perspective is not entirely new. Already in the 1990s, string theorists applied the GKZ approach in order to calculate period integrals and worked out the mirror symmetry \cite{Hosono.Klemm.Theisen.YauMirrorSymmetryMirror1995, Hosono.Klemm.Theisen.YauMirrorSymmetryMirror1995a}. Recently, the GKZ approach was also used to obtain differential equations for the Feynman integral from the maximal cut \cite{VanhoveFeynmanIntegralsToric2018}. Still, the approach of Gelfand, Kapranov and Zelevinsky is no common practice among physicists.
    
  In 2016 Nasrollahpoursamami showed that the Feynman integral satisfies a differential equation system which is isomorphic to a GKZ system \cite{NasrollahpPeriodsFeynmanDiagrams2016}. Very recently\footnote{The present paper was developed independently from \cite{delaCruzFeynmanIntegralsAhypergeometric2019}, which the author noticed just shortly before publication.}, this fact was also shown directly in \cite{delaCruzFeynmanIntegralsAhypergeometric2019} based on the Lee-Pomeransky representation of the Feynman integral.
    
  Beyond the above statement which characterizes generalized Feynman integrals as $\Aa$-hypergeometric functions, we show that generalized Feynman integrals as well as every coefficient in the $\epsilon$ expansion in dimensional regularization belong to the class of Horn hypergeometric functions. Furthermore, we give an explicit formula for a multivariate series representation of a generalized Feynman integral for unimodular triangulations. This allows, to evaluate the Feynman integral efficiently for convenient kinematic regions.
    
  Therefore, from the perspective of GKZ it turns out that Horn hypergeometric functions and Feynman integrals share many properties, e.g. Horn hypergeometric functions satisfy special relations similar to the IBP-relations of the Feynman integral \cite{Kniehl.TarasovFindingNewRelationships2012}. In this article, we work out the connection between Feynman integrals and Horn hypergeometric functions from the GKZ perspective and we give a strategy, as to how to use this knowledge in the evaluation of Feynman integrals.
    
  The connections between Feynman integrals and hypergeometric functions was investigated over decades and a comprehensive summary of these investigations can be found in \cite{Kalmykov.Kniehl.Ward.YostHypergeometricFunctionsTheir2008}. Horn hypergeometric functions also appear often in the Mellin-Barnes approach and have been studied intensively by other authors, e.g \cite{Bytev.Kniehl.MochDerivativesHorntypeHypergeometric2017, SadykovHypergeometricFunctionsSeveral2002}.
     
  The paper is structured as follows. After recalling the parametric representations of the Feynman integral, which are crucial for this approach, we mention some properties of the Feynman integral, which can be obtained from their algebraic geometry description. This first section will also include a short mathematical interlude about convex polytopes, which are necessary in the description of the properties of Feynman integrals, as well as in the later GKZ approach. Secondly, we briefly introduce the GKZ hypergeometric system, which is a system of partial differential equations and recall a series solution of those systems. This leads us to the last section, in which we merge these two aspects in order to derive an analytical series representation of the Feynman integral. To complete this section we also discuss some features as well as possible difficulties which can arise in the evaluation of Feynman integrals with GKZ systems. We conclude this article by calculating the full massive sunset Feynman diagram as a non-trivial example in order to illustrate this procedure and to give a glimpse of its further scope of application.

  \begin{acknowledgements} 
    This research is supported by the International Max Planck Research School for Mathematical and Physical Aspects of Gravitation, Cosmology and Quantum Field Theory. I would like to express my special thanks to Christian Bogner for his great encouragement, his supervision and his assistance in difficult circumstances. Further, I would like to thank Dirk Kreimer for his support, organizational skills and for including me in his group. I am also grateful to Josua Faller and Tatiana Stahlhut for their proofreading and all the members of Dirk Kreimer's group for helpful discussions.
  \end{acknowledgements}

% -------------------------------------------------------------------------------------------------------------------------------
  
  \section{Feynman Integrals} \label{sec:FI}
  
% -------------------------------------------------------------------------------------------------------------------------------
    
  In this section we shortly recall the parametric representations of Feynman integrals, which are based on graph polynomials \cite{Bogner.WeinzierlFeynmanGraphPolynomials2010}. Those parametric representations contain normally two graph polynomials, known as the Symanzik polynomials. Recently, Lee and Pomeransky \cite{Lee.PomeranskyCriticalPointsNumber2013} introduced a parametric representation, which depends only on one graph polynomial, which simplifies the application of the Gelfand-Kapranov-Zelevinsky approach. In this representation the Feynman integral can be formally described as a multivariate Euler-Mellin integral.
  
  In the second part of this section, we rely on some general properties of Euler-Mellin integrals in order to recall some basic properties of the Feynman integral from the perspective of the Lee-Pomeransky representation. To keep this discussion short, we refer to standard literature \cite{SmirnovFeynmanIntegralCalculus2006, WeinzierlArtComputingLoop2006, StermanIntroductionQuantumField1993} for further properties of the Feynman integral.
  
  Since the formulation of this properties of Feynman integrals, as well as the later introduced GKZ approach, include convex polytopes, we give a short mathematical interlude about convex polytopes between these two parts. A proceeding and more detailed description of polytopes can be found in the appendix.

    \subsection{Parametric Representations of Feynman Integrals}
    The Feynman integral in momentum space is an $(Ld)$-dimensional loop integral over propagators. Since propagators are at most quadratic in the loop momenta, one can rewrite the Feynman integral as an $n$-dimensional integral over Schwinger parameters $x_1,\ldots,x_n\in\mathbb R_{\geq 0}$ \cite{PanzerFeynmanIntegralsHyperlogarithms2015}. With this rephrasing one can make the Feynman integral also meaningful for $d\in\mathbb C$, as required in the procedure of dimensional regularization \cite{tHooft.VeltmanRegularizationRenormalizationGauge1972}. \spacepar
   
    In the parametric version of Feynman integrals, two polynomials in the Schwinger parameters arise, which are known as first and second Symanzik polynomial \cite{Bogner.WeinzierlFeynmanGraphPolynomials2010}
    \begin{align}
      U &= \sum_{T\in\mathcal T_1} \prod_{e_i\notin T} x_i \\
      F &= - \sum_{F\in\mathcal T_2} s_F \prod_{e_i\notin F} x_i + U \sum_{e_i\in E} x_i m_i^2 \point
    \end{align}
    Here, $\mathcal T_i$ is the set of spanning forests of the Feynman graph $\Gamma$ consisting of $i$ components, $E$ denotes the set of all edges of $\Gamma$ and $s_F$ is the squared momentum flowing from the one component of the $2$-forest $F$ to the other component. For Euclidean kinematics the coefficients of all monomials in the second Symanzik polynomial will be positive. In the following we restrict the discussion to Euclidean kinematics in order to avoid zeros of the Symanzik polynomials in the positive orthant $x\in\mathbb R^n_{>0}$. Since almost every Feynman integral with Minkowskian kinematics has a non-vanishing overlap with the Euclidean region, one can extend the Euclidean result to the Minkowskian result by analytic continuation. In this procedure additional divergences can appear and thus the analytic continuation can be far from trivial \cite{PanzerFeynmanIntegralsHyperlogarithms2015}. \spacepar
   
    As the starting point of this discussion we define the Feynman integral in the Feynman parametric representation.
   
    \begin{defn}[Feynman integral] 
      For a given Feynman graph $\Gamma$ with $n$ edges, $L$ loops\footnote{In contrast to the nomenclature in graph theory, in the context of Feynman integrals a ``loop'' means a closed path. Thus, $L$ is the first Betti number of $\Gamma$.} and the Symanzik polynomials $U$ and $F$, the Feynman integral is given as an $n$-dimensional integral
      \begin{align}
        I_\Gamma (\nu,d,p,m) := \frac{\Gamma(\omega)}{\Gamma(\nu)} \int_{\mathbb{R}^n_+} \d[ x]  x^{\nu-1} \delta\left(1-\sum_{i=1}^n x_i\right) \frac{U^{\omega-\frac{d}{2}}}{F^\omega} \label{eq:FI1}
      \end{align}
      where $\omega :=\sum_{i=1}^n \nu_i -\frac{L d}{2}$ is the superficial degree of divergence and $\delta(x)$ denotes the Dirac $\delta$ function. The $\nu\in\mathbb{C}^{n}$ with $\Re\nu_i >0$ are the propagator powers. Note that for simplicity we use a multi-index notation throughout the whole paper, which implies the following shorthand notations
      \begin{align}
        \d[ x]  &:= \, \d[x_1] \cdots \, \d[x_n] \nonumber\\
         x^{\nu-1} &:= x_1^{\nu_1-1} \cdots x_n^{\nu_n-1} \\
        \Gamma(\nu) &:= \Gamma(\nu_1)\cdots\Gamma(\nu_n) \nonumber\point
      \end{align}
    \end{defn}
    Clearly, those integrals converge not for every values of $\nu\in\mathbb C^n$ and $d\in\mathbb C$ and also not for every choice of the polynomials $U$ and $F$. It can be shown, that besides of the class of massless tadpole graphs, these integrals define meromorphic functions in $\nu$ and $d$. Therefore, we will consider the Feynman integral always as the meromorphic continuation of the integral (\ref{eq:FI1}) to the whole complex plane. The convergence as well as the meromorphic expansion will be discussed in more detail in section \ref{sec:FIprop}.
   
    \begin{remark}
      One is often interested in $\nu_i\in\mathbb N$. However, in the following it will be convenient to consider a slightly more general notion of Feynman integrals where the propagator powers $\nu_i$ are not restricted to integer values only. Just the restriction $\Re \nu_i > 0$ is necessary to guaranty the parametric rewriting and convergence. These conditions are often assumed in the literature e.g. in \cite{SpeerGeneralizedFeynmanAmplitudes1969, Bitoun.Bogner.Klausen.PanzerFeynmanIntegralRelations2019}. 
    \end{remark}
   
    The parametric representation in equation (\ref{eq:FI1}) is not the only representation in terms of Schwinger parameters. For the following approach another parametric representation invented by Lee and Pomeransky \cite{Lee.PomeranskyCriticalPointsNumber2013} is more convenient. 
 
    \begin{theorem}[Lee-Pomeransky representation \cite{Lee.PomeranskyCriticalPointsNumber2013}] \label{thm:leepom}
      The Feynman integral from equation (\ref{eq:FI1}) can also be written as
      \begin{align}
        I_\Gamma (\nu,d,p,m) = \frac{ \Gamma\left(\frac{d}{2}\right) }{ \Gamma\left(\frac{d}{2}-\omega\right) \Gamma(\nu)} \int_{\mathbb{R}^n_+} \d[ x]  x^{\nu-1}  G^{-\frac{d}{2}} \comma \label{eq:leepom} 
      \end{align}
      where the integral depends only on the sum of Symanzik polynomials $G=U+F$, which we call the Lee-Pomeransky polynomial $G$. The equality of representations is in the sense of meromorphic extension.
    \end{theorem}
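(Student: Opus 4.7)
My plan is to start from the Lee--Pomeransky right-hand side of (\ref{eq:leepom}) and recover the Feynman parametric representation (\ref{eq:FI1}) by a rescaling trick that exploits the weighted homogeneity of the Symanzik polynomials. Recall that $U$ is homogeneous of degree $L$ in the Schwinger parameters while $F$ is homogeneous of degree $L+1$. I first insert the identity $1=\int_0^\infty\d\lambda\,\delta(\lambda-\sum_i x_i)$ into the Lee--Pomeransky integral and then substitute $x_i=\lambda y_i$ (Jacobian $\lambda^n$). Under this substitution $x^{\nu-1}\mapsto \lambda^{|\nu|-n}y^{\nu-1}$, the delta function becomes $\lambda^{-1}\delta(1-\sum_i y_i)$, and the Lee--Pomeransky polynomial factorizes as $G(x)=\lambda^L\bigl(U(y)+\lambda F(y)\bigr)$. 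Collecting all powers of $\lambda$ yields the exponent $|\nu|-1-Ld/2=\omega-1$, so the right-hand side of (\ref{eq:leepom}) equals
\begin{align*}
\frac{\Gamma(d/2)}{\Gamma(d/2-\omega)\Gamma(\nu)}\int_{\mathbb R_+^n}\!\!\d y\;y^{\nu-1}\,\delta\!\Big(1-\sum_i y_i\Big)\!\int_0^\infty\!\!\d\lambda\,\lambda^{\omega-1}\bigl(U(y)+\lambda F(y)\bigr)^{-d/2}.
\end{align*}

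The inner $\lambda$-integral is a beta integral. Substituting $\lambda=(U/F)u$ factors out $U^{\omega-d/2}/F^\omega$ and leaves
\begin{align*}
\int_0^\infty\d u\,u^{\omega-1}(1+u)^{-d/2}=B\!\left(\omega,\tfrac{d}{2}-\omega\right)=\frac{\Gamma(\omega)\Gamma(d/2-\omega)}{\Gamma(d/2)},
\end{align*}
which is valid in the strip $0<\Re\omega<\Re(d/2)$. The gamma factors cancel against the Lee--Pomeransky prefactor, leaving exactly $\Gamma(\omega)/\Gamma(\nu)$ in front of the Feynman parametric integrand $y^{\nu-1}\delta(1-\sum y_i)\,U^{\omega-d/2}F^{-\omega}$, and so the two representations agree pointwise on this strip.

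To obtain the theorem as stated, that is the equality in the sense of meromorphic extension to all $(\nu,d)\in\mathbb C^{n+1}$, I would invoke the fact, to be established in Section \ref{sec:FIprop}, that both sides define meromorphic functions of $(\nu,d)$ with only polar singularities coming from the gamma factors and the boundary behaviour of the Schwinger integrals. Since they coincide on the non-empty open region of absolute convergence delineated above (where the beta identity and all Fubini exchanges are legitimate), the identity principle for meromorphic functions forces equality everywhere the meromorphic continuations are defined.

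The only real obstacle is the bookkeeping of convergence: one must check that the region where the Lee--Pomeransky integral, the Feynman parametric integral, and the beta integral are all simultaneously absolutely convergent is non-empty, so that Fubini applies and the strip $0<\Re\omega<\Re(d/2)$ lies in the domain of convergence of the Euler--Mellin integral on the right of (\ref{eq:leepom}). For Euclidean kinematics, where $U$ and $F$ are strictly positive on $\mathbb R_{>0}^n$ and the Newton polytope of $G$ is full-dimensional, this follows from the standard convergence criteria for Euler--Mellin integrals that will be recalled in the next subsection, so no additional analytic input is required.
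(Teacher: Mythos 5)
Your proposal is correct and follows essentially the same route as the paper's proof: insert $1=\int_0^\infty\d[\lambda]\delta(\lambda-\sum_i x_i)$ into the Lee--Pomeransky integral, rescale $x_i=\lambda y_i$ to exploit the homogeneity degrees $L$ and $L+1$ of $U$ and $F$, evaluate the resulting $\lambda$-integral as a beta function on the strip $\frac{2\Re\sum_i\nu_i}{L+1}<\Re d<\frac{2\Re\sum_i\nu_i}{L}$, and conclude by meromorphic continuation. The only cosmetic difference is that the paper substitutes $x_i\to s x_i$ inside the original integral rather than renaming the variables, and your bookkeeping of the powers of $\lambda$ and the cancellation of $\Gamma$-factors matches the paper's computation exactly.
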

    \begin{proof} (A proof can be found also in \cite{Bitoun.Bogner.Klausen.PanzerFeynmanIntegralRelations2019})
      Since $U$ is homogeneous of degree $L$ and $F$ is homogeneous of degree $L+1$, the integral in (\ref{eq:leepom}) as a function of $d$ converges in the strip \break $\Lambda =  \left\{d\in\mathbb{C} \left|  \frac{2\Re \sum_i \nu_i}{L+1}  < \Re d < \frac{2\Re \sum_i \nu_i}{L}\right\}\right.$. As we consider $\Re \nu_i > 0$ it is $\Lambda\neq\emptyset$. For an equality in the sense of meromorphic extension it is sufficient to show that there is a non-vanishing interval where the equality of representations holds.
      
      Inserting $1=\int_0^\infty \d[s] \delta(s-\sum_i x_i)$ in (\ref{eq:leepom}), changing the integration order and substituting $x_i \rightarrow s x_i$ one obtains
      \begin{align}
        I_\Gamma (\nu,d,p,m) = \frac{ \Gamma\left(\frac{d}{2}\right) }{ \Gamma\left(\frac{d}{2}-\omega\right) \Gamma(\nu)} \int_0^\infty \d[s] \int_{\mathbb{R}^n_+} \d[ x] \delta(s-s\sum_i x_i)  x^{ \nu-1} s^{\sum_i\nu_i}  (s^L U + s^{L+1}F)^{-\frac{d}{2}} \textrm{  .}
      \end{align}
      Since the integral $\int_0^\infty \d[s] s^{\sum_i \nu_i-L\frac{d}{2} - 1} (U+sF)^{-\frac{d}{2}}$ can be calculated explicitly as a beta function in the region $\Lambda$ (and for $U,F>0$) one attains the representation (\ref{eq:FI1}).
    \end{proof}
  
    As a consequence of theorem \ref{thm:leepom} , the whole structure of a Feynman integral can be expressed in only one single polynomial $G\in \mathbb C[x_1,\ldots,x_n]$ . As it is possible for every polynomial, one can write the Lee-Pomeransky polynomial $G$ as
    \begin{align}
      G_{ z}( x) = \sum_{\mathrm{ a}_j\in\mathrm{A}} z_{j}  x^{\mathrm{ a}_j} = \sum_{j=1}^N z_{j} x_1^{\mathrm{a}_{1j}} \ldots x_n^{\mathrm{a}_{nj}} \label{eq:Gzx}
    \end{align}
    where $\mathrm A$ is a finite set consisting of $N$ pairwise distinct column vectors $\mathrm{ a}_j\in\mathbb Z_{\geq 0}^n$ and the coefficients of this polynomial are complex numbers $ z\in(\mathbb C\setminus\{0\})^N$, which contain the kinematics and masses. This representation is unique, up to the possibility of different monomial orderings. Without loss of generality we can fix an arbitrary monomial ordering and we denote the set of column vectors $\mathrm A$ in a matrix structure. As we will observe later, the exponential vectors of the Lee-Pomeransky polynomial $G_z$ satisfy an affine structure. Thus, it will be convenient to define 
    \begin{align}
      \Aa := \begin{pmatrix} 1 \\ \mathrm{A}  \end{pmatrix} = \begin{pmatrix} 1 & 1 & \ldots & 1 \\ \mathrm{a}_1 & \mathrm{a}_2 & \ldots & \mathrm{a}_N  \end{pmatrix} \in \mathbb Z_{\geq 0}^{(n+1)\times N} \point \label{eq:Aa}
    \end{align}
    Further, for an index set $\sigma \subset \{1,\ldots,N\}$ the notation $\Aas$ denotes the restriction of $\Aa$ according to columns indexed by $\sigma$. 
    
    In addition, we define coordinates $\nuu := (\nu_0, \nu)\in \mathbb C^{n+1}$, where the first entry of $\nuu$ contains the physical dimension $\nu_0 := \frac{d}{2}$. From a mathematical point of view, the physical dimension $d$ and the propagator powers $\nu_i$ have a similar role in parametric Feynman integrals. This is the reason why instead of using dimensional regularization one can also regularize the integral by considering the propagator powers as complex numbers as done in analytic regularization \cite{SpeerGeneralizedFeynmanAmplitudes1969}.
    
    In fact, this definition of $G_{ z}( x)$ includes a generalization of the original Feynman integral, since also the first Symanzik polynomial gets coefficients $z_i$. Thus, we define:
    \begin{defn}[Generalized Feynman Integrals] \label{def:genericFI}
      Let $G_{ z}( x)$ a Lee-Pomeransky polynomial with generic coefficients $ z\in \mathbb C^N$ satisfying $\Re z_j >0$. The generalized Feynman integral is the meromorphic continuation of the integral
      \begin{align}
        J_\Aa (\nuu, z) := \Gamma(\nu_0) \int_{\mathbb{R}^n_+} \d[ x]  x^{ \nu-1}  G_{ z}( x)^{-\nu_0} \point  \label{eq:FI-J}
      \end{align}
      defined on $\nuu=(\nu_0, \nu)\in\mathbb C^{n+1}$.
    \end{defn}
    \begin{remark}
      In this definition the complete graph structure, which is necessary to evaluate the Feynman integral, is given by the matrix $\Aa$. The variables $z\in\mathbb C^N$ contain the physical information about kinematics and masses and the $\nuu\in\mathbb C^{n+1}$ are the regularization parameters. 
      
      In contrast to equation (\ref{eq:leepom}) the coefficients in $G_{ z}$ which come from the first Symanzik polynomial are treated as generic, instead of equal to $1$. We will discuss later how one can remove these auxiliary variables afterwards to obtain the ``ordinary'' Feynman integral. To avoid unnecessary prefactors in the following, we omit also the factor $\Gamma(\nu_0-\omega)\Gamma(\nu)$ in this definition
    \end{remark} 
    With this definitions one can derive another representation of the Feynman integral as a multi-dimensional Mellin-Barnes integral. 
  
    \begin{theorem}[Representation as Fox $H$-function] \label{thm:MB}
      Let $\sigma\subset \{1,\ldots,N\}$ be an index subset with cardinality $n+1$, such that the matrix $\Aa$ restricted to columns of $\sigma$ is invertible, $\det\Aas \neq 0$. Then the Feynman integral can be written as the multi-dimensional Mellin-Barnes integral
      \begin{align}
        J_\Aa (\nuu, z) =  \frac{ z_\sigma^{-\Aas^{-1}\nuu}}{|\det\Aas|} \int_\gamma \frac{\d[ t]}{(2\pi i)^r} \Gamma( t) \Gamma(\Aas^{-1}\nuu-\Aas^{-1}\Aabs  t)  z_{\bar\sigma}^{- t}  z_\sigma^{\Aas^{-1}\Aabs  t} \label{eq:MBFox}
      \end{align}
      wherever this integral converges. The set $\bar\sigma:=\{1,\ldots N\} \setminus \sigma$ denotes the complement of $\sigma$, containing $r:=N-n-1$ elements. Restrictions of vectors and matrices to those index sets are similarly defined as $ z_\sigma := (z_i)_{i\in\sigma}$, $ z_{\bar\sigma} := (z_i)_{i\in\bar\sigma}$, $\Aabs := (a_i)_{i\in\bar\sigma}$. Every component of the integration contour $\gamma\in\mathbb C^r$ goes from $-i\infty$ to $i\infty$ such that the poles of the integrand are separated. 
    \end{theorem}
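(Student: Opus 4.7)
The plan is to split the polynomial $G_z(x)^{-\nu_0}$ via a symmetric $N$-fold Mellin-Barnes representation, interchange the order of integration, evaluate the $x$-integral using the Mellin-inversion identity to produce $n$ Dirac distributions, and finally solve the resulting $(n+1)$-dimensional linear constraint $\Aa t = \nuu$ for the components $t_\sigma$ using the invertibility of $\Aas$. The resulting Jacobian will reproduce the $|\det\Aas|^{-1}$ prefactor of (\ref{eq:MBFox}), and the eliminated Gamma functions will assemble into the vector $\Gamma(\Aas^{-1}\nuu - \Aas^{-1}\Aabs t)$.

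Concretely, I would begin with the $N$-term symmetric Mellin-Barnes identity
\begin{align}
G_z(x)^{-\nu_0} = \frac{1}{\Gamma(\nu_0)}\int\prod_{j=1}^{N}\frac{\d[t_j]}{2\pi i}\,\Gamma(t_j)\,z_j^{-t_j}\,x^{-\sum_j t_j\mathrm{a}_j}\cdot 2\pi i\,\delta\!\left(\nu_0-\sum_{j=1}^N t_j\right)
\end{align}
and substitute it into (\ref{eq:FI-J}). After cancelling the prefactor $\Gamma(\nu_0)$ and interchanging integrals (justified by Fubini on a strip of absolute convergence analogous to the strip $\Lambda$ used in Theorem~\ref{thm:leepom}), the $x$-integration decouples componentwise and yields
\begin{align}
\int_{\mathbb{R}^n_+}\d[x]\,x^{\nu-1-\sum_j t_j\mathrm{a}_j} = (2\pi i)^n\,\delta^n\!\left(\nu - \sum_{j=1}^N t_j\mathrm{a}_j\right)
\end{align}
by the distributional identity $\int_0^\infty x^{s-1}\d[x] = 2\pi i\,\delta(s)$. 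Combined with the Mellin-Barnes delta this is a single $(n+1)$-dimensional constraint $\Aa t = \nuu$ carrying an overall prefactor $(2\pi i)^{n+1}$.

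Next I would split $t = (t_\sigma, t_{\bar\sigma})$ and use
\begin{align}
\delta^{n+1}(\nuu - \Aas t_\sigma - \Aabs t_{\bar\sigma}) = |\det\Aas|^{-1}\,\delta^{n+1}\!\left(t_\sigma - \Aas^{-1}\nuu + \Aas^{-1}\Aabs t_{\bar\sigma}\right),
\end{align}
which is well defined since $\Aas$ is invertible by hypothesis. Integrating out the $n+1$ components $t_\sigma$ absorbs the entire $(2\pi i)^{n+1}$ prefactor, produces the Jacobian $|\det\Aas|^{-1}$, fixes $t_\sigma = \Aas^{-1}\nuu - \Aas^{-1}\Aabs t_{\bar\sigma}$ inside the remaining Gamma factors (producing $\Gamma(\Aas^{-1}\nuu - \Aas^{-1}\Aabs t)$), and regroups the monomials $\prod_j z_j^{-t_j}$ into $z_\sigma^{-\Aas^{-1}\nuu}\,z_{\bar\sigma}^{-t}\,z_\sigma^{\Aas^{-1}\Aabs t}$. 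Renaming $t := t_{\bar\sigma}\in\mathbb{C}^r$ with the remaining measure $\d[t]/(2\pi i)^r$ reproduces (\ref{eq:MBFox}).

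The main obstacle is the analytic justification of the formal identity $\int_0^\infty x^{s-1}\d[x]=2\pi i\,\delta(s)$ together with the Fubini step: one must exhibit a non-empty open region in $(\Re\nuu,\Re z)$ on which all intermediate integrals converge absolutely and the contour $\gamma$ separates the two families of Gamma poles, after which analytic continuation extends the identity to its natural meromorphic domain. A more pedestrian alternative is to iterate the elementary two-term identity $(A+B)^{-s}=\frac{1}{2\pi i\,\Gamma(s)}\int\Gamma(-u)\Gamma(s+u)A^u B^{-s-u}\d[u]$ exactly $r$ times, peeling off one $\bar\sigma$-monomial at each step; the residual $x$-integral is then the Euler-Mellin integral of the $(n+1)$-term polynomial $\sum_{j\in\sigma}z_j x^{\mathrm{a}_j}$, whose evaluation under the monomial change of variables $y_i = \prod_{j\in\sigma} x_j^{(\Aas)_{ij}}$ produces the factor $|\det\Aas|^{-1}$ directly and bypasses the distributional step entirely.
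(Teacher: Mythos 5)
Your proposal is correct as a formal computation and lands exactly on equation (\ref{eq:MBFox}), but it reaches it by a somewhat different mechanism than the paper, and the difference is precisely where the analytic work lives. The paper first applies the Schwinger trick, $G^{-\nu_0}=\Gamma(\nu_0)^{-1}\int_0^\infty \d[x_0]x_0^{\nu_0-1}e^{-x_0G}$, so that the constraint you carry explicitly as $2\pi i\,\delta(\nu_0-\sum_j t_j)$ is instead encoded in the extra integration variable $x_0$, whose exponents form the row of ones in $\Aa$; it then applies the Cahen--Mellin representation to each of the $N$ monomial factors of $e^{-x_0G}$ and eliminates the $n+1$ integrations indexed by $\sigma$ not with delta functions but with Mellin's inversion theorem, after the substitution $u\rightarrow\Aas^{-1}u^\prime$ which produces the Jacobian $|\det\Aas|^{-1}$. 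Your distributional identity $\int_0^\infty x^{s-1}\d[x]=2\pi i\,\delta(s)$ is exactly the formal shadow of that inversion step: the integral converges for no $s$ and only makes distributional sense on the line $\Re s=0$, so to use it you must first show the contours can be placed with $\Re(\nuu-\Aa t)=0$ attainable while every Gamma factor keeps positive real part in its argument --- which is the same separation condition $\Aas^{-1}\nuu-\Aas^{-1}\Aabs c>0$, $c>0$, that the paper extracts at the end. So your main route is formally equivalent but less rigorous, and you correctly flag this as the obstacle. Your fallback --- iterating the two-term identity for $(A+B)^{-s}$ exactly $r$ times and then evaluating the residual $(n+1)$-term Euler--Mellin integral via the monomial substitution $y=x^{\Aas}$ --- is a genuinely different and sound alternative: it trades the symmetric $N$-fold splitting for an asymmetric one and obtains $|\det\Aas|^{-1}$ as an honest Jacobian of a convergent integral (essentially a direct proof of corollary \ref{cor:Fox}), at the cost of tracking nested contour conditions through $r$ iterations.
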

    \begin{cor} \label{cor:Fox}
      Let $N=n+1$ or in other words let $\Aa$ be quadratic. If there is a region $D\subseteq \mathbb C^{n+1}$ such that the Feynman integral $J_\Aa(\nuu, z)$ converges absolutely for $\nuu\in D$, the matrix $\Aa$ is invertible and the Feynman integral is only a simple combination of $\Gamma$-functions:
      \begin{align}
        J_\Aa (\nuu, z) = \frac{\Gamma(\Aa^{-1}\nuu)}{ |\det \Aa|}  z^{-\Aa^{-1}\nuu} \point
      \end{align}
    \end{cor}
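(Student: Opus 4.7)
The plan is to derive the corollary as a degenerate case of Theorem \ref{thm:MB}. Specifically, I would take the index set $\sigma = \{1,\ldots,N\} = \{1,\ldots,n+1\}$, so that the complementary set $\bar\sigma$ is empty and the Mellin--Barnes contour has dimension $r = N - n - 1 = 0$. With no integration variables left, the factors $\Gamma(t)$, $z_{\bar\sigma}^{-t}$ and $z_\sigma^{\Aas^{-1}\Aabs t}$ all become empty products equal to $1$, and the argument of the remaining $\Gamma$-function collapses to $\Aa^{-1}\nuu$. Provided $\det\Aa\neq 0$, Theorem \ref{thm:MB} then yields exactly the claimed formula $J_\Aa(\nuu,z) = \Gamma(\Aa^{-1}\nuu)\, z^{-\Aa^{-1}\nuu}/|\det\Aa|$.

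The nontrivial part is therefore to show that absolute convergence of $J_\Aa(\nuu,z)$ on the non-empty region $D$ already forces $\det\Aa\neq 0$. I would argue by contraposition. Suppose the $n+1$ columns of $\Aa$ are linearly dependent. Since the first row of $\Aa$ consists of $1$s, this is equivalent to an affine dependence among the exponent vectors $a_1,\ldots,a_N\in\mathbb Z^n_{\geq 0}$, and with only $N=n+1$ points available such a dependence forces the Newton polytope $\Conv(a_1,\ldots,a_N)$ of $G_z$ to have dimension strictly less than $n$. For Euler--Mellin integrals this is incompatible with absolute convergence on any non-empty region: after an appropriate monomial change of variables the polynomial $G_z$ becomes independent of at least one Schwinger parameter, and the associated one-dimensional subintegral $\int_0^\infty x^{c-1}\,\mathrm d x$ cannot converge for any exponent $c$. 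This contradicts the assumption on $D$, so $\Aa$ is invertible and the first step above completes the proof.

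The main obstacle is precisely this convergence step, which requires bridging the algebraic condition ``columns of $\Aa$ dependent'' and the analytic condition ``no $\nuu$ achieves absolute convergence''. A cleaner route would be to invoke the general Euler--Mellin convergence criterion announced in section \ref{sec:FIprop}, according to which the region of absolute convergence is the interior of a polytope naturally associated to the Newton polytope of $G_z$ and is non-empty precisely when that Newton polytope is full-dimensional. With that criterion available, the argument reduces to the elementary observation that full-dimensionality of the convex hull of $n+1$ points in $\mathbb R^n$ is equivalent to their affine independence, which in turn is equivalent to $\det\Aa\neq 0$.
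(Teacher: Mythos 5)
Your proposal matches the paper's own argument: the corollary is obtained as the $r=N-n-1=0$ degenerate case of Theorem \ref{thm:MB}, and the invertibility of $\Aa$ is deduced from the hypothesis of absolute convergence via the full-dimensionality criterion of Theorem \ref{thm:FIconvergence} (the paper simply cites that theorem rather than spelling out the contrapositive as you do). Your hands-on sketch of why a degenerate Newton polytope kills convergence is slightly loose (after the monomial change of variables $G_z$ depends on the distinguished parameter through an overall monomial factor rather than not at all), but your ``cleaner route'' is exactly the paper's.
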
  
    \begin{proof}
      Starting from equation (\ref{eq:FI-J}) by the Schwinger trick one gets
      \begin{align}
        J_\Aa(\nuu, z) = \int_{\mathbb{R}^{n+1}_+} \d[x_0] x_0^{\nu_0-1} \d[ x]  x^{\nu-1} e^{-x_0 G} \comma
      \end{align}
      where $\nu_0=\frac{d}{2}$. Writing $\underline{x} = (x_0, x)$, $\nuu = (\nu_0,\nu)$ and using the Cahen-Mellin integral representation of exponential function one obtains
      \begin{align}
         J_\Aa(\nuu, z) = \int_{\mathbb{R}^{n+1}_+} \d[\underline{x}] \underline{x}^{\nuu-1} \int_{\delta+i\mathbb R^{n+1}} \frac{\d[u]}{(2\pi i)^{n+1}} \Gamma(u) z_\sigma^{-u} \underline{x}^{-\Aas u} \int_{\delta+i\mathbb R^r} \frac{\d[t]}{(2\pi i)^r} \Gamma(t) z_{\bar\sigma}^{-t} \underline{x}^{-\Aabs t} 
      \end{align}
      with $u\in\mathbb{C}^{n+1}$, $t\in\mathbb{C}^r$, some arbitrary positive numbers $\delta_i >0$ and where we split the polynomial $G$ into a $\sigma$ and a $\bar\sigma$ part. By a substitution $u\rightarrow \Aas^{-1} u^\prime$ it is
      \begin{align}
         J_\Aa(\nuu,z) = |\det \Aas^{-1}| &\int_{\delta+i\mathbb R^{r}} \frac{\d[t]}{(2\pi i)^r} \Gamma(t) z_{\bar\sigma}^{-t} \nonumber \\ 
         & \quad \int_{\mathbb{R}^{n+1}_+} \d[\underline{x}] \int_{\Aas \delta+i \Aas \mathbb R^{n+1}} \frac{\d[u^\prime]}{(2\pi i)^{n+1}}  \Gamma(\Aas^{-1}u^\prime) z_\sigma^{-\Aas^{-1}u^\prime} \underline{x}^{\nuu-u^\prime -\Aabs t- 1} \point
      \end{align}
      Since the matrix $\Aas$ contains only positive values, the integration region remains the same $\Aas \delta + i \Aas \mathbb R^{n+1} \simeq \delta^\prime+i\mathbb R^{n+1}$ with some other positive numbers $\delta^\prime\in\mathbb R^{n+1}_{>0} $, which additionally have to satisfy $\Aas^{-1}\delta^\prime >0$. By Mellin's inversion theorem \cite{AntipovaInversionMultidimensionalMellin2007} only the $t$-integration remains and one obtains equation (\ref{eq:MBFox}).
      
      Thereby, the integration contour has to be chosen, such that the poles are separated from each other in order to satisfy $\Aas^{-1}\delta^\prime>0$. More specific this means that the contour $\gamma$ has the form $c+i\mathbb R^{n+1}$ where $c\in\mathbb R^{n+1}_{>0}$ satisfies $\Aas^{-1}\nuu - \Aas^{-1}\Aabs c>0$. Clearly, in order for those $c$ to exist, the possible values of parameters $\nuu$ are restricted.
      
      The proof of the corollary is a special case, where one does not have to introduce the integrals over $t$. The existence of the inverse $\Aa^{-1}$ is ensured by theorem \ref{thm:FIconvergence}.
    \end{proof}
    
    \begin{remark}
      A more general version of this theorem can be found in \cite[thm. 5.6]{Berkesch.Forsgard.PassareEulerMellinIntegrals2011} with an independent proof. In \cite{SymanzikCalculationsConformalInvariant1972} a similar technique is used to obtain Mellin-Barnes representations from Feynman integrals.
      
      The convergence of those Mellin-Barnes representation is discussed in \cite{Hai.SrivastavaConvergenceProblemCertain1995} and general aspects of multivariate Mellin-Barnes integrals can be found in \cite{Paris.KaminskiAsymptoticsMellinBarnesIntegrals2001}.
      
      The corollary can alternatively be proven by splitting the term $G^{-\nu_0}$ by the multinomial theorem in a multidimensional power series and solve the Mellin transform of this power series by a generalized version of Ramanujan's master theorem \cite{Gonzalez.Moll.SchmidtGeneralizedRamanujanMaster2011}.
    \end{remark}

    The type of the integral which appears in equation (\ref{eq:MBFox}) is also known as multivariate Fox $H$-function \cite{Hai.SrivastavaConvergenceProblemCertain1995} and the connection between Feynman integrals and Fox $H$-function was studied before \cite{Inayat-HusNewPropertiesHypergeometric1987, Inayat-HusNewPropertiesHypergeometric1987a, Buschman.SrivastavaFunctionAssociatedCertain1990}.
    
    Since the number of monomials in the Symanzik polynomials increases fast for more complex Feynman graphs, the Mellin-Barnes representation of theorem \ref{thm:MB} does not provide an efficient way to calculate those integrals. One exception are the massless $2$-point functions consisting in $n$ lines and having the loop number $L=n-1$. These so-called ``banana graphs'' or ``sunset-like'' graphs, have only $n+1$ monomials in $G$ and satisfy therefore the condition in the corollary. 
    
    In order to illustrate the theorem \ref{thm:MB} about Mellin-Barnes representations, we finish this section with a simple example.
    
    \begin{figure}[ht] 
      \begin{center}
        \vspace{-1cm}
        \includegraphics[width=.38\textwidth]{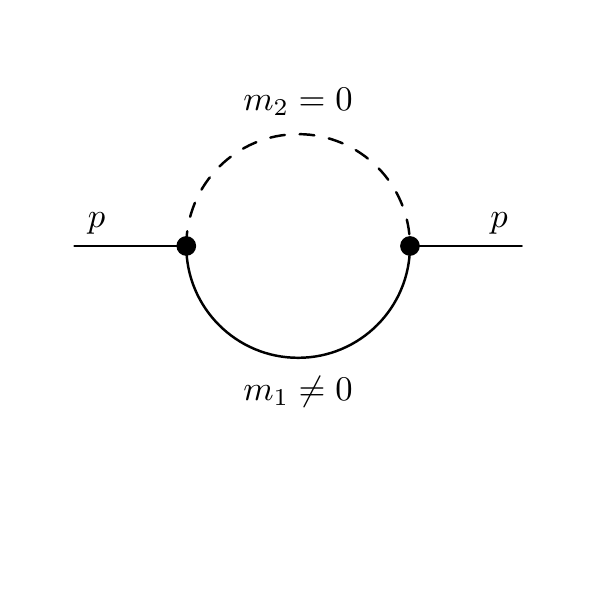}
        \vspace{-2cm}
      \end{center}
      \caption{The $1$-loop $2$-point function with one mass.}
      \label{fig:bubble1}
    \end{figure}

    \begin{example} \label{ex:1loopbubbleA}
      Consider the $1$-loop $2$-point function with one mass (see figure \ref{fig:bubble1}) having the Symanzik polynomials $U=x_1 + x_2$ and $F=(m_1^2-p^2) x_1 x_2 + m_1^2 x_1^2$. Thus the matrix $\Aa$ and the vector $z$ are given by
      \begin{align}
      \Aa = \begin{pmatrix} 
              1 & 1 & 1 & 1 \\
              1 & 0 & 1 & 2 \\
              0 & 1 & 1 & 0
            \end{pmatrix} \qquad z = (1,1,m_1^2-p^2,m_1^2) \point
      \end{align}
      Choosing $\sigma = \{1,2,3\}$, the corresponding Feynman integral in the Mellin-Barnes representation of theorem \ref{thm:MB} is given by
      \begin{align}
        J_\Aa (\nuu,z) &= z_1^{-\nu_0+\nu_2} z_2^{-\nu_0+\nu_1} z_3^{\nu_0-\nu_1-\nu_2} \int_{\delta-i\infty}^{\delta+i\infty} \frac{\d[t]}{2\pi i} \Gamma(t) \Gamma(\nu_0-\nu_1+t)\nonumber\\
        &\qquad \qquad \Gamma(\nu_0-\nu_2-t) \Gamma(-\nu_0+\nu_1+\nu_2-t) \left(\frac{z_2 z_4}{z_1 z_3}\right)^{-t} \point
      \end{align}
      For the correct contour prescription the poles have to be separated such that there exist values $\delta$ satisfying $\max \{0,-\nu_0+\nu_1\} < \delta < \min \{\nu_0-\nu_2,-\nu_0+\nu_1+\nu_2\}$. In the formulation to be introduced in the following section \ref{sec:FIprop}, this is equivalent to claim a full dimensional Newton polytope of $G$. In this case, by Cauchy's theorem the integral evaluates simply to a Gaussian hypergeometric function
      \begin{align}
        J_\Aa (\nuu,z) &=  \frac{\Gamma(2\nu_0-\nu_1-\nu_2)\Gamma(\nu_2)\Gamma(\nu_0-\nu_2)\Gamma(-\nu_0+\nu_1+\nu_2)}{\Gamma(\nu_0)} \nonumber\\
       & z_1^{-\nu_0+\nu_2} z_2^{-\nu_0+\nu_1} z_3^{\nu_0-\nu_1-\nu_2} \HypF{\nu_0-\nu_2,-\nu_0+\nu_1+\nu_2}{\nu_0}{1-\frac{z_2 z_4}{z_1z_3}} \point
      \end{align}
      After restoring the original prefactors and coefficients $z_1=z_2=1$, $z_3=p^2-m_1^2$, $z_4=m_1^2$ and $\nu_0=\frac{d}{2}$ it agrees with the expected result
      \begin{align}
        I_\Gamma (\nu_1,\nu_2,d,m_1^2,p^2) &= \frac{\Gamma\left(\frac{d}{2}-\nu_2\right)\Gamma\left(-\frac{d}{2}+\nu_1+\nu_2\right)}{\Gamma(\nu_1)\Gamma\left(\frac{d}{2}\right)} \nonumber\\
        & (p^2-m_1^2)^{\frac{d}{2}-\nu_1-\nu_2} \HypF{\frac{d}{2}-\nu_2,-\frac{d}{2}+\nu_1+\nu_2}{\frac{d}{2}}{1-\frac{m_1^2}{p^2-m_1^2}} \point
      \end{align}
    \end{example}

% -------------------------------------------------------------------------------------------------------------------------------

    \subsection{Mathematical Interlude: Convex Polytopes} \label{ssec:interlude}
    
    In the following, it will be fruitful to describe properties and solutions of Feynman integrals from a perspective of convex polytopes. In order to clarify some terminology, we give a short review of the basic concepts of convex polytopes. Readers which are familiar with convex polytopes can skip this interlude. More details about polytopes, customized to the hypergeometric approach, can be found in the appendix. For further treatments we refer to \cite{BrondstedIntroductionConvexPolytopes1983, Henk.Richter-Ge.ZieglerBasicPropertiesConvex2004b, BrunsPolytopesRingsKTheory2009, DeLoera.Rambau.SantosTriangulationsStructuresAlgorithms2010}.\spacepar
    
    A \textit{convex polytope} $P$ is defined as the convex hull of a finite set of points $\mathrm{A} = \{\mathrm{a}_1,\ldots,\mathrm{a}_N\}$
    \begin{align}
      P := \Conv(\mathrm A) := \left\{ \sum_{j=1}^N k_j \mathrm{a}_j \, \left| \, k \in \mathbb R_{\geq 0}^N, \sum_{j=1}^N k_j = 1 \right\} \right.
    \end{align}
    where $\mathrm a_i \in \mathbb R^n$. Additionally, if the points $\mathrm a_i \in \mathbb Z^n$ form an integer lattice, $P$ is called a convex lattice polytope. Since in this discussion all polytopes will be convex polytopes, we will call them simply ``polytopes'' in the following.
    
    As a fundamental result of polytope theory, every polytope can also be written as a bounded intersection of half-spaces
    \begin{align}
      P := P(M,b) := \{ \mu \in\mathbb{R}^n |  m_j^T \cdot \mu \leq b_j, 1\leq j \leq k \} \label{eq:Hpolytope}
    \end{align}
    where $b\in \mathbb R^k$ and $M\in\mathbb R^{k\times n}$ are real, $m_j^T$ denotes the rows of the matrix $M$ and $\cdot$ is the standard scalar product.
    
    The \textit{Newton polytope} $\Delta_f$ corresponding to a multivariate polynomial $f=\sum_{\alpha} c_\alpha x^\alpha \in\mathbb C[x_1,\ldots,x_n]$ is defined as the convex hull of its exponent vectors $\Delta_f := \Conv(\{\alpha | c_\alpha\neq 0\})$.
    
    A subset of $P$ having the form $F:= \{\mu\in P | c \, \mu = \beta\}$, with $c \in \mathbb R ^{k\times n}$ and where every point of the polytope $\mu\in P$ satisfies the inequality $c \, \mu\leq \beta$, is called a \textit{face} of $P$. Faces of dimension zero are called \textit{vertices} and faces of dimension $n-1$ are called \textit{facets}. Thus, if equation (\ref{eq:Hpolytope}) consist in a minimal set of inequalities, the facets of $P$ are given by $m^T_j \mu = b_j$. The polytope without its faces is called the \textit{relative interior} $\relint P$ of the polytope $P$.

    The \textit{dimension} of a polytope $P= \Conv (\mathrm A)$ is defined as the dimension of its affine hull, and thus one can easily see
    \begin{align}
      \dim (\Conv (\mathrm A )) = \rank \begin{pmatrix} 1 \\ \mathrm A\end{pmatrix} - 1 \point
    \end{align}
    As before in equation (\ref{eq:Aa}), we will denote the matrix $\mathrm A$ with an additional row $(1,\ldots,1)$ as $\Aa = \begin{pmatrix} 1\\ \mathrm A\end{pmatrix}$. If a polytope $P\subset \mathbb R^n$ has the dimension $n$ it is called to be \textit{full dimensional} and \textit{degenerated} otherwise.
    
    For full dimensional polytopes it is meaningful to introduce a volume of polytopes. As the \textit{volume} $\vol_0 (P)\in\mathbb N$ of a lattice polytope we understand a volume, which is normalized such that the standard simplex has volume $1$. Therefore, this volume is connected to the standard Euclidean volume $\vol (P)$ by a factorial of the dimension  $\vol_0 (P) = n! \vol (P)$. 
    For a full dimensional simplex $P_\bigtriangleup = \Conv (\mathrm A)$ the volume is given by the determinant $\vol_0 (P_\bigtriangleup) = |\det \Aa|$. Thus, one way of calculating the volume of a polytope is by dividing the polytope in simplices. \spacepar
    
    A subdivision of a polytope $P$ into simplices, where the union of all simplices is the full polytope and the intersection of two distinct simplices is either empty or a proper face of both simplices, is called a \textit{triangulation}.
    
    A triangulation $T(\omega) = \{\sigma_1,\ldots,\sigma_r \}$ of a polytope $P=\Conv(\mathrm A)$ is called \textit{regular}, if there exists an height vector $\omega\in\mathbb R^N$, such that for every simplex $\sigma_i$ of this triangulation there exists another vector $r_i\in\mathbb R^{n+1}$ satisfying
    \begin{align*}
      r_i \cdot a_j &= \omega_j \qquad \textrm{for} \qquad j \in \sigma_i  \\
      r_i \cdot a_j &< \omega_j \qquad \textrm{for} \qquad j \notin \sigma_i \comma
    \end{align*}
    where $\cdot$ denotes the scalar product.
    
    It can be shown, that every convex polytope admits always a regular triangulation \cite{DeLoera.Rambau.SantosTriangulationsStructuresAlgorithms2010}. If all $\sigma\in T(\omega)$ belongs to simplices with volume $1$ (i.e. $|\det\Aas|=1$) the triangulation is called \textit{unimodular}.
    
    In order to reduce later more complicated Feynman integrals to simpler one, we introduce subtriangulations.
    \begin{defn}[Subtriangulation]
      Let $P\subset \mathbb R^n$ be a polytope and $T$ be a triangulation of $P$. A simplicial complex $T^\prime$ is a \textit{subtriangulation} of $T$ (symbolically $T^\prime\subseteq T$) if
      \begin{enumerate}[a)]
        \item $T^\prime$ is a subcomplex of $T$ (i.e. $T^\prime$ is a subset of $T$, and $T^\prime$ is a polyhedral complex) and
        \item the union of all simplices in $T^\prime$ equals a (convex) polytope $P^\prime \subseteq P$.
      \end{enumerate}
      If the triangulations differ $T^\prime \neq T$ the subtriangulation is a \textit{proper subtriangulation} $T^\prime \subset T$.
    \end{defn}

    \begin{example}
      Consider the point configuration from example \ref{ex:1loopbubbleA}
      \begin{align}
        \Aa = \begin{pmatrix}
              1 & 1 & 1 & 1\\
              1 & 0 & 1 & 2\\
              0 & 1 & 1 & 0
              \end{pmatrix} \point
      \end{align}
      There are two regular triangulations $T_1 = \{\{1,2,4\},\{2,3,4\}\}$ and $T_2 =  \{ \{1,2,3\},\{1,3,4\}\}$. The first can be obtained e.g. by a height vector $\omega_1=(0,0,1,0)$. For the other triangulation one can consider $\omega_2=(0,0,0,1)$.
      
     The possible proper subtriangulations of these two triangulations are simply the single simplices itself.
     
     \begin{figure}[ht]

      \begin{center}
        \begin{tabular}{ccc}
	  \begin{tikzpicture}[scale=1]
            \draw[step=1cm,gray,very thin] (-0.1,-0.1) grid (2.2,1.7);  
	    \draw[thick,->] (0,0) -- (2.2,0) node[anchor=north west] {$\mu_1$};
	    \draw[thick,->] (0,0) -- (0,1.7) node[anchor=south east] {$\mu_2$};
	    \coordinate[label=below:$1$] (A) at (1,0);
	    \coordinate[label=left:$2$] (B) at (0,1);
	    \coordinate[label=above:$3$] (C) at (1,1);
	    \coordinate[label=below:$4$] (D) at (2,0);
	    \fill (A) circle (1pt);
	    \fill (B) circle (1pt);
	    \fill (C) circle (1pt);
	    \fill (D) circle (1pt);
	    \draw[thick] (A) -- (B);
	    \draw[thick] (C) -- (D);
	    \draw[thick] (D) -- (A);
	    \draw[thick] (B) -- (C);
	    \draw[thick] (B) -- (D);
	  \end{tikzpicture}
	  &
	  \hspace{1cm}
	  &
	  \begin{tikzpicture}[scale=1]
	    \draw[step=1cm,gray,very thin] (-0.1,-0.1) grid (2.2,1.7);  
	    \draw[thick,->] (0,0) -- (2.2,0) node[anchor=north west] {$\mu_1$};
	    \draw[thick,->] (0,0) -- (0,1.7) node[anchor=south east] {$\mu_2$};
	    \coordinate[label=below:$1$] (A) at (1,0);
	    \coordinate[label=left:$2$] (B) at (0,1);
	    \coordinate[label=above:$3$] (C) at (1,1);
	    \coordinate[label=below:$4$] (D) at (2,0);
	    \fill (A) circle (1pt);
	    \fill (B) circle (1pt);
	    \fill (C) circle (1pt);
	    \fill (D) circle (1pt);
	    \draw[thick] (A) -- (B);
	    \draw[thick] (B) -- (C);
	    \draw[thick] (C) -- (D);
	    \draw[thick] (D) -- (A);
	    \draw[thick] (A) -- (C);
	  \end{tikzpicture} \\
	  The triangulation $T_1$ of $\Conv(\mathrm A)$ & & The triangulation $T_2$ of $\Conv(\mathrm A)$ \\
	  generated by $\omega_1 =(0,0,1,0)$ & & generated by $\omega_2 =(0,0,0,1)$
	\end{tabular}
      \end{center}
      \caption{The two possible regular triangulations of the Newton polytope $\Delta_G=\Conv(\mathrm A)$ corresponding to the Lee-Pomeransky polynomial $G= z_1 x_1+z_2x_2+z_3x_1 x_2+z_4x_1^2$.}
     \end{figure}
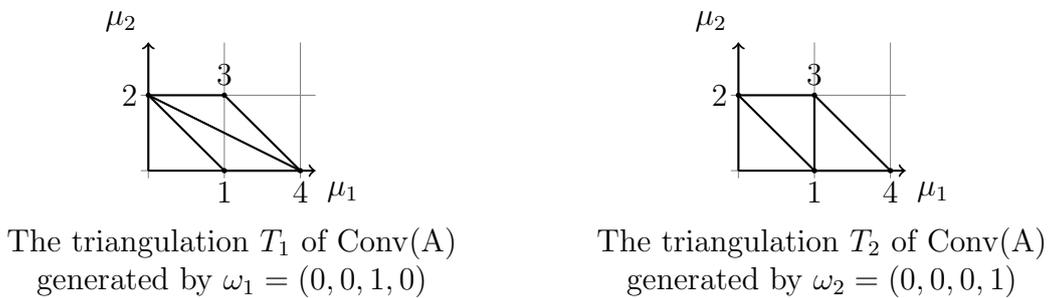

    \end{example}

% -------------------------------------------------------------------------------------------------------------------------------

    \subsection{Properties of Feynman Integrals} \label{sec:FIprop}
    In the definitions of the Feynman integrals we have omitted to discuss the convergence of those integrals. E.g. by power counting \cite{WeinbergHighEnergyBehaviorQuantum1960} the convergence behavior of the original Feynman integrals in momentum space is well known, as well as for the parametric Feynman integral (\ref{eq:FI1}) \cite{PanzerFeynmanIntegralsHyperlogarithms2015}. Therefore, we discuss the convergence shortly for the representation (\ref{eq:leepom}) and (\ref{eq:FI-J}), respectively. This discussion involves the perspective of polytopes, which allows a clear and short notation. The theorems are mostly direct implications of the work of \cite{Nilsson.PassareMellinTransformsMultivariate2010, Berkesch.Forsgard.PassareEulerMellinIntegrals2011, SchultkaToricGeometryRegularization2018} and proofs can be found there.
    \begin{theorem}[following from {\cite[thm. 2.2]{Berkesch.Forsgard.PassareEulerMellinIntegrals2011}}, the second statement is proven in {\cite[thm. 3.1]{SchultkaToricGeometryRegularization2018}}] \label{thm:FIconvergence}
      Consider the Feynman integral (\ref{eq:FI-J}) in the Euclidean region $\Re z_j>0$ with positive dimensions $\Re \nu_0 >0$ and the Lee-Pomeransky polynomial $G=U+F$. Denote by $\Delta_G$ the Newton polytope of $G$ and by $\relint \Delta_G$ its relative interior. Then the Feynman integral converges absolutely if the real parts of $\nu$ scaled componentwise by the real part of $\nu_0=\frac{d}{2}$ lie in the relative interior of the Newton polytope
      \begin{align}
        \Re(\nu)/\Re(\nu_0) \in \relint \Delta_G \point
      \end{align}
      In the description of (\ref{eq:Hpolytope}) this is equivalent to demanding $b_j \Re \nu_0 - m_j^T \cdot \Re \nu > 0$ for $1\leq j \leq k$. Furthermore, if the Newton polytope $\Delta_G$ is not full-dimensional, the Feynman integral does not converge absolutely for any choice of $\nu_0$ and $\nu$.
    \end{theorem}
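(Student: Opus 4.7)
My approach is to recognise $J_\Aa(\nuu,z)$ as a multivariate Euler-Mellin integral in the sense of \cite{Berkesch.Forsgard.PassareEulerMellinIntegrals2011} and to reduce the two assertions to their Theorem 2.2 and to Schultka's Theorem 3.1 in \cite{SchultkaToricGeometryRegularization2018}. The first reduction step is to pass from complex $z$ with $\Re z_j > 0$ to a strictly positive real polynomial by the two-sided bound
\begin{equation*}
(\min_j \Re z_j)\sum_j x^{\mathrm{a}_j}\;\le\;|G_z(x)|\;\le\;\Bigl(\sum_j |z_j|\Bigr)\sum_j x^{\mathrm{a}_j},
\end{equation*}
which shows that absolute convergence of $J_\Aa$ is equivalent to that of $\int_{\mathbb R^n_+} x^{\Re\nu-1}(\sum_j x^{\mathrm{a}_j})^{-\Re\nu_0}\,\mathrm dx$, i.e.\ an Euler-Mellin integral in the strict sense.

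For the sufficient direction I would perform a sector decomposition of $\mathbb R^n_+$ aligned with the normal fan of $\Delta_G$. In each maximal cone of the fan a single vertex monomial of $G$ dominates asymptotically, and integrability along the corresponding ray at infinity reduces to a single facet inequality $m_j^T \Re\nu < b_j \Re\nu_0$. Combining the finitely many facet conditions (including those coming from facets of $\Delta_G$ lying on the coordinate hyperplanes, which control the behaviour as $x_i\to 0$) yields exactly the H-representation description of $\relint\Delta_G$ stated in the theorem; the equivalence with the coordinate-free formulation $\Re\nu/\Re\nu_0\in\relint\Delta_G$ is then just the definition of the relative interior in the full-dimensional case together with positivity of $\Re\nu_0$.

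For the necessity claim in the degenerate case, I would argue as follows: if $\Delta_G$ is not full dimensional then $\rank\Aa < n+1$, so there exists a nonzero integer vector $v=(v_0,v_1,\ldots,v_n)$ with $\Aa^T v = 0$. This relation translates into the quasi-homogeneity
\begin{equation*}
G_z(t^{v_1}x_1,\ldots,t^{v_n}x_n)\;=\;t^{-v_0}\,G_z(x),\qquad t>0.
\end{equation*}
Using this one-parameter action to decompose the measure on $\mathbb R^n_+$ as $\tfrac{\mathrm dt}{t}$ times a transverse measure, the integral $J_\Aa$ factorises as $\int_0^\infty t^{v\cdot\nuu-1}\,\mathrm dt$ times a transverse integral. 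For every choice of $\nuu\in\mathbb C^{n+1}$ the $t$-integral diverges at $t\to 0$ (if $\Re(v\cdot\nuu)\le 0$) or at $t\to\infty$ (if $\Re(v\cdot\nuu)\ge 0$), which proves the non-convergence; this is essentially Schultka's argument.

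The step I expect to be the most delicate is the sector-decomposition bookkeeping in paragraph two — identifying each asymptotic direction in $\mathbb R^n_+$ with the correct facet of $\Delta_G$ and controlling the cross-terms in the polynomial sum $\sum_j x^{\mathrm{a}_j}$ away from the chosen dominant monomial. Since this is precisely what is carried out in \cite{Berkesch.Forsgard.PassareEulerMellinIntegrals2011}, my proof would ultimately consist of verifying that the generalised Feynman integral \eqref{eq:FI-J} satisfies the hypotheses of the cited theorems rather than reproducing these estimates from scratch.
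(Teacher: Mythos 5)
The paper gives no proof of this theorem beyond the attribution in its title --- it explicitly states that these results ``are mostly direct implications of the work of \cite{Nilsson.PassareMellinTransformsMultivariate2010, Berkesch.Forsgard.PassareEulerMellinIntegrals2011, SchultkaToricGeometryRegularization2018} and proofs can be found there'' --- and your proposal takes essentially that same route, namely verifying that $J_\Aa$ is an Euler--Mellin integral and reducing to \cite[thm.~2.2]{Berkesch.Forsgard.PassareEulerMellinIntegrals2011} and \cite[thm.~3.1]{SchultkaToricGeometryRegularization2018}, while additionally sketching (correctly) the normal-fan estimate and the quasi-homogeneity argument that those references carry out. The one small point to add in your first step is that $\left|G_z(x)^{-\nu_0}\right| = |G_z(x)|^{-\Re \nu_0}\, e^{(\Im \nu_0)\arg G_z(x)}$ differs from $|G_z(x)|^{-\Re\nu_0}$ by a factor that is bounded above and below by positive constants precisely because $\Re G_z(x)>0$ confines $\arg G_z(x)$ to $(-\pi/2,\pi/2)$, so your two-sided bound does indeed yield the claimed equivalence of absolute convergence.
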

    The second statement means, that Feynman integrals which do not have a full-dimensional Newton polytope $\Delta_G$ are neither dimensionally nor analytically regularizable. This result is not surprising, since if the Newton polytope is not full dimensional, the polynomial $G$ has a special homogeneous property. A polynomial $G$ is homogeneous in such that way, that there exists numbers $c_0, \ldots, c_n\in\mathbb Z$ not all zero, such that
    \begin{align}
      G_z (s^{c_1} x_1 ,\ldots,s^{c_n} x_n ) = s^{c_0} G_z (x_1,\ldots,x_n)
    \end{align}
    holds. If $G$ corresponds to a massless tadpole graph, $G$ is homogeneous in this sense, since massless tadpole graphs have $F=0$ and the first Symanzik polynomial is homogeneous of degree $L$. Thus, there will be no values for $\nuu=(\nu_0,\nu)\in\mathbb C^{n+1}$ that those integrals converge. However one usually sets those tadpole graphs to zero by including a counterterm in the renormalization procedure, which removes all tadpole graphs \cite{StermanIntroductionQuantumField1993} (see also \cite[sec. 3.1]{Bitoun.Bogner.Klausen.PanzerFeynmanIntegralRelations2019}).
    
    Since the Feynman integral in (\ref{eq:FI-J}) is only equivalent to the original Feynman integral in the meromorphic extension, the convergence region of (\ref{eq:FI-J}) has no deeper physical meaning. Using the following theorem one can deduce the meromorphic continuation of (\ref{eq:FI-J}) to the whole complex plane, which determines the physical relevant poles of the Feynman integral.
  
    \begin{theorem}[Meromorphic continuation of Feynman integrals {\cite[thm. 2.4, rem. 2.6]{Berkesch.Forsgard.PassareEulerMellinIntegrals2011}}] \label{thm:meromorphic}
      Consider a Feynman integral $J_\Aa$  in the Euclidean region $\Re z_j>0$ with a full-dimensional Newton polytope $\Delta_G = \{ \mu \in\mathbb{R}^n |  m_j^T \cdot \mu \leq b_j, 1\leq j \leq k \}$. Then one can rewrite the Feynman integral as
      \begin{align}
        J_\Aa (\nuu,z) = \Phi_\Aa(\nuu,z) \prod_{j=1}^{k} \Gamma(b_j \Re \nu_0 - m_j^T \cdot \Re \nu) \label{eq:mero}
      \end{align}
      where $\Phi_\Aa(\nuu,z)$ is an entire function with respect to $\nuu\in\mathbb C^{n+1}$.
    \end{theorem}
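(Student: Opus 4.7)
The plan is to localize the poles of $J_\Aa(\nuu, z)$ at the boundary of the integration region and to show that each facet of $\Delta_G$ contributes exactly one Gamma factor. Theorem \ref{thm:FIconvergence} already tells us that the convergence region is carved out by the facet inequalities $b_j\Re\nu_0 - m_j^T\Re\nu > 0$, so it is natural to guess that the meromorphic continuation develops poles precisely on the hyperplanes $b_j\nu_0 - m_j^T\nu = -k$ for $k\in\mathbb Z_{\geq 0}$, which is exactly the pole set of $\prod_j \Gamma(b_j\nu_0 - m_j^T\nu)$. The theorem then amounts to constructing an entire function $\Phi_\Aa$ so that the two sides agree.

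To construct $\Phi_\Aa$ I would adopt a sector decomposition of $\mathbb R^n_+$ adapted to a regular triangulation $T$ of $\Delta_G$, whose existence is recalled in subsection \ref{ssec:interlude}. For each maximal simplex $\sigma\in T$ one obtains a sector $C_\sigma\subset\mathbb R^n_+$ in which the monomials indexed by the vertices of $\sigma$ dominate the remaining terms of $G_z$; in logarithmic coordinates $y_i=\log x_i$ these sectors form the cones of the fan dual to $T$ and partition $\mathbb R^n$ up to a measure-zero set. Within each $C_\sigma$ a monomial change of variables tied to $\Aas$ (unimodular when $T$ is, and reducible to unimodular pieces by a blow-up refinement otherwise) brings $G_z$ into the form $c_\sigma(z)\, y^{\alpha_\sigma}\bigl(1 + R_\sigma(y)\bigr)$ with $R_\sigma$ bounded. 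Expanding $(1+R_\sigma)^{-\nu_0}$ as a convergent series and applying Fubini reduces each sector contribution to iterated one-dimensional Euler--Mellin integrals of Beta type, each of which evaluates to a ratio of Gamma functions whose arguments are precisely the facet forms $b_j\nu_0 - m_j^T\nu$ corresponding to the facets incident to $\sigma$.

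Summing over all sectors and collecting Gamma prefactors produces the full product $\prod_{j=1}^{k}\Gamma(b_j\nu_0 - m_j^T\nu)$, and the resummed series coefficients define $\Phi_\Aa(\nuu,z)$. The hardest part will be showing that $\Phi_\Aa$ is globally entire in $\nuu$, not merely meromorphic with a reduced pole set: neighboring sectors share facet data, so their Gamma contributions are partially redundant and one must verify that the spurious poles cancel in the sum over sectors. A cleaner alternative is to iterate the Mellin-Barnes resolution used in the proof of theorem \ref{thm:MB}, splitting off one monomial at a time until all $x_i$-integrals become elementary and yield exactly the claimed Gamma functions, leaving a residual multivariate Mellin-Barnes integral whose holomorphy in $\nuu$ follows from standard absolute-convergence estimates for Barnes integrals on bounded contours; this route bypasses the sector-overlap bookkeeping at the cost of working with a more intricate contour.
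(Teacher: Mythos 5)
First, a point of comparison: the paper does not prove this theorem at all. It is imported from \cite{Berkesch.Forsgard.PassareEulerMellinIntegrals2011} (thm.~2.4, rem.~2.6), and the surrounding text explicitly defers the proofs of this block of statements to that reference and to Nilsson--Passare and Schultka. Your overall strategy --- decompose $\mathbb{R}^n_{+}$ into sectors adapted to the Newton polytope, factor out the locally dominant part of $G_z$, and read off one $\Gamma$ factor per facet --- is in the spirit of the toric/sector-decomposition proofs in those references, so the plan is not wrong in outline.

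The genuine gap is in how you index the sectors. The regions of $\mathbb{R}^n_{>0}$ on which particular monomials of $G_z$ dominate are the maximal cones of the \emph{normal fan} of $\Delta_G$ (one cone per vertex of $\Delta_G$, one ray per facet), possibly refined to a smooth fan; they are not in bijection with the maximal simplices of a triangulation $T$ of $\Delta_G$. A maximal simplex $\sigma\in T$ has $n+1$ vertices, and there is no open sector of $x$-space on which those $n+1$ monomials are simultaneously ``dominant'' in the way you need: if several monomials are mutually comparable you cannot write $G_z=c_\sigma(z)\,x^{\alpha_\sigma}\bigl(1+R_\sigma\bigr)$ with a single monomial and $R_\sigma$ bounded away from $-1$, so the expansion of $(1+R_\sigma)^{-\nu_0}$ does not get off the ground. (Triangulations of $\Delta_G$ govern convergence domains of $\Gamma$-series in the \emph{coefficient} space $z$, as in section \ref{sec:GKZ}; that is a different decomposition of a different space.) Relatedly, the claim that each sector reduces to exact Beta-type integrals is too strong: after the unimodular monomial substitution on a maximal cone of the (refined) normal fan one obtains local integrals of the form $\int_{[0,1]^n}t^{s-1}h(t)\,\mathrm{d}t$ with $h$ smooth and nonvanishing and with $s_j=b_j\nu_0-m_j^T\nu$ running over the facets incident to that vertex; this is not a ratio of $\Gamma$ functions, but its continuation (Taylor-expand $h$ at $t=0$, or integrate by parts repeatedly) has at most simple poles on $s_j\in\mathbb{Z}_{\leq 0}$, which is all the theorem requires. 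With the sectors indexed correctly, the ``spurious pole cancellation'' you flag as the hardest part never arises: each sector's polar locus is already contained in that of $\prod_j\Gamma(b_j\nu_0-m_j^T\nu)$ with at most matching order, so $\Phi_\Aa:=J_\Aa/\prod_j\Gamma$ is entire by the removable singularity theorem, with no inter-sector cancellation needed. Your Mellin--Barnes alternative has the same status as the rest of the sketch: plausible and consistent with theorem \ref{thm:MB}, but the contour bookkeeping as $\nuu$ crosses the pole hyperplanes is exactly where the content lies, so as written it is an outline rather than a proof.
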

 
    \begin{example}
      Consider the example from above which correspond to figure \ref{fig:bubble1}. For the relative interior of the Newton polytope one obtains from the facet representation, the region of convergence (with $\Re\nu_0>0$)
      \begin{align*}
        & \qquad \Re(\nu_0-\nu_2) > 0 & \Re(-\nu_0+\nu_1+\nu_2)>0 \\
        & \qquad \Re(\nu_2) > 0 & \Re(2\nu_0-\nu_1-\nu_2) > 0
      \end{align*}
      which enables us to separate the poles of the Feynman integral in the $\Gamma$ functions
      \begin{align*}
        I_\Aa (\nuu,z) = \Phi_\Aa (\nuu,z) \frac{\Gamma(-\nu_0+\nu_1+\nu_2) \Gamma(\nu_0-\nu_2)}{\Gamma(\nu_1)}
      \end{align*}
      with an entire function $\Phi_\Aa(\nuu,z)$.
    \end{example}

    This result is remarkable in many different ways. Firstly, it guarantees that the Feynman integral can be meromorphically continued to the whole complex plane, which confirms dimensional and analytical regularization. Secondly, equation (\ref{eq:mero}) gives an easy method to calculate the (possible) poles of the Feynman integral. And thirdly, in the $\epsilon$ expansion one can focus only on a Taylor expansion of $\Phi_\Aa$ instead of a Laurent expansion of $J_\Aa$. Thus, one can determine the coefficients by differentiating, which makes the procedure much easier. \spacepar
  
    These theorems do not rely on any special properties of the Symanzik polynomials. In fact, Feynman integrals are just a subset of Euler-Mellin integrals. The following lemma is a simple implication from the properties of Symanzik polynomials to be at most quadratic, and to be homogeneous of degree $L$ and $L+1$, respectively.
    \begin{lemma} Let $\Aa = \begin{pmatrix}1\\ \mathrm A \end{pmatrix}$ be a point configuration coming from a Feynman graph.
      \begin{enumerate}[a)]
        \item The entries of the matrix $\Aa$ are restricted to $\Aa\in \{0;1;2\}^{(n+1)\times N}$. Every column of $\Aa$ contains at most one entry equals $2$. For massless Feynman integrals the points are even in $\Aa\in\{0;1\}^{(n+1)\times N}$.
        \item All points in $\mathrm A$ are arranged on two parallel hyperplanes in $\mathbb R^{n}$. The hyperplanes have the normal vector $(1,1,\ldots,1)$ and a distance of $1$ between them. This means that the Newton polytopes $\Delta_G$ arising in Feynman integrals are compressed in one direction.
        \item The Newton polytope $\Delta_G$, corresponding to a Feynman graph, has no interior points.
      \end{enumerate}
    \end{lemma}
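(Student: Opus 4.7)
The plan is to address the three claims in order, using only the explicit definitions of the Symanzik polynomials recalled in this section and their standard homogeneity properties. All three parts follow from a careful inspection of the monomials that appear in $G = U + F$, with part (c) being an immediate corollary of part (b).

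For part (a), I would inspect each summand of $G$ separately. The monomials of $U$ are of the form $\prod_{e_i \notin T} x_i$ for spanning trees $T$, in which every variable appears to the power $0$ or $1$. The $2$-forest contribution $\sum_{F\in\mathcal T_2} s_F \prod_{e_i \notin F} x_i$ has the same structure. The only monomials that can involve the exponent $2$ come from the mass term $U \sum_{e_i \in E} x_i m_i^2$: for each spanning tree $T$ and each edge $e_i$, the factor $x_i$ either combines with an $x_i$ already absent from $\prod_{e_j \notin T} x_j$ (giving exponent $1$ for $x_i$, everything else $0$ or $1$) or doubles with one already present (producing a single squared variable, while all other variables remain with exponent $0$ or $1$). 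This gives the stated entry set $\{0,1,2\}$ and the ``at most one $2$ per column'' property. In the massless case the entire second summand of $F$ is absent, so exponents collapse back to $\{0,1\}$.

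For part (b), the argument rests solely on the homogeneity of the Symanzik polynomials: $U$ is homogeneous of degree $L$ and $F$ is homogeneous of degree $L+1$. Hence every exponent vector $\mathrm a_j$ of $G$ satisfies $\sum_{i=1}^n \mathrm a_{ij} \in \{L, L+1\}$. Equivalently, the points of $\mathrm A$ lie on the two affine hyperplanes $\{\mu \in \mathbb R^n : \sum_i \mu_i = L\}$ and $\{\mu \in \mathbb R^n : \sum_i \mu_i = L+1\}$, which are parallel with common normal $(1,\ldots,1)$ and with lattice distance $1$ between them.

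For part (c), I would deduce the absence of interior points directly from (b). Any point $\mu \in \Delta_G$ is a convex combination of points in $\mathrm A$, so by (b) satisfies $L \le \sum_i \mu_i \le L+1$. A lattice point $\mu \in \mathbb Z^n \cap \Delta_G$ therefore has $\sum_i \mu_i \in \{L, L+1\}$ and hence lies on one of the two bounding hyperplanes of $\Delta_G$; in particular it cannot lie in the relative interior. None of these steps presents a genuine obstacle: the only subtle point is the bookkeeping in part (a) (distinguishing the case $e_i \in T$ from $e_i \notin T$ when multiplying $x_i$ into $U$), and the implicit understanding in part (c) that ``interior points'' refers to interior \emph{lattice} points, since $\Delta_G$ of course has non-empty Euclidean interior whenever it is full-dimensional.
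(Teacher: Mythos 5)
Your proposal is correct and follows exactly the route the paper intends: the paper gives no written proof, stating only that the lemma is ``a simple implication from the properties of Symanzik polynomials to be at most quadratic, and to be homogeneous of degree $L$ and $L+1$,'' and your argument is precisely the fleshed-out version of that remark (multilinearity of the spanning-tree/forest sums plus the single extra factor $x_i$ from the mass term for (a), the degree-$L$/degree-$(L+1)$ homogeneity for (b), and the resulting integrality constraint $\sum_i \mu_i \in \{L, L+1\}$ for lattice points for (c)). Your closing clarification that ``interior points'' must mean interior lattice points is also the correct reading.
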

    
    \vspace{1cm}
    
    Thus, we have reformulated the Feynman integral as an Euler-Mellin integral, which defines a meromorphic function in $\nuu\in\mathbb C^{n+1}$. To consider the Euclidean region we have restricted\footnote{In \cite{Berkesch.Forsgard.PassareEulerMellinIntegrals2011} there is a possibility to remove this limitation by considering the coamoeba of $G$.} the discussion to the right half space $z\in\mathbb C^N$ with $\Re z_j>0$. Further, in corollary \ref{cor:Fox} we had a class of Feynman integrals which provides a simple and analytic solution. These integrals will be helpful to find boundary values for the following partial differential equation systems. The Newton polytopes $\Delta_G$, which define the convergence regions and determine the whole structure of the Feynman graphs, which is necessary to evaluate Feynman amplitudes, are relatively well behaved.

% -------------------------------------------------------------------------------------------------------------------------------

  \section{General Hypergeometric Functions} \label{sec:GKZ}
  
% -------------------------------------------------------------------------------------------------------------------------------
  Since the first hypergeometric function was studied by Euler and Gauss more than 200 years ago, many different generalization of hypergeometric functions were introduced: Pochhammer series ${}_pF_q$, Appell's, Lauricella's and Kampé-de-Fériet functions, to name a few. Those functions can be characterized in three different ways: by series representations, by integral representations and as solutions of partial differential equations. Therefore, there are in principle three different branches to generalize the notion of a hypergeometric function.
  
  The most general series representation goes back to Horn \cite{HornUberHypergeometrischeFunktionen1940} and was later investigated by Ore and Sato (a summarizing discussion can be found in \cite{Gelfand.Graev.RetakhGeneralHypergeometricSystems1992}). A Horn hypergeometric series is a multivariate power series in the variables $x_1,\ldots,x_r\in\mathbb C$
  \begin{align}
    \sum_{k\in\mathbb N_0^r} c(k) x^k
  \end{align}
  where ratios of coefficients $\frac{c(k+e_i)}{c(k)}$ are rational functions in $k_1,\ldots k_r$ and where $e_i$ is the standard basis in the Euclidean space. Thus, the coefficients can be represented mainly by a product of Pochhammer symbols\footnote{By the property of the Pochhammer symbols to satisfy $(a)_n^{-1} = (1-a)_{-n}$ for $n\in\mathbb Z$ one can convert Pochhammer symbols in the denominator to Pochhammer symbols in the numerator and vice versa. The most general form of those terms are given by the Ore-Sato theorem \cite{Gelfand.Graev.RetakhGeneralHypergeometricSystems1992}.}, which are defined as
  \begin{align}
    (a)_{n} := \frac{\Gamma(a+n)}{\Gamma(a)} \point
  \end{align}
  Negative integers in arguments of $\Gamma$ functions can be avoided by considering appropriate limits. In the terms of the series of Horn hypergeometric functions we consider $a\in\mathbb C$ as complex numbers and $n$ as integer combinations of the $k_1,\ldots,k_r$. Among many beautiful properties, derivatives with respect to the parameters of Horn hypergeometric are again Horn hypergeometric functions \cite{Bytev.Kniehl.MochDerivativesHorntypeHypergeometric2017}. For further studies of Horn hypergeometric functions we refer e.g. to \cite{SadykovHypergeometricFunctionsSeveral2002}.

    \subsection{GKZ Hypergeometric Functions} \label{ssec:GKZHyp}
    Since the late 1980s the theory of general hypergeometric functions was reinvented by the characterization of the partial differential equation system by Gelfand, Graev, Kapranov, Zelevinsky and collaborators \cite{GelfandPartGeneralTheory1989, Gelfand.Graev.ZelevinskiHolonomicSystemsEquations1987, Gelfand.Kapranov.ZelevinskiHypergeometricFunctionsToric1991, Gelfand.Kapranov.ZelevinskyGeneralizedEulerIntegrals1990, Gelfand.Graev.RetakhGeneralHypergeometricSystems1992, Gelfand.Kapranov.ZelevinskyDiscriminantsResultantsMultidimensional2008a}. This approach combines the different characterizations of hypergeometric functions and gives a comprehensive method to analyze and describe general hypergeometric functions. In this section we introduce the Gelfand-Kapranov-Zelevinsky (GKZ) hypergeometric system, which is a system of partial differential equations and discuss roughly how to solve this system by power series. The following recapitulation is adapted for the application to Feynman integrals and will miss some generality in order to keep the discussion short. Since the theory of general hypergeometric functions involves many mathematical aspects inter alia algebraic geometry, combinatorics, number theory and Hodge theory, we refer for more detailed studies to \cite{GelfandPartGeneralTheory1989, Gelfand.Graev.RetakhGeneralHypergeometricSystems1992, Aomoto.Kita.KohnoTheoryHypergeometricFunctions2011, Saito.Sturmfels.TakayamaGrobnerDeformationsHypergeometric2000, StienstraGKZHypergeometricStructures2005, CattaniThreeLecturesHypergeometric2006, Vilenkin.KlimykGelFandHypergeometric1995}. \spacepar
    
    Let $\Aa\in\mathbb{Z}^{(n+1)\times N}$ be an integer matrix with $n+1\leq N$, $\rank \Aa = n+1$ and assume that the columns of $\Aa$ span the full integer lattice\footnote{The existence of a subset $\Aa^\prime\subseteq \Aa$, which forms a full dimensional, unimodular simplex $\Conv \Aa^\prime$, is a sufficient condition for this assumption.} $\operatorname{colspan}_{\mathbb Z} \Aa = \mathbb Z^{n+1}$. Further, we consider that $\Aa$ includes a row\footnote{One can generalize this condition to $\operatorname{rowspan}_{\mathbb Q} \Aa = (1,\ldots,1)$. However, for the following this generalization is not necessary.} of the form $(1,1,\ldots,1)$. The ladder condition means that $\Aa$ lies in an $n$-dimensional affine hyperplane of $\mathbb Z^{n+1}$. Without loss of generality we can consider that $\Aa$ is of the form $\Aa = \begin{pmatrix} 1 \\ \mathrm A \end{pmatrix}$ as in equation (\ref{eq:Aa}). Then the Gelfand-Kapranov-Zelevinsky (GKZ) hypergeometric system is defined as the $D$-module, consisting of toric and homogeneous differential operators
    \begin{align}
      H_\Aa (\beta) = \{\partial^u - \partial^v | \Aa u = \Aa v, u,v\in\mathbb N^N\} \cup \langle \Aa \theta + \beta \rangle \label{eq:GKZ}
    \end{align}
    where $\theta = (z_1 \frac{\partial}{\partial z_1},\ldots,z_N \frac{\partial}{\partial z_N})$ is the Euler operator and $\beta\in\mathbb C^{n+1}$. By $\langle D \rangle$ we denote the ideal generated by components of $D$ over $\mathbb C$. Solutions $\Phi(z)$ of these differential systems $H_\Aa (\beta) \Phi(z)=0$ are called $\Aa$-hypergeometric functions.
    
    One of the most important properties of those hypergeometric systems (\ref{eq:GKZ}) is to be holonomic, i.e. the dimension of the solution space is finite and we can give an appropriate basis of the solution space. There are several possibilities to construct those bases. In the following we discuss a solution in terms of multivariate power series, which was the first solution invented in \cite{Gelfand.Graev.ZelevinskiHolonomicSystemsEquations1987}. Furthermore, there are solutions in terms of different sorts of integrals \cite{Matsubara-LaplaceResidueEuler2018}. The following discussion is simplified for the later approach and based on \cite{Fernandez-IrregularHypergeometricDmodules2009} and \cite{StienstraGKZHypergeometricStructures2005}.
  
    Let $\Aa\in\mathbb Z^{(n+1)\times N}$ be an integer matrix with $n+1\leq N$ and full rank as before. Furthermore consider the corresponding lattice $\mathbb {L} := \operatorname{ker}_\mathbb{Z} \Aa = \{ (l_1,\ldots,l_N)\in\mathbb{Z}^N | l_1 a_1 + \ldots + l_N a_N = 0\}$, which has $\rank \mathbb L = N-n-1 =: r$ by the rank-nullity theorem. Then for $\xi\in\mathbb{C}^N$ the formal series
    \begin{align}
      \varphi_\xi(z) = \sum_{l\in\mathbb L} \frac{z^{l+\xi}}{\Gamma(\xi+l+1)}
    \end{align}
    is called \textit{$\mathit\Gamma$-series}. It turns out, that these $\Gamma$-series are formal solutions of the GKZ system (\ref{eq:GKZ}) for an appropriate choice of $\xi$:
    
    \begin{lemma}[$\Gamma$-series as formal solutions of GKZ hypergeometric systems \cite{Gelfand.Graev.ZelevinskiHolonomicSystemsEquations1987, StienstraGKZHypergeometricStructures2005}]
      Let $\mathbb{L}$ be the corresponding lattice to $\Aa$ and $\xi\in\mathbb C^N$ satisfying $\Aa \xi + \beta = 0$. Then the series $\varphi_\xi(z)$ is a formal solution of the GKZ system $H_\Aa (\beta)$
      \begin{align*}
        H_\Aa (\beta) \varphi_\xi(z)=0 \point
      \end{align*}
    \end{lemma}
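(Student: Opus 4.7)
The plan is to check that $\varphi_\xi$ is annihilated by each of the two families of generators of $H_\Aa(\beta)$ separately: the homogeneity operators $\Aa\theta+\beta$ and the toric operators $\partial^u-\partial^v$ with $\Aa u=\Aa v$.

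For the homogeneity part the calculation is purely linear-algebraic: differentiating termwise with $\theta_j z^{l+\xi}=(l_j+\xi_j)z^{l+\xi}$ yields $(\Aa\theta)_i z^{l+\xi}=(\Aa(l+\xi))_i z^{l+\xi}$. Since $l\in\mathbb L=\ker_{\mathbb Z}\Aa$ we have $\Aa l=0$, and the standing assumption $\Aa\xi+\beta=0$ then gives $(\Aa\theta+\beta)_i z^{l+\xi}=0$ for each $l\in\mathbb L$. Summing over $\mathbb L$ yields $(\Aa\theta+\beta)\varphi_\xi=0$.

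For the toric part I would use the multi-index identity
\begin{equation*}
\partial^u z^{l+\xi} \;=\; \frac{\Gamma(\xi+l+1)}{\Gamma(\xi+l+1-u)}\, z^{l+\xi-u},
\end{equation*}
which cancels exactly against the $\Gamma(\xi+l+1)$ appearing in the denominator of each term of $\varphi_\xi$, so that
\begin{equation*}
\partial^u\varphi_\xi(z)\;=\;\sum_{l\in\mathbb L}\frac{z^{l+\xi-u}}{\Gamma(\xi+l+1-u)}.
\end{equation*}
Because $\Aa u=\Aa v$ forces $v-u\in\mathbb L$, the shift $l\mapsto l':=l+v-u$ is a bijection of $\mathbb L$ onto itself, and it converts the summand above into exactly the summand defining $\partial^v\varphi_\xi$, since $l+\xi-u=l'+\xi-v$ and $\xi+l+1-u=\xi+l'+1-v$. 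Hence $\partial^u\varphi_\xi=\partial^v\varphi_\xi$.

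The computations above are short; the main obstacle is not the algebra itself but the treatment of those lattice points $l\in\mathbb L$ at which some component $\xi_i+l_i+1$ is a non-positive integer. There $1/\Gamma(\xi+l+1)$ vanishes, the corresponding monomial is formally absent from $\varphi_\xi$, and one must be sure that the reindexing in the toric step matches ``missing'' terms on both sides. The cleanest way to dispose of this is to argue first for generic (non-resonant) $\xi$ satisfying $\Aa\xi+\beta=0$ and then extend by continuity; alternatively one replaces the $\Gamma$-ratio by the polynomial $\prod_i(\xi_i+l_i)(\xi_i+l_i-1)\cdots(\xi_i+l_i-u_i+1)$, which makes the cancellation manifest even at singular values of $\xi$ and keeps the identification $\partial^u\varphi_\xi=\partial^v\varphi_\xi$ valid at the level of formal power series.
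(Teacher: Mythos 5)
Your proposal is correct and follows essentially the same route as the paper: the homogeneity part is the identical termwise Euler-operator computation, and your reindexing $l\mapsto l+v-u$ of the lattice $\mathbb L$ is exactly the paper's observation that $\partial^u\varphi_\xi=\varphi_{\xi-u}=\varphi_{\xi-v}=\partial^v\varphi_\xi$ because adding an element of $\mathbb L$ to $\xi$ leaves the series unchanged. Your closing remark on terms where $1/\Gamma(\xi+l+1)$ vanishes is a more careful treatment of what the paper dismisses with the parenthetical ``with an appropriate limit'', and the falling-factorial reformulation is a clean way to make that rigorous.
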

    \begin{proof}
      For $u\in\mathbb N^N$ and $r\in\mathbb C^N$ it is $\left(\frac{\partial}{\partial z}\right)^u z^r = \frac{\Gamma(r+1)}{\Gamma(r-u+1)} z^{r-u}$ (with an appropriate limit, respectively). Furthermore one can add an element of $\mathbb L$ to $\xi$, without changing the $\Gamma$-series. Since $u-v\in\mathbb L$ it is
      \begin{align}
        \partial^u \varphi_{\xi}(z) &= \sum_{l\in\mathbb L} \frac{z^{l+\xi-u}}{\Gamma(\xi+l-u +1)} = \varphi_{\xi-u}(z) = \varphi_{\xi-v}(z) = \partial^v\varphi_{\xi}(z)
      \end{align}
      which shows that the $\Gamma$-series satisfies the toric equations. For the homogeneous equations one considers
      \begin{align}
        \sum_{j=1}^N a_j z_j \frac{\partial}{\partial z_j} \varphi_\xi(z) &= \sum_{l\in\mathbb L} \left(\sum_{j=1}^N a_j (\xi_j+l_j)\right) \frac{z^{l+\xi}}{\Gamma(\xi+l+1)} \nonumber \\
        &= \sum_{j=1}^N a_j \xi_j \sum_{l\in\mathbb L}  \frac{z^{l+\xi}}{\Gamma(\xi+l+1)} = - \beta \varphi_\xi(z) \point
      \end{align}
    \end{proof}
   
    The restriction $\Aa \xi + \beta=0$ allows in general many choices of $\xi$. Let $\sigma \subseteq \{1,\ldots,N\}$ be an index set with cardinality $n+1$, such that the matrix $\Aa$ restricted to columns of that index set $\sigma$ is invertible, $\det \Aas \neq 0$. Due to the assumption $\rank \Aa=n+1$ those index sets always exist. Denote by $\bar\sigma = \{1,\ldots,N\} \setminus \sigma$ the complement of $\sigma$. If one sets $\xi_\sigma = -\Aas^{-1} (\beta + \Aabs k)$ and $\xi_{\bar\sigma} = k$ the condition $\Aa \xi + \beta = 0$ is satisfied for any $k\in\mathbb{C}^{r}$.
    
    On the other hand we can split the lattice $\mathbb L = \{ l\in\mathbb Z^N | \Aa l =0\}$ in the same way $\Aas l_\sigma + \Aabs l_{\bar\sigma}=0 $ and obtain a series only over $l_{\bar\sigma}$
    \begin{align}
      \varphi_\xi (z) &= \sum_{\substack{l_{\bar\sigma}\in\mathbb Z^r \\ \textrm{s.t. } \Aas^{-1}\Aabs l_{\bar\sigma}\in\mathbb Z^{n+1} }} \frac{z_\sigma^{-\Aas^{-1} (\beta+\Aabs k + \Aabs l_{\bar\sigma}) } z_{\bar\sigma}^{k+l_{\bar\sigma}}}{\Gamma(-\Aas^{-1}(\beta+\Aabs k +\Aabs l_{\bar\sigma})+1) \Gamma(k+l_{\bar\sigma}+1)} \point
    \end{align}
    In order to simplify the series one can choose\footnote{Also the choice $k\in\mathbb Z^r$ would be possible, but it does not change the series, see \cite[lemma 3.2.]{Fernandez-IrregularHypergeometricDmodules2009}} $k\in\mathbb N_0^r$, since terms with $(k+l_{\bar\sigma})_i\in\mathbb{Z}_{< 0}$ will vanish. The $\Gamma$-series depends now on $k$ and $\sigma$
    \begin{align}
      \varphi_{\sigma,k}(z) =z_\sigma^{-\Aas^{-1}\beta} \sum_{\lambda\in\Lambda_k} \frac{z_\sigma^{-\Aas^{-1}\Aabs\lambda} z_{\bar\sigma}^\lambda}{\lambda! \Gamma(-\Aas^{-1}(\beta+\Aabs \lambda)+1)} \label{eq:GammaPowerSeriesVarphi}
    \end{align}
    where $\Lambda_k=\{ k + l_{\bar\sigma} \in \mathbb{N}_0^r | \Aabs l_{\bar\sigma} \in \mathbb Z \Aas \}\subseteq \mathbb{N}_0^r$ for any $k\in\mathbb N_0^r$. Therefore, the $\Gamma$-series is turned into a power series.
   
    \begin{remark}
      In the unimodular case $|\det \Aas|=1$, the coefficient matrix $\Aas^{-1}\Aabs\in\mathbb Z^{(n+1)\times r}$ is an integer matrix and therefore it is $\Lambda_k = \mathbb N^r_0$. Furthermore, the set $\{\Lambda_k | k\in\mathbb N^r_0 \}$ is a partition of $\mathbb N^r_0$ with cardinality $|\det \Aas|$ \cite{Fernandez-IrregularHypergeometricDmodules2009}.      
    \end{remark}
   
    In order to show that $\Gamma$-series are actual solutions of the GKZ system and not only formal ones, one has to prove that $\Gamma$-series converge for some $z\in\mathbb C^N$. By an application of the Stirling approximation it can be shown, that the $\Gamma$-series always converge absolutely for sufficiently small values of the variables $x_j := \frac{(z_{\bar\sigma})_j}{\prod_i (z_\sigma)_i^{(\Aas^{-1}\Aabs)_{ij}}}$. A proof of the absolute convergence of $\Gamma$-series can be found in lemma \ref{ssec:app-convergence}.
    
    Another issue is also that the $\Gamma$-series can be identical to zero, which is also inconvenient in order to construct a solution space. The $\Gamma$-series is zero for all $z\in\mathbb C^N$, if and only if for all $\lambda\in\Lambda_k$ the expression $\Aas^{-1} (\beta + \Aabs\lambda)$ contains at least one positive integer entry. To avoid these cases one considers generic $\beta\in\mathbb C^{n+1}$:
  
    \begin{defn}[Very genericity]
      If no component of $\Aas^{-1} (\beta+\Aabs\lambda)$ is a strictly positive integer for all $\lambda\in\mathbb{N}_{0}^r$ one says that $\beta$ is very generic with respect to $\sigma$. In the unimodular case this is equivalent to claim, that for the components $i$ which satisfy $(\Aas^{-1}\Aabs)_{ij}\geq 0$ for all $j$, it is $(\Aas^{-1} \beta)_i\notin \mathbb Z_{>0}$.
    \end{defn}
  
    Thus, typically non generic cases arise for even integer dimensions, which should not be surprising, since for these dimensions the Feynman integrals mostly diverge, which is the reason why we exclude this values already.
  
    To normalize the first term of the power series to $1$, we will deal in the following with a slightly different version of the $\Gamma$-series
    \begin{align}
      \phi_{\sigma,k} (z) := \Gamma(-\Aas^{-1}\beta+1) \varphi_{\sigma,k}(z) = z_\sigma^{-\Aas^{-1}\beta} \sum_{\lambda\in\Lambda_k} \frac{z_\sigma^{-\Aas^{-1}\Aabs\lambda} z_{\bar\sigma}^\lambda}{\lambda! (1 -\Aas^{-1}\beta)_{-\Aas^{-1}\Aabs \lambda}}
    \end{align}
    which is well-defined in the case $(\Aas^{-1}\beta)_i \notin \mathbb Z_{>0}$. Here $(1 -\Aas^{-1}\beta)_{-\Aas^{-1}\Aabs \lambda}$ denotes the (multivariate) Pochhammer symbol $(a)_n := \prod_j \frac{\Gamma(a_j+n_j)}{\Gamma(a_j)}$.
    
    \begin{remark}
      In the unimodular case $|\det \Aas |= 1 $ one can rewrite the $\Gamma$-series
      \begin{align}
        \phi_{\sigma} (z) = z_\sigma^{-\Aas^{-1}\beta} \sum_{\lambda\in\mathbb N_0^r} \frac{(\Aas^{-1}\beta)_{\Aas^{-1}\Aabs\lambda}}{\lambda!} \frac{z_{\bar\sigma}^\lambda}{(-z_\sigma)^{\Aas^{-1}\Aabs\lambda} } 
      \end{align}
      by Pochhammer identities.
    \end{remark}
 
    As mentioned above the holonomic rank is finite. For very generic $\beta$ one can determine the holonomic rank by a polytope corresponding to $\Aa=\begin{pmatrix} 1 \\ \mathrm A \end{pmatrix}$.
    
    \begin{theorem}[Holonomic rank of GKZ systems \cite{Gelfand.Kapranov.ZelevinskyGeneralizedEulerIntegrals1990, Gelfand.Kapranov.ZelevinskiHypergeometricFunctionsToric1991, CattaniThreeLecturesHypergeometric2006}] \label{thm:holrank}
      Consider a GKZ system $H_\Aa (\beta)$ with arbitrary $\Aa\in\mathbb Z^{(n+1)\times N}$ and very generic $\beta\in\mathbb C^{n+1}$. Let $\Conv (\mathrm A)$ be the corresponding convex polytope and denote by $\vol_0$ the normalized Euclidean volume, such that the standard simplex has a volume equal to $1$. Then the holonomic rank of the GKZ system is equal to the volume of the polytope $\Conv (\mathrm A)$
      \begin{align}
        \rank H_\Aa (\beta) = \vol_0 (\Conv (\mathrm A)) \point
      \end{align}
    \end{theorem}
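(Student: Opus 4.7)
The plan is to establish the equality in two directions: a lower bound by explicitly constructing $\vol_0(\Conv(\mathrm A))$ linearly independent $\Gamma$-series solutions, and an upper bound via the $D$-module structure of the GKZ system.

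For the lower bound, I would first invoke the existence of a regular triangulation $T(\omega)$ of $\Conv(\mathrm A)$, which is guaranteed by the fact recalled in section \ref{ssec:interlude}. For each top-dimensional simplex $\sigma \in T(\omega)$, the construction of section \ref{ssec:GKZHyp} produces $\Gamma$-series $\phi_{\sigma,k}(z)$ indexed by representatives $k$ of the partition $\{\Lambda_k\}_k$ of $\mathbb N_0^r$, whose cardinality equals $|\det \Aas|$. Summing the total number of such series over all maximal simplices then gives
\begin{align}
\sum_{\sigma \in T(\omega)} |\det \Aas| \;=\; \vol_0(\Conv(\mathrm A)),
\end{align}
since the normalized volume of a full-dimensional simplex $\Conv(\Aas)$ is $|\det \Aas|$ and a triangulation partitions the polytope up to a lower-dimensional overlap.

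The key step is to show that these series are (i) actual solutions of $H_\Aa(\beta)$, (ii) convergent on a non-empty open domain, and (iii) linearly independent. Part (i) has already been checked in the $\Gamma$-series lemma preceding the theorem. For (ii), I would exploit the height vector $\omega$: each simplex $\sigma \in T(\omega)$ singles out a polyhedral region in the variables $x_j = (z_{\bar\sigma})_j / \prod_i (z_\sigma)_i^{(\Aas^{-1}\Aabs)_{ij}}$ inside which $\phi_{\sigma,k}$ converges absolutely, a Stirling-type estimate referred to in the appendix convergence lemma. For (iii), linear independence follows from comparing leading monomials: $\phi_{\sigma,k}(z)$ begins with a term of exponent $-\Aas^{-1}\beta$ in $z_\sigma$ and $k$ in $z_{\bar\sigma}$, and under the very-genericity assumption on $\beta$ these leading exponents lie in pairwise distinct cosets of the lattice $\mathbb L$ across distinct pairs $(\sigma,k)$; a comparison of lowest-order terms of any putative relation then forces all coefficients to vanish.

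For the upper bound $\rank H_\Aa(\beta) \le \vol_0(\Conv(\mathrm A))$, pure power-series manipulations no longer suffice, and one must pass to $D$-module theory. The standard route is through the characteristic variety of $H_\Aa(\beta)$: its principal symbol ideal defines the affine toric variety associated with the configuration $\Aa$, whose degree equals $\vol_0(\Conv(\mathrm A))$. General $D$-module theory, namely the bound of the holonomic rank by the degree of the characteristic cycle, then supplies the desired inequality and matches the lower bound. This is the main technical obstacle of the proof, and since the required algebraic-geometric machinery is standard and treated in the references cited in the theorem statement, I would appeal to it rather than reprove it here; I would additionally emphasise that very genericity of $\beta$ is essential for equality, since at exceptional parameter values the rank can strictly exceed the volume.
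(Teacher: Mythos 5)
The paper does not actually prove this theorem: it is quoted as a known result with citations to Gelfand--Kapranov--Zelevinsky and Cattani, and the text immediately moves on to using it. So there is no in-paper proof to match your argument against; what you have written is a reconstruction of the standard proof from the cited literature, and in outline it is the right one. Your lower bound is sound and matches the ingredients the paper does develop: a regular triangulation exists, each maximal simplex $\sigma$ contributes $|\det \Aas|$ series $\phi_{\sigma,k}$ indexed by the partition $\{\Lambda_k\}$, the counting identity $\sum_{\sigma\in T}|\det\Aas| = \vol_0(\Conv(\mathrm A))$ is correct, convergence on a common domain is exactly where regularity of the triangulation (the height vector $\omega$) is needed, and linear independence via disjoint supports in distinct cosets of $\mathbb L$ is the standard argument — note that for two series attached to the same $\sigma$ the cosets are distinct precisely by the definition of the partition $\{\Lambda_k\}$, while for distinct simplices very genericity of $\beta$ is what forces $\Aas^{-1}\beta - \Aa_{\sigma'}^{-1}\beta$ out of the lattice.

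The soft spot is your upper bound. The principal symbols of the toric operators and of the Euler operators do not involve $\beta$, so the ``fake'' characteristic variety they cut out is $\beta$-independent; a bound on the rank coming purely from its degree would therefore hold for \emph{all} $\beta$, contradicting the fact (which you yourself cite) that the rank can strictly exceed $\vol_0(\Conv(\mathrm A))$ at exceptional parameters. The multiplicity of the zero section in the actual characteristic cycle \emph{is} the rank, so invoking it as a bound is close to circular. The genericity of $\beta$ must enter the upper-bound argument itself — in Adolphson's proof through an explicit spanning set of the solution space under a nonresonance hypothesis, and in the Saito--Sturmfels--Takayama approach through a Gr\"obner deformation showing that for generic $\beta$ the initial ideal of $H_\Aa(\beta)$ is the ``fake'' one and has the expected degree. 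Deferring this to the references is defensible (the paper defers the entire theorem), but as stated your mechanism for the inequality $\rank H_\Aa(\beta)\le \vol_0(\Conv(\mathrm A))$ would not go through literally.
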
 
    
    This means that one needs $\vol_0 (\Conv (\mathrm A))$ linearly independent solutions to construct the solution space. The regular triangulations of the polytope $\Conv (\mathrm A)$ provide a construction of linearly independent $\Gamma$-series. Definitions and basic properties of the regular triangulations can be found in the mathematical interlude of section \ref{ssec:interlude} and in the appendix.
    
    In the following we will only discuss the case of unimodular triangulation, since almost all Feynman integrals admit an unimodular triangulation. We will motivate this restriction in section \ref{ssec:nonunimodular} in more detail. Nevertheless, there is a simple generalization to the non-unimodular case, which can be found e.g. in \cite{Fernandez-IrregularHypergeometricDmodules2009}, \cite{Matsubara-LaplaceResidueEuler2018}.
    
    \begin{theorem}[Solution Space of GKZ \cite{Gelfand.Kapranov.ZelevinskiHypergeometricFunctionsToric1991, Fernandez-IrregularHypergeometricDmodules2009}] \label{thm:solutionspace}
      Let $T$ be a regular unimodular triangulation and let $\beta$ be very generic with respect to every $\sigma\in T$. Then the $\Gamma$-series $\{\varphi_\sigma\}_{\sigma\in T}$ form a basis of the solution space of the hypergeometric GKZ system $H_\Aa (\beta)$. Furthermore, all these $\Gamma$-series have a common region of convergence.
    \end{theorem}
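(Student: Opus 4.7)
The plan is to establish three facts: that the number of $\Gamma$-series in $\{\varphi_\sigma\}_{\sigma\in T}$ matches the holonomic rank, that these series are linearly independent holomorphic solutions, and that they admit a common region of convergence. Each relies on a different piece of the machinery assembled earlier.

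First, a counting argument. Because $T$ is unimodular, each $\sigma\in T$ satisfies $|\det\Aas|=1$, so by additivity of the normalized volume over a triangulation, $|T| = \vol_0(\Conv(\mathrm{A}))$. By Theorem \ref{thm:holrank} this equals $\rank H_\Aa(\beta)$. Hence, once I produce $|T|$ convergent, linearly independent solutions of $H_\Aa(\beta)$, I will have exhausted the solution space.

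Second, each $\varphi_\sigma$ is a formal solution by the preceding lemma, and the convergence lemma referenced in Section \ref{ssec:app-convergence} promotes it to a genuine holomorphic solution on an open subset of $(\mathbb{C}^*)^N$. For linear independence I would inspect leading monomials: by the unimodular formula for $\phi_\sigma$, the leading term is proportional to $z_\sigma^{-\Aas^{-1}\beta}$, with every $z_j$ for $j\in\bar\sigma$ appearing to the power $0$. For a distinct simplex $\sigma'\in T$, pick $j\in\sigma'\setminus\sigma$; then $z_j$ appears in every monomial of $\phi_{\sigma'}$ with a generically non-integer exponent (this is precisely where very genericity of $\beta$ enters). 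A nontrivial $\mathbb{C}$-linear combination would force cancellations that are incompatible with these distinct monomial supports, yielding linear independence.

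Third, and most delicately, comes the common region of convergence. The regularity of $T$ furnishes a single height vector $\omega\in\mathbb{R}^N$ together with vectors $r_\sigma\in\mathbb{R}^{n+1}$ satisfying $r_\sigma\cdot a_j=\omega_j$ for $j\in\sigma$ and $r_\sigma\cdot a_j<\omega_j$ for $j\notin\sigma$. The plan is to rescale $z_j\mapsto t^{\omega_j} z_j$ and verify that, for $t>0$ sufficiently small, the convergence conditions for each $\phi_\sigma$ (expressed in terms of small monomials in $z$ indexed by the lattice $\mathbb{L}=\operatorname{ker}_{\mathbb{Z}}\Aa$) are met simultaneously. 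The strict inequalities $\omega_j-r_\sigma\cdot a_j>0$ for $j\notin\sigma$ are exactly what forces each relevant monomial to pick up a strictly positive power of $t$ under the rescaling, so every series in the family converges in the same punctured neighborhood of $t=0$. The main obstacle I anticipate here is the bookkeeping: one must combine these combinatorial inequalities with the Stirling-type bounds that underpin the convergence of $\Gamma$-series over $\mathbb{L}$, and check that the dominant directions of lattice growth are controlled uniformly in $\sigma$ by the facet structure that $\omega$ cuts out.
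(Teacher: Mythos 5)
The paper does not actually prove this theorem; it is quoted from the cited references without proof. Your outline reconstructs what is essentially the standard argument of those references, so the comparison is between your sketch and the literature rather than with anything in the text. All three of your ingredients are the right ones: volume additivity gives $|T|=\sum_{\sigma\in T}|\det\Aas|=\vol_0(\Conv(\mathrm A))=\rank H_\Aa(\beta)$, reducing the problem to exhibiting $|T|$ independent convergent solutions; the formal-solution lemma plus the convergence theorem of the appendix upgrade each $\varphi_\sigma$ to an honest holomorphic solution; and the common-convergence step via the rescaling $z_j\mapsto t^{\omega_j}z_j$ is exactly the GKZ mechanism --- the power of $t$ picked up by $x_j^{(\sigma)}=(z_{\bar\sigma})_j\prod_i(z_\sigma)_i^{-(\Aas^{-1}\Aabs)_{ij}}$ is $\omega_j-r_\sigma\cdot a_j>0$ for $j\notin\sigma$ by regularity, so all $|T|$ convergence conditions are met simultaneously for small $t>0$. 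The bookkeeping you worry about is harmless, because the Stirling bounds of the appendix give a fixed polydisc $|x_j^{(\sigma)}|<\epsilon_\sigma$ for each $\sigma$ and the rescaling lands in the intersection of these finitely many domains.

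The one step that needs tightening is linear independence. Your disjoint-support argument requires that for $j\in\sigma'\setminus\sigma$ the exponents of $z_j$ occurring in $\phi_{\sigma'}$, namely $-(\Aa_{\sigma'}^{-1}\beta)_j$ shifted by integers (integrality of the shift being exactly unimodularity), never collide with the non-negative integer exponents $\lambda_j$ occurring in $\phi_\sigma$; that forces $(\Aa_{\sigma'}^{-1}\beta)_j\notin\mathbb Z$. But the paper's definition of ``very generic'' only forbids components of $\Aas^{-1}(\beta+\Aabs\lambda)$ from being \emph{strictly positive} integers --- it is designed to keep the series from vanishing identically and does not exclude $(\Aa_{\sigma'}^{-1}\beta)_j\in\mathbb Z_{\leq 0}$, in which case two series could share monomials and your cancellation argument breaks down. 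So the hypothesis you invoke is, as literally defined, too weak for the support-disjointness you need. You should either strengthen the genericity assumption to the usual non-resonance condition ($(\Aas^{-1}\beta)_i\notin\mathbb Z$ for all $i$ and all $\sigma\in T$), or observe that the offending $\beta$ form a locally finite union of affine hyperplanes and extend by continuity in $\beta$. This is a looseness inherited from the theorem's statement rather than a flaw in your strategy, but a self-contained proof must address it.
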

    
    \vspace{1cm}
    
    To conclude, we defined holonomic systems of partial differential equations, which can be characterized by a matrix $\Aa\in\mathbb Z^{(n+1)\times N}$ and a vector $\beta\in\mathbb C^{n+1}$. Furthermore, for generic values of $\beta$ we are able to construct the whole solution space in terms of power series by regular triangulations of the polytope $\Conv(\mathrm A)$.

% -------------------------------------------------------------------------------------------------------------------------------   
    
    \section{Feynman Integrals as Hypergeometric Functions} \label{sec:FeynHyp}

% -------------------------------------------------------------------------------------------------------------------------------
    
    It is one of the first observations in the calculation of simple Feynman amplitudes, that Feynman integrals evaluate mostly to hypergeometric functions. This observation was leading Regge to the conjecture that Feynman integrals are always hypergeometric functions and he based his conjecture on the partial differential equations which are satisfied by the Feynman integral \cite{ReggeAlgebraicTopologyMethods1967}. 
    
    Typically, those hypergeometric functions also appear in the often used Mellin-Barnes approach. This is a consequence of Mellin-Barnes representations with integrands consisting in a product of $\Gamma$ functions, which can be identified by the hypergeometric Fox $H$-functions \cite{Inayat-HusNewPropertiesHypergeometric1987, Inayat-HusNewPropertiesHypergeometric1987a, Buschman.SrivastavaFunctionAssociatedCertain1990} or which can be evaluated to some series-based hypergeometric functions, like the Appell or Lauricella functions by application of the residue theorem \cite{Fleischer.Jegerlehne.TarasovNewHypergeometricRepresentation2003, Boos.DavydychevMethodEvaluatingMassive1991}. 
    But except of some special cases, like the one-loop integrals \cite{Fleischer.Jegerlehne.TarasovNewHypergeometricRepresentation2003}, this correspondence is more or less unproved, which is also due to the fact that multivariate Mellin-Barnes integrals can be highly non trivial \cite{Paris.KaminskiAsymptoticsMellinBarnesIntegrals2001}.
    
    A new opportunity to examine the correspondence between hypergeometric functions and Feynman integrals is the Gelfand-Kapranov-Zelevinsky approach. It was already stated by Gelfand himself, that ``practically all integrals which arise in quantum field theory'' \cite{Gelfand.Kapranov.ZelevinskyGeneralizedEulerIntegrals1990} can be treated with this approach. Recently, the connection between Feynman integrals and $\Aa$-hypergeometric function was concertized in \cite{NasrollahpPeriodsFeynmanDiagrams2016} and, independently of the present paper, in \cite{delaCruzFeynmanIntegralsAhypergeometric2019}.
    
    Based on the Lee-Pomeransky representation of Feynman integrals it is a standard application to show that generalized Feynman integrals are $\Aa$-hypergeometric, analogue to the examples in \cite{StienstraGKZHypergeometricStructures2005}.
   
    \begin{theorem}[Feynman integral as $\Aa$-hypergeometric function\protect\footnote{In \cite{NasrollahpPeriodsFeynmanDiagrams2016} it was proven, that the Feynman integrals satisfy a system of differential equations which is isomorphic to the GKZ system. Recently, the theorem was independently proven in \cite{delaCruzFeynmanIntegralsAhypergeometric2019} in a similar way.}] \label{thm:FeynmanAa}
      A generalized Feynman integral $J_\Aa (\nuu,z)$ satisfies the hypergeometric GKZ system in the variables $z\in\mathbb C^N$ 
      \begin{align}
        H_\Aa (\nuu) J_\Aa(\nuu,z)=0 \point      
      \end{align}
      Thus the generalized Feynman integral is an $\Aa$-hypergeometric function.
    \end{theorem}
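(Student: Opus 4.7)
The plan is to check directly from the Lee-Pomeransky integral representation (\ref{eq:FI-J}) that $J_\Aa(\nuu,z)$ is annihilated by each operator generating $H_\Aa(\nuu)$. I would first work in the open subset of parameter space where Theorem \ref{thm:FIconvergence} guarantees absolute convergence; there, differentiation under the integral sign and integration by parts are both legitimate. The resulting identities then extend to all $\nuu \in \mathbb C^{n+1}$ via the meromorphic continuation of Theorem \ref{thm:meromorphic} and uniqueness of analytic continuation.

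For the toric operators, the key observation is that $G_z(x) = \sum_j z_j x^{\mathrm{a}_j}$ is linear in each $z_j$, so repeated $z$-differentiation yields
\begin{align*}
\partial^u G_z(x)^{-\nu_0} = (-1)^{|u|}(\nu_0)_{|u|}\, x^{\sum_j u_j \mathrm{a}_j}\, G_z(x)^{-\nu_0-|u|}.
\end{align*}
The assumption $\Aa u = \Aa v$ forces $|u| = |v|$ (from the row of ones in $\Aa$) and $\sum_j u_j \mathrm{a}_j = \sum_j v_j \mathrm{a}_j$ (from the remaining rows); hence the pointwise identity $\partial^u G_z^{-\nu_0} = \partial^v G_z^{-\nu_0}$, and passing the derivative under the $x$-integral gives $(\partial^u - \partial^v)J_\Aa = 0$.

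For the Euler equations I would exploit two algebraic identities for $G_z$. The first, $\sum_j z_j x^{\mathrm{a}_j} = G_z(x)$, combined with $\partial_{z_j} G_z^{-\nu_0} = -\nu_0 x^{\mathrm{a}_j} G_z^{-\nu_0-1}$, immediately yields $\bigl(\sum_j \theta_j + \nu_0\bigr) J_\Aa = 0$, the equation attached to the top row of $\Aa$. The second, $\sum_j a_{ij} z_j x^{\mathrm{a}_j} = x_i \partial_{x_i} G_z$, rewrites $\sum_j a_{ij}\theta_j J_\Aa$ as $\Gamma(\nu_0)\int \d[x]\, x^{\nu-1} x_i\, \partial_{x_i} G_z^{-\nu_0}$. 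One integration by parts in $x_i$, using the boundary contribution $\bigl[x_i^{\nu_i} G_z^{-\nu_0}\bigr]_{x_i=0}^{\infty}$, produces a factor $-\nu_i$ and gives $\bigl(\sum_j a_{ij}\theta_j + \nu_i\bigr)J_\Aa = 0$ for $i = 1,\ldots,n$, covering the remaining rows of $\Aa$.

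The main technical point is the vanishing of these boundary terms. At $x_i = 0$ it requires $\Re \nu_i > 0$, which is part of the standing hypotheses in Definition \ref{def:genericFI}. At $x_i \to \infty$ it requires that $G_z^{-\nu_0}$ decay fast enough in the $x_i$ direction; in the Euclidean region $\Re z_j > 0$ this is precisely controlled by the relative interior condition $\Re(\nu)/\Re(\nu_0) \in \relint \Delta_G$ of Theorem \ref{thm:FIconvergence}. On this nonempty open set of parameters all GKZ relations hold simultaneously, and since $J_\Aa$ together with each $\partial^u J_\Aa$ and $\theta_j J_\Aa$ continues meromorphically in $\nuu$ to all of $\mathbb C^{n+1}$ by Theorem \ref{thm:meromorphic}, the relations extend to the full complex domain by uniqueness of analytic continuation.
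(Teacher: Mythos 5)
Your proposal is correct, and the toric half is essentially identical to the paper's: differentiate under the integral, note that the row of ones in $\Aa$ forces $|u|=|v|$ and the remaining rows force $\sum_j u_j\mathrm{a}_j=\sum_j v_j\mathrm{a}_j$, hence the integrands coincide. (Your prefactor $(-1)^{|u|}(\nu_0)_{|u|}$ is in fact the correct one; the paper's displayed product has an off-by-one in its last factor.) Where you genuinely diverge from the paper is in the Euler equations. You prove them via the algebraic identity $\sum_j a_{ij}z_jx^{\mathrm{a}_j}=x_i\partial_{x_i}G_z$ followed by integration by parts in $x_i$, which obliges you to verify that the boundary terms $\bigl[x_i^{\nu_i}G_z^{-\nu_0}\bigr]_{0}^{\infty}$ vanish --- a real obligation that you correctly tie to $\Re\nu_i>0$ at the origin and to the relative-interior condition of Theorem \ref{thm:FIconvergence} at infinity. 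The paper instead exploits quasi-homogeneity: rescaling $z_j\mapsto s^{\mathrm{a}_{bj}}z_j$ is undone inside the integral by the exact substitution $x_b\mapsto sx_b$, giving $J_\Aa(\nuu,s^{\mathrm{a}_{b1}}z_1,\ldots,s^{\mathrm{a}_{bN}}z_N)=s^{-\nu_b}J_\Aa(\nuu,z)$, and differentiating at $s=1$ yields the Euler equation with no boundary terms to discuss (the first row of $\Aa$ follows from the overall scaling $G_z\mapsto sG_z$). The scaling route is shorter and cleaner here; your IBP route is the one that generalizes when one wants the GKZ relations recognized as integration-by-parts identities among parametric integrands, and your explicit appeal to meromorphic continuation to extend the relations from the convergence domain to all of $\nuu\in\mathbb C^{n+1}$ is a point the paper leaves implicit. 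Both arguments are sound.
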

    \begin{proof}
      Firstly, we show that the generalized Feynman integrals satisfy the toric part $ \{\partial^u - \partial^v | \Aa u = \Aa v, u,v\in\mathbb N^N\} $. Derivatives of the Feynman integral with respect to $z$ result in
      \begin{align}
        \partial^u \int_{\mathbb R^{n}_+} \d[x]x^{\nu-1}G_z(x)^{-\nu_0} = - \nu_0 (-\nu_0-1) \cdots (-\nu_0-|u| -1) \int_{\mathbb{R}^n_+} \d[x] x^{\nu-1} x^{\sum_{i} u_i \mathrm{a}_i} G_z(x)^{-\nu_0- |u|}
      \end{align}
      where $|u|:= \sum_i u_i$. From the row $(1,1,\ldots,1)$ in $\Aa$ it follows immediately that $|u|=|v|$. Therefore, one obtains the same equation for $v$.
      
      Secondly, consider for the homogeneous part $\langle \Aa \theta + \nuu \rangle$ that $J_\Aa(\nuu,s^{\mathrm{a}_{b1}} z_1,\ldots,s^{\mathrm{a}_{bN}} z_N) = \Gamma(\nu_0) \int_{\mathbb R^n_+} \d[x] x^{\nu-1} G_z(x_1,\ldots,s x_b,\ldots,x_n)^{-\nu_0}$. After a substitution $sx_b \rightarrow x_b$ for $s>0$ it is
      \begin{align}
        J_\Aa(\nuu,s^{\mathrm{a}_{b1}} z_1,\ldots,s^{\mathrm{a}_{bN}} z_N) = s^{-\nu_b} J_\Aa (\nuu,z) \point
      \end{align}
      A derivative with respect to $s$ completes the proof for $s=1$.
    \end{proof}
    
    Thus, as suggested already by Gelfand and confirmed in \cite{NasrollahpPeriodsFeynmanDiagrams2016} and \cite{delaCruzFeynmanIntegralsAhypergeometric2019}, every generalized Feynman integral with an Euclidean region satisfies the GKZ hypergeometric system and can be treated within the framework of GKZ. This will allow inter alia a series representation of the Feynman integral.

    \subsection{Hypergeometric Series Representations of Generalized Feynman Integrals} \label{ssec:series}
    
    As stated in theorem \ref{thm:FeynmanAa} generalized Feynman integrals are $\Aa$-hypergeometric functions. Thus, one can directly apply the results of hypergeometric GKZ systems from section \ref{sec:GKZ}. Consider first the case $\nuu\in (\mathbb C \setminus \mathbb Z)^{n+1}$ in order to satisfy the very genericity of $\nuu$ and to be sure that the Feynman integral has no poles. Later we can relax this strict claim.  
    
    We fix a regular, unimodular triangulation $T$ of the Newton polytope $\Delta_G$ of the Lee-Pomeransky polynomial $G=U+F$. Then by theorem \ref{thm:solutionspace}, we can write the generalized Feynman integral as a linear combination of $\Gamma$-series
    \begin{align}
      J_\Aa (\nuu,z) = \sum_{\sigma\in T} C_\sigma (\nuu) \phi_\sigma (\nuu,z) \label{eq:lincomb}
    \end{align}
    where the $z\in\mathbb C^N$ are defined in the positive half-space $\Re z_i \geq 0$ in the region where the $\Gamma$-series converge. Thus, one has to determine the meromorphic functions $C_\sigma (\nuu)$ in order to get a series representation of Feynman integrals. This can be done by comparing equation (\ref{eq:lincomb}) with boundary values of the Feynman integral. As we will see in the following, the Feynman integrals transmit their functions $C_\sigma(\nuu)$ to simpler Feynman integrals. This enables us to reduce Feynman integrals to the case described in corollary \ref{cor:Fox} and derive an analytic expression of the functions $C_\sigma(\nuu)$. 
    
    For this purpose, consider a generalized Feynman integral $J_\Aa $ with a Newton polytope $P = \Delta_G = \Conv (\mathrm A)$, consisting of the vertices $\mathrm a_1,\ldots, \mathrm a_N$, and an unimodular triangulation $T=\{\sigma_1,\ldots,\sigma_r,\eta_1,\ldots,\eta_s\}$ of $P$. Furthermore, let $T^\prime=\{\sigma_1,\ldots,\sigma_r\}$ be a proper subtriangulation of $T$. Denote the vertices of the convex polytope $P^\prime$, which correspond to the subtriangulation $T^\prime$, by $\mathrm  a_1,\ldots, \mathrm a_M$ with $M<N$ and the corresponding Feynman integral\footnote{$J^\prime_{\Aa^\prime}$ is not necessarily a Feynman integral coming from an actual Feynman graph. It is sufficient, that $J^\prime_{\Aa^\prime}$ has the shape of an Euler-Mellin integral described in definition \ref{def:genericFI}. Thus, $G_z^\prime$ can be an arbitrary non-homogeneous polynomial.} by $J^\prime_{\Aa^\prime}$. Thus, in the second Feynman integral some monomials in $G_z$ are missing
    \begin{align}
      J^\prime_{\Aa^\prime} (\nuu, z_1,\ldots, z_M) = \lim_{z_{M+1},\ldots,z_N \rightarrow 0} J_\Aa(\nuu,z_1,\ldots,z_N) \point
    \end{align}
    Applying the results from the previous sections to both Feynman integrals independently one obtains on the one hand
    \begin{align}
      J_\Aa(\nuu,z_1,\ldots,z_N)=\sum_{i=1}^r C_{\sigma_i}(\nuu) \phi_{\sigma_i}(\nuu,z_1,\ldots,z_N) + \sum_{i=1}^s C_{\eta_i}(\nuu) \phi_{\eta_i}(\nuu,z_1,\ldots,z_N)
    \end{align}
    with the $\Gamma$-series
    \begin{align}
      \phi_{\sigma_i}(\nuu,z_1,\ldots,z_N) &= z_{\sigma_i}^{-\Aa_{\sigma_i}\nuu} \sum_{\lambda\in\mathbb{N}_0^{|\bar\sigma_i|}} \frac{z_{\bar\sigma_i}^\lambda z_{\sigma_i}^{-\Aa_{\sigma_i}\Aa_{\bar\sigma_i}\lambda} }{\lambda!(1-\Aa_{\sigma_i}\nuu)_{-\Aa_{\sigma_i}\Aa_{\bar\sigma_i}\lambda}} \qquad i=1,\ldots,r\\
      \phi_{\eta_i}(\nuu,z_1,\ldots,z_N) &= z_{\eta_i}^{-\Aa_{\eta_i}\nuu} \sum_{\lambda\in\mathbb{N}_0^{|\bar\eta_i|}} \frac{z_{\bar\eta_i}^\lambda z_{\eta_i}^{-\Aa_{\eta_i}\Aa_{\bar\eta_i}\lambda} }{\lambda!(1-\Aa_{\eta_i}\nuu)_{-\Aa_{\eta_i}\Aa_{\bar\eta_i}\lambda}} \qquad i=1,\ldots,s
    \end{align}
    and on the other hand
    \begin{align}
      J^\prime_{\Aa^\prime} (\nuu,z_1,\ldots,z_M)=\sum_{i=1}^r C^\prime_{\sigma_i}(\nuu) \phi^\prime_{\sigma_i}(\nuu,z_1,\ldots,z_M) 
    \end{align}
    with 
    \begin{align}
      \phi^\prime_{\sigma_i}(\nuu,z_1,\ldots,z_M) &= z_{\sigma_i}^{-\Aa^\prime_{\sigma_i}\nuu} \sum_{\lambda\in\mathbb{N}_0^{|\bar\sigma_i|}} \frac{z_{\bar\sigma_i}^\lambda z_{\sigma_i}^{-\Aa^\prime_{\sigma_i}\Aa^\prime_{\bar\sigma_i}\lambda} }{\lambda!(1-\Aa^\prime_{\sigma_i}\nuu)_{-\Aa^\prime_{\sigma_i}\Aa^\prime_{\bar\sigma_i}\lambda}} \qquad i=1,\ldots,r \point
    \end{align}
    Due to the construction it is $\mathrm a_{M+1},\ldots, \mathrm a_N \in \bigcup_{i=1}^r \bar\sigma_i$. Therefore, in the limit $z_{M+1},\ldots,z_N\rightarrow 0$ only some $z_{\bar\sigma}$ will be affected and it is simply
    \begin{align}
      \lim_{z_{M+1},\ldots,z_N \rightarrow 0} \phi_{\sigma_i} (\nuu,z_1,\ldots,z_N) = \phi^\prime_{\sigma_i} (\nuu,z_1,\ldots,z_M) \point 
    \end{align}
    
    Apart from that, as a consequence of the homogeneous differential equations $\langle \Aa \theta+\nuu\rangle$ all solutions of a GKZ hypergeometric system have to satisfy the scaling property  $\phi (\nuu,s z) = s^{-\nu_0} \phi (\nuu,z)$. It can be easily seen that $\Gamma$-series satisfy this property. In the power series part of a $\Gamma$-series the variables $z$ only appear as ratios, such that they are scaling invariant (in accordance with lemma \ref{lem:AA1}). However, the monomial $z_\eta^{-\Aa_\eta^{-1} \nuu}$ in front of the power series will give the scaling property since $s^{\sum_{i\in\eta} (-\Aa_\eta^{-1} \nuu)_i} = s^{-\nu_0}$ according to lemma \ref{lem:AA1}.
    
    Since in every $\Gamma$-series $\phi_{\eta_i}$ some of the variables $z_{M+1}\ldots,z_N$ are contained in the monomial, the scaling property will be violated in the limit $z_{M+1}\ldots,z_N \rightarrow0$. Thus, the functions $\lim_{z_{M+1},\ldots z_N \rightarrow 0} \phi_{\eta_i}$ can not be linearly dependent of $\{\phi^\prime_{\sigma_i}\}$. Since one already has $r$ linearly independent solutions $\phi^\prime_{\sigma_i}$ and therewith has a full-dimensional solution space for the Feynman integral $J^\prime_{\Aa^\prime}$, the $\Gamma$-series $\phi_{\eta_i}$ have to vanish
    \begin{align}
      \lim_{z_{M+1},\ldots,z_N \rightarrow 0} \phi_{\eta_i} (\nuu,z_1,\ldots,z_N) = 0 \point
    \end{align}
    
    Thus, the GKZ systems behave naturally, as also mentioned in \cite{Gelfand.Graev.RetakhGeneralHypergeometricSystems1992}: If one deletes a vertex of $\Conv (\mathrm A)$, the $\Gamma$-series which correspond to simplices containing this vertex will vanish. This leads to a very simple connection between subtriangulations of Feynman integrals. Applying the limit $z_{M+1},\ldots, z_N \rightarrow 0$, the meromorphic functions $C_\sigma(\nuu)$ will not be affected and one obtains $C_{\sigma_i} (\nuu) = C^\prime_{\sigma_i} (\nuu)$.
    
    Thus, one can determine the meromorphic functions $C_\sigma (\nuu)$ by considering simpler Feynman integrals which refer to subtriangulations, where by a simpler Feynman integral we mean a Feynman integral where the Lee-Pomeransky polynomial $G$ has less monomials.  In this way one can define ancestors and descendants of Feynman integrals by deleting or adding monomials to the Lee-Pomeransky polynomial $G$. E.g. the massless one-loop bubble graph is a descendant of the one-loop bubble graph with one mass, which itself is a descendant of the full massive one-loop bubble. Those ancestors and descendants do not necessarily correspond to Feynman integrals in the original sense, since one can also consider polynomials $G$ which are not connected to graph polynomials anymore.
    
    As a trivial subtriangulation of an arbitrary triangulation one can choose one of its simplices. In doing so, one can relate the prefactors $C_\sigma (\nuu)$ to the problem where only one simplex is involved. For such problems, one can solve the Feynman integral easily as seen in corollary \ref{cor:Fox}. Therefore, for an unimodular triangulation one can find that the prefactors are simply given by
    \begin{align}
      C_{\sigma}(\nuu) = \Gamma(\Aas^{-1}\nuu)
    \end{align}
    which results in the following theorem:
       
    \begin{theorem}[Series representation of Feynman integrals] \label{thm:FeynSeries}
      Let $T$ be a regular, unimodular triangulation of the Newton polytope $\Delta_G=\Conv (\mathrm A)$ corresponding to a generalized Feynman integral $J_\Aa$. Then the generalized Feynman integral can be written as
      \begin{align}
        J_\Aa(\nuu,z) = \sum_{\sigma\in T} z_\sigma^{-\Aas^{-1}\nuu} \sum_{\lambda\in\mathbb N^{|\bar\sigma|}_0} \frac{\Gamma(\Aas^{-1}\nuu+\Aas^{-1}\Aabs\lambda)}{\lambda!} \frac{z_{\bar\sigma}^\lambda}{(-z_\sigma)^{\Aas^{-1}\Aabs\lambda}} \label{eq:seriesreprFI}
      \end{align}
      where the series have a common region of convergence. This representation holds for generic $\nuu\in\mathbb C^{n+1}$, which means that $\nuu$ has to be chosen such that the Feynman integral has no poles and none of the power series in (\ref{eq:seriesreprFI}) will be identical to zero.
    \end{theorem}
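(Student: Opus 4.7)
The plan is to combine Theorem \ref{thm:FeynmanAa}, which ensures that $J_\Aa(\nuu,z)$ is annihilated by the GKZ system $H_\Aa(\nuu)$, with Theorem \ref{thm:solutionspace}, which says that the $\Gamma$-series $\{\phi_\sigma\}_{\sigma\in T}$ form a basis of the solution space on a common region of convergence (granted the very-genericity of $\nuu$). This immediately gives an expansion
\begin{align*}
J_\Aa(\nuu,z) = \sum_{\sigma\in T} C_\sigma(\nuu)\, \phi_\sigma(\nuu,z),
\end{align*}
reducing the problem to the determination of the meromorphic coefficients $C_\sigma(\nuu)$.

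To pin down $C_\sigma$, I would fix a target simplex $\sigma \in T$ and compare $J_\Aa$ with the descendant integral obtained by sending $z_j \to 0$ for every $j \notin \sigma$. By the scaling argument spelled out in the discussion before the statement, each $\Gamma$-series $\phi_\eta$ with $\eta \ne \sigma$ carries at least one of these variables in the prefactor $z_\eta^{-\Aa_\eta^{-1}\nuu}$ and therefore vanishes in the limit, whereas the surviving term is the $\Gamma$-series attached to the trivial subtriangulation $\{\sigma\}$. Since $C_\sigma(\nuu)$ does not depend on $z$, the passage to the descendant leaves it unchanged.

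The descendant is itself an Euler--Mellin integral whose Lee--Pomeransky polynomial has exactly $n+1$ monomials, so Corollary \ref{cor:Fox} applies and yields
\begin{align*}
J_{\Aas}(\nuu,z_\sigma) = \frac{\Gamma(\Aas^{-1}\nuu)}{|\det \Aas|}\, z_\sigma^{-\Aas^{-1}\nuu} = \Gamma(\Aas^{-1}\nuu)\, z_\sigma^{-\Aas^{-1}\nuu},
\end{align*}
using unimodularity $|\det \Aas|=1$. Matching this against $C_\sigma(\nuu)\, z_\sigma^{-\Aas^{-1}\nuu}$ (the Pochhammer sum collapses to $1$ since $\bar\sigma$ is empty for the descendant) forces $C_\sigma(\nuu) = \Gamma(\Aas^{-1}\nuu)$. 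Substituting this back into the unimodular form of $\phi_\sigma$ recorded in the remark after \eqref{eq:GammaPowerSeriesVarphi} and folding the prefactor into the Pochhammer symbol via $\Gamma(a)(a)_n = \Gamma(a+n)$ applied componentwise produces the factor $\Gamma(\Aas^{-1}\nuu + \Aas^{-1}\Aabs \lambda)$ and delivers the claimed series.

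The main obstacle is bookkeeping rather than conceptual: one must check that the limit $z_j \to 0$ for $j \notin \sigma$ is legitimate termwise inside the convergent $\Gamma$-series (i.e.\ the common region of convergence provided by Theorem \ref{thm:solutionspace} is not squeezed to the empty set when passing to the descendant), that the prefactor $z_\eta^{-\Aa_\eta^{-1}\nuu}$ genuinely vanishes for every $\eta \ne \sigma$, and that the very-genericity hypothesis on $\nuu$ is inherited by each single-simplex subtriangulation so that none of the $\phi_\sigma$ becomes identically zero. Once these technicalities are in place, the identification $C_\sigma = \Gamma(\Aas^{-1}\nuu)$ together with the Pochhammer rewrite yields the theorem.
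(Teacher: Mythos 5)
Your proposal follows the paper's own proof essentially step for step: expand $J_\Aa$ in the $\Gamma$-series basis guaranteed by Theorems \ref{thm:FeynmanAa} and \ref{thm:solutionspace}, pin down $C_\sigma(\nuu)$ by degenerating to the single-simplex descendant (a trivial subtriangulation) where Corollary \ref{cor:Fox} applies with $|\det\Aas|=1$, and fold $\Gamma(\Aas^{-1}\nuu)$ into the Pochhammer symbols. The one point to phrase more carefully is why $\phi_\eta$ for $\eta\neq\sigma$ vanishes in the limit: it is not that the prefactor $z_\eta^{-\Aa_\eta^{-1}\nuu}$ manifestly tends to zero (its exponents need not have positive real part), but rather --- exactly as in the paper --- that a nonzero limit would violate the scaling property $\phi(\nuu,sz)=s^{-\nu_0}\phi(\nuu,z)$ and so could not be linearly dependent on the $\phi'_{\sigma_i}$, which already span the full solution space of the descendant, forcing the limit to be zero.
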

    
    \begin{example}
      To illustrate the series representation, we continue the example \ref{ex:1loopbubbleA} corresponding to figure \ref{fig:bubble1}. The point configuration for this Feynman graph was given by
      \begin{align}
        \Aa = \begin{pmatrix} 
              1 & 1 & 1 & 1 \\
              1 & 0 & 1 & 2 \\
              0 & 1 & 1 & 0
             \end{pmatrix} \qquad z = (1,1,m_1^2-p^2,m_1^2) \point
      \end{align}
      For the triangulation $T_1 = \{\{1,2,4\},\{2,3,4\}\}$ one obtains the series representation
      \begin{align}
        J_\Aa(\nuu,z) &= z_1^{-2 \nu_0 + \nu_1 + 2 \nu_2} z_2^{-\nu_2} z_4^{\nu_0 - \nu_1 - \nu_2} \sum_{\lambda\in\mathbb N_0} \frac{1}{\lambda !} \left(-\frac{z_1 z_3}{z_2 z_4}\right)^\lambda \Gamma (\nu_2+\lambda ) \nonumber\\
        &\qquad \Gamma (2 \nu_0-\nu_1-2 \nu_2-\lambda ) \Gamma (-\nu_0+\nu_1+\nu_2+\lambda ) \nonumber  \\
        &\qquad + z_4^{\nu_2-\nu_0} z_2^{-2 \nu_0+\nu_1+\nu_2} z_3^{2 \nu_0-\nu_1-2 \nu_2} \sum_{\lambda\in\mathbb N_0} \frac{1}{\lambda !} \left(-\frac{z_1 z_3}{z_2 z_4}\right)^\lambda  \Gamma (\nu_0-\nu_2+\lambda ) \nonumber \\
        &\qquad \Gamma (-2 \nu_0+\nu_1+2 \nu_2-\lambda ) \Gamma (2 \nu_0-\nu_1-\nu_2+\lambda ) \point
      \end{align}
      In the physical relevant limit $z\rightarrow (1,1,m_1^2-p^2,m_1^2)$ and $\nuu = (2-\epsilon,1,1)$ one can easily evaluate the series
      \begin{align}
        &J_\Aa(2-\epsilon,1,1,1,1,m_1^2-p^2,m_1^2) = (m_1^2)^{-\epsilon} \Gamma(1-2\epsilon)\Gamma(\epsilon) \HypF{1,\epsilon}{2\epsilon}{\frac{ m_1^2-p^2}{m_1^2}} \nonumber \\
        &\qquad +  (m_1^2-p^2)^{1-2\epsilon} (p^2)^{-1+\epsilon} \Gamma(1-\epsilon) \Gamma(2-2\epsilon)\Gamma(-1+2\epsilon) 
      \end{align}
      which agrees with the expected result. The series representation which can be obtained by the triangulation $T_2$, as well as the former result in example \ref{ex:1loopbubbleA}, are equivalent to this result by transformation rules of the ${}_2F_1$ function.      
    \end{example}
    
    Thus, we found a series representation for generalized Feynman integrals which admit an unimodular triangulation. Based on experience, it is reasonable to conjecture that every off-shell Feynman graph admits an unimodular triangulation. But also for non-unimodular triangulations, the Feynman integral can be referred back to a case which admits an unimodular triangulation. We will examine the case of non-unimodular triangulations and this relation in section \ref{ssec:nonunimodular}. Typically, a Feynman graph admits many different possibilities to triangulate its corresponding Newton polytope. Therefore, one usually obtains a large number of series representations. This is not surprising, since hypergeometric functions satisfy many transformation formulas and can be converted to other hypergeometric functions.
    
    Therefore, in practical computations one can choose a series representation, which converges fast for the given kinematics and evaluate the Feynman integral numerically by considering the first summands of every series.
    
    These series representations - like the whole GKZ approach -  are only valid for generalized Feynman integrals. In principle this series representation is also true for fixed values of $z$. But for non-generic values of $z$ it can happen, that the series in (\ref{eq:seriesreprFI}) do not converge anymore. We address this point in the following.

    \subsection{The Limit from Generalized Feynman Integrals to Non-Generic Feynman Integrals} \label{ssec:nongeneric}
    
    Up to this point, the theorems were statements about the generalized Feynman integral. The next natural question is how one can perform the limit to the non-generic, ``ordinary'' Feynman integral, i.e. the limit, where the extra variables, which are connected to the first Symanzik polynomial are set equal to $1$. In this limit the convergence behavior of the $\Gamma$-series can be changed. Consider a region $D\subseteq \mathbb C^{n+1}$ where the Feynman integral (\ref{eq:leepom}) converges for $\nuu\in D$. Since in this case the linear combination (\ref{eq:lincomb}) is still finite, there can arise only two problems: a) every series converges separately, but they do not have a common convergence region anymore or b) some of the $\Gamma$-series diverge, but the linear combination is still finite. In the first case a) the convergence criteria for the variables of the $\Gamma$-series $x_j = (z_{\bar\sigma})_j \prod_i (z_\sigma)_i^{-(\Aas^{-1} \Aabs)_{ij}}$ exclude each other for different $\sigma\in T$. In the second case b) the variables $x_j$ become constants after the limit (usually equals $1$), which can be outside of the convergence region. 
    
    Because of these possible issues, it can be difficult to perform the limit. Fortunately there is a strategy to tackle these problems in many cases by transformation formulas of hypergeometric functions. E.g. for the ${}_2F_1$ hypergeometric function, there is a well known transformation formula \cite{Olver.NationalINISTHandbookMathematical2010a}
    \begin{align}
      \HypF{a,b}{c}{z} &= \frac{\Gamma(c)\Gamma(c-a-b)}{\Gamma(c-a)\Gamma(c-b)} \HypF{a,b}{a+b-c+1}{1-z} \nonumber \\
      & + (1-z)^{c-a-b} \frac{\Gamma(c)\Gamma(a+b-c)}{\Gamma(a)\Gamma(b)} \HypF{c-a,c-b}{c-a-b+1}{1-z} \comma \label{eq:2F1trafo}
    \end{align}
    which can be applied to change a limit $x_j\rightarrow 1$ to the much simpler case of a limit $x_j\rightarrow 0$. We illustrate this method with an example. For the $2$-loop sunset graph with two different masses, inter alia there appears the hypergeometric series
    \begin{align}
      \phi_2 =& \sum_{k\in\mathbb N_0^4} (1-\epsilon)_{k_3+k_4} (\epsilon)_{k_1+2 k_2+k_3} (\epsilon -1)_{-k_1-k_2+k_4} (2-2 \epsilon)_{k_1-k_3-k_4} \nonumber \\
      &\qquad \frac{1}{k_1! k_2! k_3! k_4!}  \left(-\frac{z_1 z_6}{z_5 z_2}\right)^{k_1} \left(-\frac{z_4 z_6}{z_5^2}\right)^{k_2} \left(-\frac{z_2 z_7}{z_3 z_5}\right)^{k_3} \left(-\frac{z_2 z_8}{z_3 z_6}\right)^{k_4} 
    \end{align}
    where one has to consider the limit $(z_1,z_2,z_3,z_4,z_5,z_6,z_7,z_8)\rightarrow (1,1,1,m_2^2,m_1^2+m_2^2-p_1^2,m_1^2,m_2^2,m_1^2)$. In this limit it appears the term $\left(-1\right)^{k_4}$, which is not in the convergence region for small values of $\epsilon>0$ anymore. Therefore, we evaluate the $k_4$ series carefully and write
    \begin{align}
     \phi_2 &= \lim_{t\rightarrow 1} \sum_{(k_1,k_2,k_3)\in\mathbb N_0^3} (1-\epsilon)_{k_3}  (\epsilon-1)_{-k_1-k_2} (2-2 \epsilon)_{k_1-k_3} (\epsilon)_{k_1+2 k_2+k_3} \frac{1}{k_1! k_2! k_3!} \nonumber \\
      & \qquad\left(-x_1\right)^{k_1} \left(-x_1 x_2\right)^{k_2} \left(-x_2\right)^{k_3}  \HypF{-\epsilon+k_3+1,\epsilon-k_1-k_2-1}{2 \epsilon-k_1+k_3-1}{t}
    \end{align}
    where $x_i=\frac{ m_i^2}{m_1^2+m_2^2-p_1^2}$. With the transformation formula (\ref{eq:2F1trafo}) for the ${}_2F_1$ function, one can split the series in a convergent and a divergent part
    \begin{align}
      &\phi_2 = \sum_{(k_1,k_2,k_3)\in\mathbb N_0^3}  \frac{\Gamma (k_2+2 \epsilon -1) \Gamma (-k_1+k_3+2 \epsilon -1) }{ \Gamma (-k_1+3 \epsilon -2) \Gamma (k_2+k_3+\epsilon )} (1-\epsilon )_{k_3} (\epsilon -1)_{-k_1-k_2} (2-2 \epsilon )_{k_1-k_3} \nonumber \\
      & \quad (\epsilon )_{k_1+2 k_2+k_3} \frac{1}{k_1! k_2! k_3!} (-x_1)^{k_1} (-x_2)^{k_3} (-x_1 x_2)^{k_2} + \lim_{t\rightarrow 1} \sum_{(k_1,k_2,k_3,k_4)\in\mathbb N_0^4} \frac{(1-t)^{k_2+k_4+2 \epsilon -1} }{k_1! k_2! k_3! k_4! } \nonumber \\
      & \quad \frac{\Gamma (-k_2-2 \epsilon +1)  \Gamma (-k_1+k_3+2 \epsilon -1) \Gamma (k_2+k_3+\epsilon +k_4)\Gamma(k_2+2 \epsilon ) \Gamma (-k_1+3 \epsilon -2 +k_4) }{ \Gamma (k_3-\epsilon +1) \Gamma (-k_1-k_2+\epsilon -1) \Gamma(k_2+k_3+\epsilon ) \Gamma(k_2+2 \epsilon +k_4)\Gamma (-k_1+3 \epsilon -2)} \nonumber \\
      & \quad (\epsilon )_{k_1+2 k_2+k_3}  (1-\epsilon )_{k_3} (\epsilon -1)_{-k_1-k_2} (2-2 \epsilon )_{k_1-k_3} (-x_1)^{k_1} (-x_2)^{k_3} (-x_1 x_2)^{k_2} \nonumber  \allowdisplaybreaks \\
      &=\sum_{(k_1,k_2,k_3)\in\mathbb N_0^3}  \frac{\Gamma (k_2+2 \epsilon -1) \Gamma (-k_1+k_3+2 \epsilon -1) }{ \Gamma (-k_1+3 \epsilon -2) \Gamma (k_2+k_3+\epsilon )} (1-\epsilon )_{k_3} (\epsilon -1)_{-k_1-k_2} (2-2 \epsilon )_{k_1-k_3} \nonumber \\
      & \quad (\epsilon )_{k_1+2 k_2+k_3}  \frac{1}{k_1! k_2! k_3!} (-x_1)^{k_1} (-x_2)^{k_3} (-x_1 x_2)^{k_2} + \lim_{t\rightarrow 1} (1-t)^{2 \epsilon -1} \sum_{(k_1,k_3)\in\mathbb N_0^2} \frac{  (-x_1)^{k_1} (-x_2)^{k_3}}{k_1! k_3!} \nonumber \\
      & \quad \frac{\Gamma (-2 \epsilon +1)  \Gamma (-k_1+k_3+2 \epsilon -1)}{ \Gamma (1-\epsilon) \Gamma (\epsilon -1)} (\epsilon )_{k_1+k_3}   (2-2 \epsilon )_{k_1-k_3} \point
    \end{align}
    
    Comparing the divergent part with the other $\Gamma$-series, which occurs in the calculation of the sunset graph with two masses, one can find another divergent series which exactly cancels this divergence. This cancellation always has to happen, since the linear combination has to be finite.
    
    Therefore, one can derive a convergent series representation also for non-generic Feynman integrals by considering convenient transformation formulas of hypergeometric functions. In principle, this procedure will work in general, but it can be necessary to involve more complicated transformation formulas than the well-studied ${}_2F_1$ transformation and also the cancellation of the divergences may be not so obvious as in this case. Fortunately, many graphs (also for $L>1$) can even be evaluated with the transformation formula of the ${}_2F_1$ function.
    
    In fact, this limit can reduce the dimension of the solution space, which is the expected behavior. For generic variables $z\in\mathbb C^N$ the dimension of the solution space is equal to $\vol_0 \Delta_G$ according to theorem \ref{thm:holrank}. In contrast for non-generic values of $z\in\mathbb C^N$ the dimension of the solution space is equal to the Euler characteristic $(-1)^n \chi ((\mathbb C^\star)^n \setminus \{G=0\})$ \cite{Bitoun.Bogner.Klausen.PanzerFeynmanIntegralRelations2019}, which is in general smaller or equal to the volume of the Newton polytope 
     \begin{align}
      (-1)^n \chi((\mathbb C^\star)^n \setminus \{G=0\}) \leq \vol_0 \Delta_G \point
    \end{align}
    Thus, by calculating the Euler characteristic we can count the expected dependencies in the limit from generalized to non-generic Feynman integrals.    According to \cite{Bitoun.Bogner.Klausen.PanzerFeynmanIntegralRelations2019} it is meaningful to define the number of master integrals as the dimension of the solution space. Therefore, one can see the linear combination of $\Gamma$-series (\ref{eq:lincomb}) as a procedure similar to a decomposition of a general Feynman integral into a basis of master integrals. This analogy mirrors also the existence of different decompositions (which corresponds to different triangulations) and their transformation into each other by shift relations \cite{Kniehl.TarasovFindingNewRelationships2012}.

    \subsection{The $\epsilon$ Expansion of Hypergeometric Series Representations} \label{ssec:epsilon}
    
    In the calculus of dimensional regularization \cite{tHooft.VeltmanRegularizationRenormalizationGauge1972} one is usually interested in the Laurent expansion of the Feynman integrals around $\epsilon=0$ where $d=4-2\epsilon$. Due to the theorem \ref{thm:meromorphic} one can relate this task to the Taylor expansion of the hypergeometric series representation. Thus, one has simply to differentiate the Horn hypergeometric series. As pointed out in \cite{Bytev.Kniehl.MochDerivativesHorntypeHypergeometric2017}, the derivatives with respect to parameters of Horn hypergeometric series are again Horn hypergeometric series of higher degree. By the identities $(a)_{m+n} = (a)_m (a+m)_n$ and $(a)_{rn} = r^{rn} \prod_{j=0}^{r-1} \left(\frac{a+j}{r}\right)_n$ for $r\in\mathbb Z_{>0}$ one can reduce all derivatives to two cases \cite{Bytev.Kniehl.MochDerivativesHorntypeHypergeometric2017}
    \begin{align}
      \frac{\partial}{\partial a} \sum_{n=0}^\infty B(n) (a)_n x^n &= x \sum_{k=0}^\infty\sum_{n=0}^\infty B(n+k+1) \frac{(a+1)_{n+k} (a)_k}{(a+1)_k} x^{n+k} \\
      \frac{\partial}{\partial a} \sum_{n=0}^\infty B(n) (a)_{-n} x^n &= -x \sum_{k=0}^\infty\sum_{n=0}^\infty B(n+k+1) \frac{(a)_{-n-k-1} (a)_{-k-1}}{(a)_{-k}} x^{n+k} \point
    \end{align}
    Thus, Horn hypergeometric functions do not only appear as solutions of Feynman integrals with unimodular triangulations, but also in every coefficient of the Laurent expansion of those Feynman integrals. Therefore, the class of Horn hypergeometric functions is sufficient to describe almost all Feynman integrals and their Laurent expansion. Hence, it is not surprising that also the combinatorial structure of Feynman integrals is reflected in the Horn hypergeometric functions. For instance, the relations between different Feynman integrals can be derived by transformation formulas of hypergeometric functions and vice versa \cite{Kniehl.TarasovFindingNewRelationships2012}.
    
    Another way to expand the Horn hypergeometric functions around $\epsilon = 0$ in some cases could be the approach of $S$- and $Z$-sums \cite{Moch.Uwer.WeinzierlNestedSumsExpansion2002}, which are related to multiple polylogarithms and related functions. Unfortunately, many examples, which can be generated by the GKZ approach, belong not to the known algorithms given in \cite{Moch.Uwer.WeinzierlNestedSumsExpansion2002}.

    \subsection{Applications to other Parametric Representations} \label{ssec:otherparametric}
    
    The GKZ mechanism with the resulting series representation used above is not limited to the Feynman integral in the Lee-Pomeransky representation. It is a method which can be used for all integrals of Euler-Mellin type including one polynomial\footnote{Following the approach described in \cite{Berkesch.Forsgard.PassareEulerMellinIntegrals2011} a generalization to Euler-Mellin integrals with several polynomials is also possible.}. Thus, there are more applications in the Feynman integral calculus. For example the Feynman parametric representation (\ref{eq:FI1}) is of that form, if either the first Symanzik polynomial $U$ or the second Symanzik polynomial $F$ drops out. 
    
    This is the case e.g. for so called ``marginal'' Feynman integrals \cite{Bourjaily.McLeod.vonHippel.WilhelmBoundedBestiaryFeynman2019}, where $\omega=d/2$. In this case the first Symanzik polynomial drops out of the representation (\ref{eq:FI1}) and one obtains the integral $\int_{\mathbb R^{n-1}_+}\d[\tilde x] \tilde x^{\tilde\nu-1} \tilde F^{-d/2}$ with  $\tilde x = (x_1,\ldots,x_{n-1})$, $\tilde \nu = (\nu_1,\ldots,\nu_{n-1})$ and $\tilde F = F|_{x_n\rightarrow 1}$. For instance all ``banana''-graphs are marginal for $\nu_i=1$ and $d=2$. 
    
    In contrast the second Symanzik polynomial drops out for $\omega=0$, which is the well-studied case of periods (e.g. \cite{BrownPeriodsFeynmanIntegrals2009}) $\int_{\mathbb R^{n-1}_+}\d[\tilde x] \tilde x^{\tilde\nu-1} \tilde U^{-d/2}$ with  $\tilde U = U|_{x_n\rightarrow 1}$. Note that this integral is highly non-generic from the perspective of the GKZ approach. Thus, the case where the second Symanzik polynomial $F$ remains is usually much easier, since one has not to introduce extra variables in the first Symanzik polynomial $U$.
    
    Last but not least, also the Baikov representation \cite{GrozinIntegrationPartsIntroduction2011, Bitoun.Bogner.Klausen.PanzerFeynmanIntegralRelations2019} is a possible candidate to apply the GKZ approach as well.

    \subsection{Non-Unimodular Triangulations of Feynman Polytopes} \label{ssec:nonunimodular}
    
    The treatments above were specialized to unimodular triangulations only. This strategy has various reasons. Without much effort, one could extend theorem \ref{thm:solutionspace} also to the case of non-unimodular triangulations and write the Feynman integral as a linear combination of $\Gamma$-series as in equation (\ref{eq:lincomb}). In contrast, for a non-unimodular triangulation one can not determine the meromorphic functions $C_\sigma(\nuu)$ such as easy as in theorem \ref{thm:FeynSeries}. Nevertheless, one can reduce Feynman integrals without unimodular triangulations to subtriangulations as described in section \ref{ssec:series}.
    
    However, after checking common Feynman integrals up to three loops, we are leaded to the conjecture that all off-shell Feynman graphs admit at least one unimodular triangulation. This conjecture seems also likely by considering the very specific form of Newton polytopes $\Delta_G$ arising in Feynman integrals (lemma \ref{lem:structurepolytopes}). 
      
    In contrast some on-shell graphs do not admit an unimodular triangulation. For instance the on-shell full massive 1-loop bubble with $G=x_1+x_2+m_1^2 x_1^2+m_2^2 x_2^2$ does not allow an unimodular triangulation. However, since the off-shell fully massive 1-loop bubble admits unimodular triangulations, one can treat the on-shell Feynman integral as a limit of the off-shell version.
    
    This behavior holds in general: one can always add monomials to the Lee-Pomeransky polynomial $G\rightarrow G^\prime$, such that the Newton polytope $\Delta_{G^\prime}$ admits an unimodular triangulation. This can be seen by a result of Knudsen et al. \cite{Kempf.Knudsen.MumfordConstructionNicePolyhedral1973}:
    
    \begin{lemma}[\cite{Kempf.Knudsen.MumfordConstructionNicePolyhedral1973, BrunsPolytopesRingsKTheory2009}]
      For every lattice polytope $P$ there is an integer $k\in\mathbb N$ such that the dilated polytope $kP:= \{k \mu |\mu\in P\}$ admits an unimodular triangulation.
    \end{lemma}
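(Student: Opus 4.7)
The plan is to reduce the statement to the case of a single lattice simplex, and then handle the simplex case by induction on its normalized volume. First I would invoke the existence of at least one triangulation of $P$ into lattice simplices (which always exists, since $P = \Conv(\mathrm{A})$ can be triangulated by any regular triangulation as recalled in the mathematical interlude). This reduces the problem to the following claim: for every lattice simplex $\Delta$ there is some integer $k\in\mathbb{N}$ such that $k\Delta$ admits a unimodular triangulation.

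Once the simplex case is established, the full polytope case follows by choosing $k$ to be the least common multiple of the integers attached to the individual simplices of an initial lattice triangulation $T = \{\sigma_1, \ldots, \sigma_m\}$ of $P$. The main subtlety here is compatibility on common faces: the unimodular refinements of the various $k\sigma_i$ must agree on every $k\sigma_i\cap k\sigma_j$. This can be arranged by choosing the refinements recursively along the face lattice, starting from vertices and edges and working up dimension by dimension; equivalently, one may fix a suitably generic regular height function on the lattice points of $kP$ and use the resulting regular refinement, which automatically induces compatible refinements on faces.

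For a single simplex $\Delta$ of dimension $n$ with normalized volume $v = \vol_0(\Delta) = |\det \Aas|$, I would argue by induction on $v$. The base case $v = 1$ is trivial since $\Delta$ is already unimodular, so one takes $k = 1$. For the inductive step, one picks $k$ large enough that $k\Delta$ contains many interior lattice points; one then builds a subdivision of $k\Delta$ into lattice simplices $\Delta'_1, \ldots, \Delta'_s$, each of which has strictly smaller normalized volume than $k^n v$ in a controlled sense. Applying the inductive hypothesis to each $\Delta'_i$ yields integers $k'_i$ such that $k'_i \Delta'_i$ admits a unimodular triangulation, and taking the least common multiple of the $k'_i$ (combined with the dilation factor $k$) gives the desired integer for $\Delta$.

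The main obstacle, and the technical heart of the Knudsen-Mumford argument, is constructing the volume-reducing subdivision of $k\Delta$ in the inductive step. The naive idea of using a barycentric subdivision does not directly reduce the index, so one must use the arithmetic of the lattice points in $k\Delta$: for an appropriate $k$ (depending on $v$), the interior lattice points fall into a pattern whose combinatorics permits a "corner cut" or pulling procedure that slices off sub-simplices of strictly smaller index. Controlling this in all $n$ directions simultaneously, and showing that the procedure terminates after finitely many iterations, is the delicate combinatorial step that makes the proof nontrivial. Once this subdivision step is in place, the rest of the argument is a straightforward induction and gluing.
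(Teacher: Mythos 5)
The paper offers no proof of this lemma at all: it is imported verbatim from Kempf--Knudsen--Mumford and Bruns--Gubeladze (it is the Knudsen--Mumford--Waterman theorem), so there is no in-paper argument to compare yours against. Judged on its own terms, your proposal reproduces the broad shape of the known strategy (triangulate, reduce to simplices, induct, glue), but it contains a genuine gap that you yourself flag: the volume-reducing subdivision of the dilated simplex, which is the entire content of the theorem, is never constructed. Naming the hard step is not the same as doing it, so as written this is an outline, not a proof.

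Worse, the induction as you have framed it does not terminate. For a simplex $\Delta$ of normalized volume $v$ you pass to $k\Delta$, whose normalized volume is $k^n v$, and subdivide it into lattice simplices whose volumes are only known to be strictly less than $k^n v$; for $k\geq 2$ these pieces can have normalized volume far exceeding $v$, so invoking the ``inductive hypothesis'' on them is circular. The actual argument inducts on the maximal multiplicity $m$ of the simplices in the current triangulation and proves the much sharper statement that for a suitable prime $p$ dividing $m$, the dilation $p\Delta$ can be subdivided (using distinguished lattice points coming from the finite group $\mathbb Z^{n+1}$ modulo the sublattice spanned by the vertex vectors of $\Delta$) into simplices each of multiplicity strictly less than $m$; that strict drop below the \emph{original} $m$, not below $p^n m$, is what makes the induction well-founded, and it is exactly the step missing here. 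Two smaller points: the least-common-multiple step silently uses that if $Q$ admits a unimodular triangulation then so does $cQ$ for every $c\in\mathbb N$ (true, via the standard staircase triangulation of a dilated unimodular simplex, which is also what makes the refinements compatible on shared faces), and this auxiliary fact should be stated; and the compatibility of the recursively chosen refinements on common faces needs the construction to be canonical on faces, which your appeal to a generic height function does not by itself guarantee once the inductive hypothesis hands you an arbitrary unimodular triangulation of each piece.
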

    
    Thus, if the Newton polytope $\Delta_G$ of a Feynman integral does not admit an unimodular triangulation, one can dilate the polytope $\Delta_G$ until the polytope $k\Delta_G$ admits an unimodular triangulation. Then the original Feynman integral can be obtained as a limit where all additional vertices vanish and where we scale the propagator powers by $k$ and the whole integral by $k^n$
    \begin{align}
      J_\Aa(\nuu,z) = k^n \Gamma(\nu_0) \int_{\mathbb R_+^n} \d[x] x^{k \nu-1} G_z(x^k)^{-\nu_0} \point
    \end{align}
    Thus, one can always write the Feynman integral as a limit of another integral where its Newton polytope allows an unimodular triangulation. That is the reason, why we can focus only on the unimodular triangulations and vindicates the procedure above.\footnote{After completion of the present article, it has been found that a similar result as in theorem \ref{thm:FeynSeries} can also be achieved for regular, non-unimodular triangulation. This will be discussed in more detail in a future article.}

  \section{Advanced Example: Full Massive Sunset} \label{sec:ex}
  
  To illustrate the GKZ method stated above, as well as to show the power of this approach, we calculate a series representation of the sunset Feynman integral with three different masses according to figure \ref{fig:sunset}. The corresponding Feynman graph consists in $n=3$ edges and the Lee-Pomeransky polynomial includes $N=10$ monomials
  \begin{align}
    G &= x_1 x_2 + x_1 x_3 + x_2 x_3 + (m_1^2+m_2^2+m_3^2-p^2) x_1 x_2 x_3 \nonumber \\
    &\qquad + m_1^2 x_1^2 (x_2+x_3) + m_2^2 x_2^2 (x_1+x_3) + m_3^2 x_3^2 (x_1+x_2) \point
  \end{align}
  
  \begin{figure}[ht]
    \begin{center}
      \vspace{-1cm}
      \includegraphics[width=.38\textwidth]{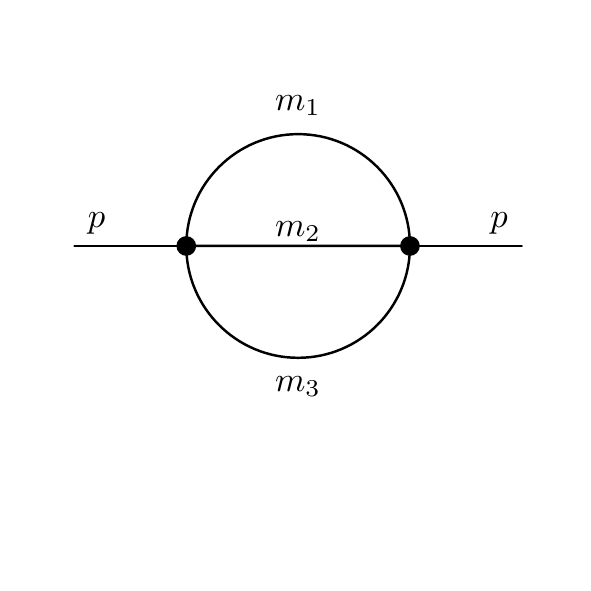}
      \vspace{-2cm}
    \end{center}
    \caption{The $2$-loop $2$-point function (sunset graph) with three different masses.}
    \label{fig:sunset}
  \end{figure}
  
  In the representation of equation (\ref{eq:Gzx}) we encode this polynomial by
  \begin{align}
    \Aa &= \begin{pmatrix}
	      1 & 1 & 1 & 1 & 1 & 1 & 1 & 1 & 1 & 1 \\
	      0 & 1 & 1 & 0 & 1 & 0 & 1 & 2 & 1 & 2 \\
	      1 & 0 & 1 & 1 & 0 & 2 & 1 & 0 & 2 & 1 \\
	      1 & 1 & 0 & 2 & 2 & 1 & 1 & 1 & 0 & 0
	    \end{pmatrix} \\
    z &= (1,1,1,m_3^2,m_3^2,m_2^2,m_1^2+m_2^2+m_3^2-p_1^2,m_1^2,m_2^2,m_1^2) \point
  \end{align}

  The rank of the kernel of $\Aa$ is equal to $r=N-n-1=6$ and therefore we will expect $6$-dimensional $\Gamma$-series. Moreover, the polytope $\Delta_G=\Conv (\mathrm A)$ has the volume $\vol_0(\Conv( \mathrm A))=10$ (calculated with polymake \cite{polymake:2000}), which leads to $10$ basis solutions, and there are $826$ different ways for a regular triangulation of the Newton polytope $\Delta_G$, where $466$ of those triangulations are unimodular. We choose the unimodular triangulation (calculated with TOPCOM \cite{Rambau.topcom})
  \begin{align}
    T_{152} &= \{\{3,6,7,9\},\{3,7,9,10\},\{3,7,8,10\},\{2,5,7,8\},\{2,3,7,8\},\nonumber \\
    &\qquad \{2,4,5,7\},\{1,4,6,7\},\{1,2,4,7\},\{1,3,6,7\},\{1,2,3,7\}\}
  \end{align}
  in order to get series, which converge fast for highly relativistic kinematics $m_i^2 \ll m_1^2+m_2^2+m_3^2-p^2$. Further, we set $\nu_i=1$ and $d=4-2\epsilon$. 
    
  In the limit $z\rightarrow(1,1,1,m_3^2,m_3^2,m_2^2,m_1^2+m_2^2+m_3^2-p_1^2,m_1^2,m_2^2,m_1^2)$ the series $\phi_1$, $\phi_3$, $\phi_5$, $\phi_6$, $\phi_8$ and $\phi_9$ are divergent for small values of $\epsilon>0$. By the method described in section \ref{ssec:nongeneric} one can split all these series by the transformation formula for the ${}_2F_1$ hypergeometric function in a convergent and a divergent part. The divergent parts of these series cancel each other. In doing so the resulting $\Gamma$-series have linear dependences and the dimension of the solution space will reduce from $10$ to $7$.
  
  By applying all these steps one arrives at the following series representation of the full massive sunset integral
  \begin{align}
    I_\Aa (\nuu,z) &= \frac{s^{1-2\epsilon}}{\Gamma(3-3\epsilon)} \left[ x_2^{1-\epsilon} \phi_1 + (x_1 x_2)^{1-\epsilon} \phi_2 + x_1^{1-\epsilon} \phi_3 + (x_1 x_3)^{1-\epsilon} \phi_4 + x_1^{1-\epsilon} \phi_5 \right. \nonumber \\
    &\quad \left. + x_3^{1-\epsilon} \phi_6 + (x_2 x_3)^{1-\epsilon} \phi_7 + x_3^{1-\epsilon} \phi_8 + x_2^{1-\epsilon} \phi_9 + \phi_{10} \right]
  \end{align}
  where the $\Gamma$-series are given by
  \begin{align}
    \phi_1 &= \sum_{k_2,k_3,k_4,k_5,k_6=0}^\infty\frac{ (-x_2)^{k_2} (-x_3)^{k_3} (-x_2 x_3)^{k_4} (-x_1 x_2)^{k_5} (-x_1)^{k_6} }{k_2! k_3! k_4! k_5! k_6!} \nonumber\\
    &\quad \Gamma (k_2-3 \epsilon +3) \Gamma (k_2+k_3+k_4-k_6-2 \epsilon +3) \Gamma (k_3-k_5-k_6-\epsilon +1) \nonumber \\
    &\quad \frac{\Gamma (k_2+k_3+2 k_4+2 k_5+k_6+\epsilon ) \Gamma (k_4+k_5+2 \epsilon -1) \Gamma (-k_2-k_3-k_4+k_6+2 \epsilon -2)}{\Gamma (k_2+k_4+k_5-\epsilon +2) \Gamma (k_3+k_4-k_6+\epsilon )}  \allowdisplaybreaks \nonumber\\%---------------------------------------------------------------------------------------------------------------
    \phi_2 &= \sum_{k_1,k_2,k_3,k_4,k_5,k_6=0}^\infty\frac{ (-x_1)^{k_1+k_5} (-x_2)^{k_2+k_6}  (-x_1 x_3)^{k_3} (-x_2 x_3)^{k_4}}{k_1! k_2! k_3! k_4! k_5! k_6!} \nonumber\\
    &\quad \Gamma (k_1+k_2+2 k_3+2 k_4+k_5+k_6+1) \Gamma (k_1+k_2-3 \epsilon +3) \nonumber \\
    &\quad \Gamma (-k_2-k_4+k_5-k_6+\epsilon -1) \Gamma (-k_1-k_3-k_5+k_6+\epsilon -1) \allowdisplaybreaks \nonumber\\%---------------------------------------------------------------------------------------------------------------
    \phi_3 &= \sum_{k_1,k_3,k_4,k_5,k_6=0}^\infty\frac{(-x_1)^{k_1}  (-x_1 x_3)^{k_3} (-x_3)^{k_4} (-x_1 x_2)^{k_5} (-x_2)^{k_6} }{k_1! k_3! k_4! k_5! k_6!} \nonumber\\
    &\quad \Gamma (k_1-3 \epsilon +3) \Gamma (k_1+k_3+k_4-k_6-2 \epsilon +3) \Gamma (k_4-k_5-k_6-\epsilon +1) \nonumber \\
    &\quad \frac{\Gamma (k_1+2 k_3+k_4+2 k_5+k_6+\epsilon ) \Gamma (k_3+k_5+2 \epsilon -1) \Gamma (-k_1-k_3-k_4+k_6+2 \epsilon -2)}{\Gamma (k_1+k_3+k_5-\epsilon +2) \Gamma (k_3+k_4-k_6+\epsilon )} \allowdisplaybreaks \nonumber\\%---------------------------------------------------------------------------------------------------------------
    \phi_4 &= \sum_{k_1,k_2,k_3,k_4,k_5,k_6=0}^\infty\frac{ (-x_1)^{k_1+k_3}  (-x_3)^{k_2+k_6} (-x_1 x_2)^{k_4}  (-x_2 x_3)^{k_5}}{k_1! k_2! k_3! k_4! k_5! k_6!} \nonumber\\
    &\quad \Gamma (k_1+k_2+k_3+2 k_4+2 k_5+k_6+1) \Gamma (k_1+k_2-3 \epsilon +3) \nonumber \\
    &\quad \Gamma (-k_2+k_3-k_5-k_6+\epsilon -1) \Gamma (-k_1-k_3-k_4+k_6+\epsilon -1) \allowdisplaybreaks \nonumber\\%---------------------------------------------------------------------------------------------------------------
    \phi_5 &= \sum_{k_1,k_2,k_3,k_4,k_5=0}^\infty\frac{ (-x_1)^{k_1} (-x_1 x_3)^{k_2} (-x_3)^{k_3}  (-x_1 x_2)^{k_4} (-x_2)^{k_5} }{k_1! k_2! k_3! k_4! k_5!} \nonumber\\
    &\quad \Gamma (k_1+k_2+k_3-k_5-2 \epsilon +2) \Gamma (-k_2-k_3+k_5-\epsilon +1) \Gamma (-k_1-k_2-k_4+\epsilon -1)  \nonumber \\
    &\quad \frac{\Gamma (k_1+2 k_2+k_3+2 k_4+k_5+\epsilon ) \Gamma (k_2+k_4+2 \epsilon -1) \Gamma (-k_1-k_2-k_3+k_5+2 \epsilon -1)}{\Gamma (-k_3+k_4+k_5+\epsilon ) \Gamma (-k_1+3 \epsilon -2)} \allowdisplaybreaks \nonumber\\%---------------------------------------------------------------------------------------------------------------
    \phi_6 &= \sum_{k_2,k_3,k_4,k_5,k_6=0}^\infty\frac{ (-x_3)^{k_2} (-x_2)^{k_3}  (-x_1)^{k_4}   (-x_2 x_3)^{k_5} (-x_1 x_3)^{k_6} }{k_2! k_3! k_4! k_5! k_6!} \nonumber\\
    &\quad \Gamma (k_2-3 \epsilon +3) \Gamma (k_2+k_3-k_4+k_5-2 \epsilon +3) \Gamma (k_3-k_4-k_6-\epsilon +1) \nonumber \\
    &\quad \frac{\Gamma (k_2+k_3+k_4+2 k_5+2 k_6+\epsilon ) \Gamma (-k_2-k_3+k_4-k_5+2 \epsilon -2) \Gamma (k_5+k_6+2 \epsilon -1)}{\Gamma (k_2+k_5+k_6-\epsilon +2) \Gamma (k_3-k_4+k_5+\epsilon )} \allowdisplaybreaks \nonumber\\%---------------------------------------------------------------------------------------------------------------
    \phi_7 &= \sum_{k_1,k_2,k_3,k_4,k_5,k_6=0}^\infty\frac{ (-x_2)^{k_1+k_3}(-x_3)^{k_2+k_5} (-x_1 x_2)^{k_4}  (-x_1 x_3)^{k_6}}{k_1! k_2! k_3! k_4! k_5! k_6!} \nonumber\\
    &\quad \Gamma (k_1+k_2+k_3+2 k_4+k_5+2 k_6+1) \Gamma (k_1+k_2-3 \epsilon +3) \nonumber \\
    &\quad  \Gamma (-k_1-k_3-k_4+k_5+\epsilon -1) \Gamma (-k_2+k_3-k_5-k_6+\epsilon -1) \allowdisplaybreaks \nonumber\\%---------------------------------------------------------------------------------------------------------------
    \phi_8 &= \sum_{k_1,k_3,k_4,k_5,k_6=0}^\infty\frac{(-x_3)^{k_1}  (-x_2)^{k_3} (-x_1)^{k_4} (-x_2 x_3)^{k_5}  (-x_1 x_3)^{k_6} }{k_1! k_3! k_4! k_5! k_6!} \nonumber\\
    &\quad \Gamma (k_1+k_3-k_4+k_5-2 \epsilon +2) \Gamma (-k_3+k_4-k_5-\epsilon +1) \Gamma (-k_1-k_5-k_6+\epsilon -1) \nonumber \\
    &\quad \frac{\Gamma (k_1+k_3+k_4+2 k_5+2 k_6+\epsilon ) \Gamma (-k_1-k_3+k_4-k_5+2 \epsilon -1) \Gamma (k_5+k_6+2 \epsilon -1)}{\Gamma (-k_3+k_4+k_6+\epsilon ) \Gamma (-k_1+3 \epsilon -2)} \allowdisplaybreaks \nonumber\\%---------------------------------------------------------------------------------------------------------------
    \phi_9 &= \sum_{k_1,k_2,k_3,k_4,k_6=0}^\infty\frac{  (-x_2)^{k_1} (-x_3)^{k_2} (-x_2 x_3)^{k_3} (-x_1 x_2)^{k_4}  (-x_1)^{k_6} }{k_1! k_2! k_3! k_4! k_6!} \nonumber\\
    &\quad \Gamma (k_1+k_2+k_3-k_6-2 \epsilon +2) \Gamma (-k_2-k_3+k_6-\epsilon +1) \Gamma (-k_1-k_3-k_4+\epsilon -1)  \nonumber \\
    &\quad \frac{\Gamma (k_1+k_2+2 k_3+2 k_4+k_6+\epsilon ) \Gamma (k_3+k_4+2 \epsilon -1) \Gamma (-k_1-k_2-k_3+k_6+2 \epsilon -1)}{\Gamma (-k_2+k_4+k_6+\epsilon ) \Gamma (-k_1+3 \epsilon -2)} \allowdisplaybreaks \nonumber\\%---------------------------------------------------------------------------------------------------------------   
    \phi_{10} &= \sum_{k_1,k_2,k_3,k_4,k_5,k_6=0}^\infty\frac{ (-x_3)^{k_1+k_2} (-x_2)^{k_3+k_5} (-x_1)^{k_4+k_6} }{k_1! k_2! k_3! k_4! k_5! k_6!} \nonumber\\
    &\quad \Gamma (k_2-k_3+k_4-k_5-\epsilon +1) \Gamma (k_1+k_3-k_4-k_6-\epsilon +1) \nonumber \\
    &\quad  \Gamma (-k_1-k_2+k_5+k_6-\epsilon +1) \Gamma (k_1+k_2+k_3+k_4+k_5+k_6+2 \epsilon -1)
  \end{align}
  with $x_i = \frac{m_i^2}{m_1^2+m_2^2+m_3^2-p^2}$ and $s=m_1^2+m_2^2+m_3^2-p^2$. All these series converge for small values of $x_i$ and the series representation can be obtained by a very simple algorithm, which is a straightforward implementation of the steps described in section \ref{sec:FeynHyp}. In fact, some of these $\Gamma$-series are related to each other. One can reduce the whole system only to $\phi_1, \phi_2, \phi_5$ and $\phi_{10}$ by the relations $\phi_1(x_1,x_2,x_3) = \phi_3 (x_2,x_1,x_3) = \phi_6(x_1,x_3,x_2)$, $\phi_2(x_1,x_2,x_3) = \phi_4 (x_1,x_3,x_2) = \phi_7 (x_3,x_2,x_1)$ and $\phi_5 (x_1,x_2,x_3) = \phi_8 (x_2,x_3,x_1) = \phi_9 (x_2,x_1,x_3)$. By these relations one can also verify the expected symmetry of the Feynman integral under the permutation $x_1 \leftrightarrow x_2 \leftrightarrow x_3$.
   
  In order to expand the Feynman integral $I_\Aa$ for small values of $\epsilon>0$ one can use the methods described in section \ref{ssec:epsilon} or alternatively by expanding each $\Gamma$-function separately. The latter requires to distinguish between positive and negative integers in the argument of the $\Gamma$-function
  \begin{align}
    \Gamma(b\epsilon+n) & \stackrel{n\in\mathbb{Z}_{\geq 1}}{=} \Gamma(n) \left[1+b\epsilon \psi_0(n)+\frac{b^2\epsilon^2}{2} \left(\psi_0(n)^2+\psi_1(n) \right) \right.\nonumber \\ 
    & \left. + \frac{b^3\epsilon^3}{6} \left(\psi_0(n)^3 + 3\psi_0(n)\psi_1(n) + \psi_2(n)\right) +\order{\epsilon^4} \right] \nonumber \\
    \Gamma(b\epsilon+n) & \stackrel{n\in\mathbb Z_{\leq 0}}{=} \frac{(-1)^n}{\Gamma(-n+1)} \left[\frac{1}{b\epsilon} + \psi_0(1-n) + \frac{b\epsilon}{2} \left(2\zeta_2 + \psi_0(1-n)^2 -\psi_1(1-n)\right) \right. \nonumber \\
    & \left. + \frac{b^2\epsilon^2}{6} \left(\psi_0(1-n)^3+6 \zeta_2 \psi_0(1-n)-3\psi_0(1-n)\psi_1(1-n) + \psi_2(1-n) \right) +\order{\epsilon^3} \right] \textrm{.}
  \end{align}
  By the distinction of cases between positive and negative arguments in the $\Gamma$ functions many terms arise, which are easily manageable by a CAS but which are space-consuming in print, wherefore we omit 
  to state these results here. The correctness of these results was checked numerically by FIESTA \cite{SmirnovFIESTAOptimizedFeynman2016} with arbitrary kinematics and masses, satisfying $x_i< 0.5$. For small values of $x_i$ the resulting series converges fast, such that for a good approximation one only has to take the first summands into account. An upper bound for the errors in a finite summation can be estimated by a majorant geometric series, similar to the procedure in theorem \ref{thm:GammaConverge}. Vice versa one can determine the number of required summands for a given error bound. Furthermore, the summands of a Horn hypergeometric series always have a rational ratio. Thus, in a numerical calculation one only has to evaluate rational functions. In this way Horn hypergeometric series have a relatively simple and controllable numerical behavior in the case $|x_i|\ll 1$.

  \section{Conclusion and Outlook}
    
  We showed in this article that Feynman integrals can be described as hypergeometric functions. Namely we showed that a) every generalized Feynman integral is an $\Aa$-hypergeometric function, that b) every generalized Feynman integral which admits a regular, unimodular triangulation has a representation in Horn hypergeometric functions and that c) all scalar, dimensional regularizable and Euclidean Feynman integrals can be written at least as a limit of a linear combination of Horn hypergeometric functions. Furthermore, the latter (c) is also true for all coefficients in a Laurent expansion of the Feynman integral in a dimensional or analytic regularization.
    
  Since hypergeometric functions are mostly represented in terms of integrals including polynomials, Mellin-Barnes integrals including $\Gamma$-functions or series including Pochhammer symbols, it is not surprising that also Feynman integrals appear normally in one of those representations. From the perspective of general hypergeometric functions the common ground of these representations is an integer matrix $\Aa\in\mathbb Z^{(n+1)\times N}$ or equivalently a Newton polytope $\Delta_G = \Conv(\mathrm A)$. By the triangulation of this Newton polytope we derived an analytic formula for a hypergeometric series representation of the Feynman integral. As there are in general many different ways to triangulate a polytope, there are also many different series representations possible. Similar to the Feynman integral those hypergeometric series satisfy different relations between each other and can therefore also be transformed in equivalent representations. However, the series representations can differ in their convergence behavior. Thus, there can be series representations which converge fast for given kinematics so that we sometimes only need the first summands in order to reach a high precision numerical approximation.
    
  For the purpose of a practical usage of this concept, we discussed possible obstacles which can appear in the concrete evaluation and gave some strategies to solve them. The procedure described in section \ref{sec:FeynHyp} and illustrated in section \ref{sec:ex} works similarly in all cases and can be read as an algorithm.
    
  Besides numerical applications, there are structurally interesting implications for the Feynman integral. Since in both subjects similar questions appear, the hypergeometric perspective opens new ways to analyze the Feynman integral. For instance the singularities of hypergeometric functions, and therefore simultaneously the singularities of Feynman integrals, are given by the $\Aa$-discriminants and the coamoebas.
    
  The series representation given in the present work, is only one of many consequences that we can obtain by the connection between Feynman integrals and hypergeometric functions. On the one hand we outlined in section \ref{ssec:otherparametric} that we also can apply the GKZ approach to other integrals appearing in the Feynman calculus. On the other hand one can also solve the GKZ system with other functions to obtain representations of the Feynman integral e.g. in terms of other integrals. This leads us to the question, if one can relate Feynman integrals with the GKZ approach to other Feynman integrals in a consistent way, in order to get linear relations between Feynman integrals similar to the IBP approach.
    
  Furthermore, it would be desirable to get a better understanding of the relations between Feynman integrals and their diagrams sharing parts of the matrices $\Aa$. The simplest connection between those Feynman integrals are the ancestors and descendants of Feynman integrals mentioned in section \ref{ssec:series}. Lastly, we leave several technical questions open to solve in future work. For instance, a proof of the conjecture that off-shell Feynman integrals always admit unimodular triangulations together with a better understanding of the behavior in non-generic cases as described in section \ref{ssec:nongeneric}, would show that all off-shell Feynman integrals can be written as Horn hypergeometric functions and not only as a limit of Horn hypergeometric functions. 
    
  In this article we suggested a new perspective of an old idea to characterize Feynman integrals. This hypergeometric perspective can be useful for practical computation but also for a better conceptual understanding of Feynman integrals. Still, this is merely a starting point for future research on the correspondence between Feynman integrals and hypergeometric functions in the GKZ theory.
    
  \pagebreak
  % -------------------------------------------------------------------------------------------------------------------------------
 
  \section{Appendix}
  
    \subsection{Convex Polytopes \& Triangulations}
  
    Convex polytopes constitute one of the main concepts in the present paper and are the key link between Feynman integrals and hypergeometric functions. They characterize the whole structure of a Feynman graph, which is necessary to evaluate the Feynman integral as a function of kinematics and masses. Furthermore they determine the region of convergence of the Feynman integral (theorem \ref{thm:FIconvergence}), the poles and the meromorphic continuation of the Feynman integral (theorem \ref{thm:meromorphic}) and their triangulations give series representations (theorem \ref{thm:FeynSeries}) in terms of Horn hypergeometric series.
    
    In section \ref{ssec:interlude} we already invented the most important concepts which are essential for this approach. At this point, we give some more details which are closely related to this approach and which give an illustrative viewpoint on polytopes. For further treatments we refer to \cite{BrondstedIntroductionConvexPolytopes1983, Henk.Richter-Ge.ZieglerBasicPropertiesConvex2004b, BrunsPolytopesRingsKTheory2009, DeLoera.Rambau.SantosTriangulationsStructuresAlgorithms2010}. \spacepar
    
    In the mathematical interlude in section \ref{ssec:interlude} we introduced the two ways to represent a polytope (by vertices and by the intersection of half-spaces). For general polytopes the transformation of these two representations can be complicated. However for simplices there is a simple connection.
    \begin{lemma} \label{lem:simplexfacets}
      Let $P_\bigtriangleup = \Conv (\mathrm A)$ be a full dimensional simplex with no internal points and let $\Aa = \begin{pmatrix} 1 \\ \mathrm A \end{pmatrix}$ be as before. Then the matrix $\Aa$ describes the relative interior of the polytope $\mu \in \relint (P_\bigtriangleup) \Leftrightarrow \Aa^{-1} \begin{pmatrix} 1 \\ \mu \end{pmatrix} > 0$. Furthermore the $i$-th row of $\Aa^{-1} \begin{pmatrix} 1 \\ \mu \end{pmatrix} = 0$ describes the facet, which is opposite to the point defined by the $i$-th column of $\Aa$.
    \end{lemma}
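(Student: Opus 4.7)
The plan is to recognise the vector $\Aa^{-1}\begin{pmatrix}1\\\mu\end{pmatrix}$ as the vector of barycentric coordinates of $\mu$ with respect to the vertices of $P_\bigtriangleup$, and then appeal to the standard characterisation of the relative interior of a simplex via positivity of these coordinates.

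First, I would verify that $\Aa$ is invertible. Because $P_\bigtriangleup$ is a full-dimensional simplex in $\mathbb R^n$ and $\mathrm A$ contains no internal points, the columns $\mathrm a_1,\ldots,\mathrm a_{n+1}$ are precisely the $n+1$ affinely independent vertices of the simplex. Affine independence of the $\mathrm a_j$ in $\mathbb R^n$ is equivalent to linear independence of the homogenised vectors $\begin{pmatrix}1\\ \mathrm a_j\end{pmatrix}$ in $\mathbb R^{n+1}$, so $\det\Aa\neq 0$ and $\Aa^{-1}$ exists.

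Next, set $k := \Aa^{-1}\begin{pmatrix}1\\\mu\end{pmatrix}\in\mathbb R^{n+1}$. The defining equation $\Aa k = \begin{pmatrix}1\\\mu\end{pmatrix}$ splits row-wise into $\sum_{j=1}^{n+1} k_j = 1$ (top row) and $\sum_{j=1}^{n+1} k_j\, \mathrm a_j = \mu$ (remaining rows), so $k$ is the unique vector of barycentric coordinates of $\mu$ with respect to $\mathrm a_1,\ldots,\mathrm a_{n+1}$. By the very definition of the convex hull, $\mu\in P_\bigtriangleup$ iff there exists a nonnegative $k$ with these properties; uniqueness of $k$ then yields $\mu\in P_\bigtriangleup\Leftrightarrow k\geq 0$ componentwise, and consequently $\mu\in\relint P_\bigtriangleup\Leftrightarrow k>0$ componentwise, establishing the first claim.

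For the second claim, the $i$-th facet of a simplex is, by standard simplex combinatorics, the convex hull of the remaining vertices $\{\mathrm a_j\}_{j\neq i}$; in barycentric coordinates this face is cut out precisely by $k_i=0$ with the remaining $k_j\geq 0$ summing to one. Since $k_i$ is the $i$-th entry of $\Aa^{-1}\begin{pmatrix}1\\\mu\end{pmatrix}$, the vanishing of this row describes exactly the facet opposite $\mathrm a_i$. I do not anticipate a genuine obstacle here: invertibility of $\Aa$ is elementary linear algebra, the barycentric interpretation is forced by block-reading the matrix equation, and the equivalence of the topological relative interior with the open barycentric cell $\{k>0\}$ is classical; the only small care needed is distinguishing weak positivity ($\mu\in P_\bigtriangleup$) from strict positivity ($\mu\in\relint P_\bigtriangleup$), both of which follow at once from uniqueness of the barycentric representation on a simplex.
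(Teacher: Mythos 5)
Your proposal is correct and follows essentially the same route as the paper: both identify $\Aa^{-1}\begin{pmatrix}1\\\mu\end{pmatrix}$ with the (unique) barycentric coordinates of $\mu$, read off membership, relative interior, and facets from the sign conditions $k\geq 0$, $k>0$, $k_i=0$, and use $\Aa^{-1}\Aa=\mathbbm{1}$ to match the $i$-th facet with the vertex $\mathrm a_i$. The only cosmetic difference is that you identify the opposite facet directly as $\Conv(\{\mathrm a_j\}_{j\neq i})=\{k_i=0\}$, whereas the paper argues from the vertex side (each vertex lies on all facets but one); the content is the same.
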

    \begin{proof}
      Since $P_\bigtriangleup$ is a full dimensional simplex it is $\det \Aa \neq 0$ and one can write
      \begin{align}
        P_\bigtriangleup = \left\{\mu\in\mathbb R^n \left| \begin{pmatrix} 1 \\ \mu \end{pmatrix} = \Aa k, k_i \geq 0 \right\}\right. =  \left\{ \mu\in\mathbb R^n \left| \Aa^{-1} \begin{pmatrix} 1 \\ \mu \end{pmatrix} \geq 0 \right\}\right. \point
      \end{align}
      Thus for the relative interior of a simplex it is $\Aa^{-1} \begin{pmatrix} 1 \\ \mu \end{pmatrix} > 0$ and every of the $n+1$ rows of $\Aa^{-1} \begin{pmatrix} 1 \\ \mu \end{pmatrix} = 0$ describes a facet of the simplex. 
   
      Vertices of the simplex are the intersection of $n$ facets. Thus, a vertex $v$ of $P_\bigtriangleup$ is the solution of $n$ rows of the linear equation system $\Aa^{-1} x = 0$. Clearly the $k$-th column of $\Aa$ solves the system $\Aa^{-1} x =0$ except of the $k$-th row. Hence, the intersection of $n$ facets is the $k$-th column of $\Aa$, where $k$ is the index which belongs to the facet which is not involved in the intersection. In a simplex that means that the $k$-th facet is opposite to the $k$-th vertex.
    \end{proof}
    
    In the case of a general convex polytope the connection between vertices and intersection of halfspaces is more complex. Nevertheless, one can establish an inequality which holds for all points in the polytope.
    
    \begin{lemma} \label{lem:structurepolytopes}
      Let $P=\Conv(\mathrm{A})\subset \mathbb R^{n}$ a full dimensional polytope with $N$ vertices and $\nuu=(\nu_0,\nu)\in\mathbb C^{n+1}$, with $\Re \nu_0 > 0$ as before. Consider further an index set of vertices $\sigma \subset \{1,\ldots,N\}$ which corresponds to a simplex, $\det \Aas \neq 0$. Then the following equivalence holds
      \begin{align}
        \nu/\nu_0\in P \Leftrightarrow (\Aas^{-1})_i \nuu \geq \sum_{j=1}^{N-n-1} (\Aas^{-1}\Aabs)_{ij} r_j \qquad \textrm{with} \qquad r\in\mathbb{R}_{\geq 0}^{N-n-1} 
      \end{align}
      where $ \nu/\nu_0:= (\nu_1/\nu_0,\ldots \nu_n/\nu_0)$ is the componentwise division. 
    \end{lemma}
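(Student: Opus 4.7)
The plan is to translate ``$\nu/\nu_0\in P$'' into an existence statement for nonnegative barycentric coefficients and then solve for the $\sigma$-block of those coefficients using the invertibility of $\Aas$; the claimed inequality is precisely the positivity of that block.

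\smallskip

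First I would show the equivalence
\begin{align*}
  \nu/\nu_0\in P \quad\Longleftrightarrow\quad \exists\,k\in\mathbb R_{\geq 0}^N:\ \Aa k=\nuu.
\end{align*}
This uses only the definition of the convex hull together with the distinguished top row $(1,\ldots,1)$ in $\Aa$. Indeed, $\nu/\nu_0\in P$ means there exist $k'_1,\ldots,k'_N\geq 0$ with $\sum_j k'_j=1$ and $\sum_j k'_j \mathrm{a}_j=\nu/\nu_0$; setting $k:=\nu_0 k'$ (which is nonnegative since $\nu_0>0$ in the real regime under consideration) gives $\Aa k=\nuu$. The converse is the same calculation read backwards, dividing the components of $k$ by their total $\nu_0$.

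\smallskip

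Next I would decompose $k$ into its $\sigma$- and $\bar\sigma$-parts. Writing $k=(k_\sigma,k_{\bar\sigma})$ and using the column partition $\Aa=(\Aas\mid\Aabs)$, the equation $\Aa k=\nuu$ becomes
\begin{align*}
  \Aas k_\sigma + \Aabs k_{\bar\sigma} = \nuu.
\end{align*}
Since $\det\Aas\neq 0$ by hypothesis, this uniquely determines $k_\sigma$ in terms of $k_{\bar\sigma}$:
\begin{align*}
  k_\sigma = \Aas^{-1}\nuu - \Aas^{-1}\Aabs\, k_{\bar\sigma}.
\end{align*}
Now the existence of a nonnegative $k$ is equivalent to the existence of some $r:=k_{\bar\sigma}\in\mathbb R_{\geq 0}^{N-n-1}$ such that $k_\sigma\geq 0$ componentwise, i.e.\
\begin{align*}
  (\Aas^{-1}\nuu)_i \;\geq\; \sum_{j=1}^{N-n-1}(\Aas^{-1}\Aabs)_{ij}\,r_j, \qquad i=1,\ldots,n+1,
\end{align*}
which is exactly the asserted inequality. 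Combining this with the first step closes both directions of the equivalence.

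\smallskip

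I do not foresee a genuine obstacle here; the lemma is essentially a bookkeeping exercise building on lemma \ref{lem:simplexfacets}, where the simplex case corresponds to $\bar\sigma=\emptyset$ (no free parameters $r_j$) and recovers the statement $\Aas^{-1}\nuu\geq 0$. The only subtle point worth a remark is reconciling the condition ``$\Re\nu_0>0$'' with writing ``$\nu/\nu_0\in P$'': the equivalence is meant coordinatewise over the reals, so in the proof I would either restrict attention to real $\nuu$ or, as in theorem \ref{thm:FIconvergence}, interpret the statement after taking real parts, which does not affect the linear algebra above.
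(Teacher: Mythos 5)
Your proposal is correct and follows essentially the same route as the paper's own proof: rewrite $\nu/\nu_0\in P$ as $\nuu=\Aa k$ with $k\in\mathbb R_{\geq 0}^N$ via the top row of ones, split $\Aa=(\Aas,\Aabs)$, and multiply by $\Aas^{-1}$ so that nonnegativity of $k_\sigma$ becomes the asserted inequality with $r=k_{\bar\sigma}$. You simply spell out the details (and the real-part caveat) that the paper's terser argument leaves implicit.
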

    \begin{proof}
      From the vertex definition of polytopes it is
      \begin{align}
        \nu/\nu_0 \in P \Leftrightarrow \begin{pmatrix} 1 \\ \nu/\nu_0 \end{pmatrix} = \Aa \tilde k \Leftrightarrow \nuu = \Aa k
      \end{align}
      for $k,\tilde k\in \mathbb{R}_{\geq 0}^{N}$. Since the polytope is full dimensional, there is a triangulation and one can divide the polytope in $\Aa = (\Aas,\Aabs)$ where $\Aas$ has an inverse. Thus the lemma follows by multiplying by the inverse $\Aas^{-1}$.
    \end{proof}
    
    This lemma establishes a connection between the signs of the matrix $\Aas^{-1} \Aabs$ and the geometry of the polytope. Since the signs of $\Aas^{-1} \Aabs$ determine the kinematic regions where the series representations (theorem \ref{thm:FeynSeries}) converge fast, this lemma can be useful to analyze triangulations in a numerical application.
 
    Furthermore, the matrices $\Aas$ and $\Aabs$ which appear in the triangulation procedure has some special properties, which also guarantees the convergence of $\Gamma$-series.
    
    \begin{lemma} \label{lem:AA1} 
      Let $P=\Conv(\mathrm A)$ be a full dimensional polytope with integer vertices $\mathrm A \in\mathbb Z^{n\times N}$ and $T$ a triangulation, where the indices of the simplices are denoted by $\sigma$. Then it holds
      \begin{align}
        \sum_{i\in\sigma} (\Aas^{-1}\Aabs)_{ij} = 1 \label{eq:AA1a}
      \end{align}
      for all $j=1,\ldots,N$ and
      \begin{align}
        \sum_{i\in\sigma} (\Aas^{-1}\nuu)_i = \nu_0 \label{eq:AA1b}
      \end{align}
      where $\bar\sigma$ is the complement of $\sigma$ and $\nuu = (\nu_0,\nu_1,\ldots,\nu_n)\in\mathbb C^{n+1}$ an arbitrary complex vector.
    \end{lemma}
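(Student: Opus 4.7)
The plan is to exploit the defining structural feature of $\Aa$, namely that by construction $\Aa = \begin{pmatrix} 1 \\ \mathrm A \end{pmatrix}$ has its entire first row equal to $(1,1,\ldots,1)$. This same row is inherited by the column-restrictions $\Aas$ and $\Aabs$, so both matrices share a common left eigenvector, and both identities will fall out of picking off the first row of an appropriate matrix equation.

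For the first identity, I would introduce the first standard basis vector $e_1 \in \mathbb R^{n+1}$ and observe that $e_1^T \Aas = (1,\ldots,1) \in \mathbb R^{|\sigma|}$ and $e_1^T \Aabs = (1,\ldots,1) \in \mathbb R^{|\bar\sigma|}$. Multiplying the trivial identity $\Aas \cdot (\Aas^{-1}\Aabs) = \Aabs$ on the left by $e_1^T$ gives
\begin{align*}
(1,\ldots,1) \cdot \Aas^{-1}\Aabs = (1,\ldots,1),
\end{align*}
whose $j$-th component is precisely $\sum_{i\in\sigma} (\Aas^{-1}\Aabs)_{ij} = 1$.

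For the second identity, I would apply exactly the same trick to the vector equation $\Aas \cdot (\Aas^{-1}\nuu) = \nuu$: multiplying from the left by $e_1^T$ and using $e_1^T\nuu = \nu_0$ yields
\begin{align*}
\sum_{i\in\sigma} (\Aas^{-1}\nuu)_i = (1,\ldots,1)\cdot(\Aas^{-1}\nuu) = \nu_0.
\end{align*}

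There is no real obstacle here; the only thing worth noting is that the triangulation and integrality hypotheses on $\mathrm A$ are not actually used in the argument — all that is needed is that $\sigma$ indexes a subset of columns making $\Aas$ invertible, so that $\Aas^{-1}$ exists. The lemma is thus an immediate consequence of the affine (rather than merely linear) structure encoded by the row of ones in $\Aa$, which is the same structure that makes the GKZ system homogeneous in the sense of the operator $\Aa\theta+\nuu$.
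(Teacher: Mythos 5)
Your proof is correct and rests on the same idea as the paper's: both identities follow from the all-ones first row of $\Aa$ being inherited by $\Aas$ and $\Aabs$, extracted by hitting $\Aas(\Aas^{-1}\Aabs)=\Aabs$ and $\Aas(\Aas^{-1}\nuu)=\nuu$ with that row. The paper merely packages the first identity as the statement $\Aa B=0$ for the lattice basis $B=\bigl(\begin{smallmatrix}-\Aas^{-1}\Aabs\\ \mathbbm{1}\end{smallmatrix}\bigr)$, which is the same computation, and your remark that only invertibility of $\Aas$ is actually used is accurate.
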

    \begin{proof}
      The columns of the matrix $B=\begin{pmatrix} -\Aas^{-1}\Aabs \\ \mathbbm{1}\end{pmatrix}\in\mathbb{Q}^{N\times r}$ are a basis of the lattice $\mathbb L := \ker (\Aa) \cap \mathbb Z^N$, which can be verified by direct computation $\Aa B = 0$. It follows particularly that $\sum_j B_{jk}=0$ and thereby equation (\ref{eq:AA1a}). The second statement (\ref{eq:AA1b}) follows trivially from $\sum_{j,k} (\Aas)_{ij} (\Aas^{-1})_{jk} \nuu_k=\nuu_i$.
    \end{proof}
    
    This lemma also implies that the appearing variables $x_j = (z_{\bar\sigma})_j \prod_i (z_\sigma)_i^{-(\Aas^{-1} \Aabs)_{ij}}$ in the $\Gamma$-series are dimensionless (without units). Therefore, the unit of the $\Gamma$-series will be only determined by the monomial in front of the power series and is equal to $[\phi_{\sigma,k}(\nuu,z)] = [z_i]^{-\nu_0}$. \spacepar

    To finish this short section about polytopes, we introduce another method to triangulate convex polytopes. In section \ref{ssec:interlude} there was already a method described how to construct triangulations by generic height vectors. Here, we discuss a combinatorial way to triangulate polytopes. This combinatorial approach also allows the structure of subtriangulations in a natural way.
    
    In order to ensure the convexity of polytopes, we will formulate the concept of visibility in mathematical terms. Consider a face $F$ of a convex polytope $P\subset \mathbb R^n$ and an arbitrary point $x\in\relint (F)$ in the relative interior of $F$. The face $F$ is \textit{visible} from another point $p\notin F$, if the line segment $[x,p]$ intersects $P$ only at $x$ \cite{DeLoera.Rambau.SantosTriangulationsStructuresAlgorithms2010}.
    
    With the concept of visibility, we can add points to a triangulation and obtains an enlarged triangulation. Let $T$ be a regular triangulation of the convex polytope $\Conv(\mathrm A)$. Then the set
    \begin{align}
      T^\prime = T \cup \{ B \cup \{p\} \, | \,  B \in T \textrm{ and } B \textrm{ is visible from } p \} 
    \end{align}
    is a regular triangulation of the convex polytope $\Conv(\mathrm A \cup p)$ \cite{DeLoera.Rambau.SantosTriangulationsStructuresAlgorithms2010}. In doing so, one can create a so-called \textit{placing triangulation} by starting with an arbitrary point (the triangulation of one point is still the point) and placing then step by step the other points to the previous triangulation. The order of the added points will determine the triangulation. \spacepar
    
    In practice there are different algorithms available to calculate triangulations of convex polytopes efficiently, e.g. TOPCOM \cite{Rambau.topcom} and polymake \cite{polymake:2000}.

    \subsection{Convergence of $\Gamma$-series} \label{ssec:app-convergence}
    
    The $\Gamma$-series were introduced in section \ref{ssec:GKZHyp} as formal solutions of the GKZ hypergeometric system. Using standard arguments, we proof, that those $\Gamma$-series always have a non-vanishing region of convergence. The proof is roughly orientated towards \cite{StienstraGKZHypergeometricStructures2005}. In order to show the convergence of the $\Gamma$-series one has to estimate the summands. As an application of the Stirling formula one can state the following lemma:
    
    \begin{lemma}[similar to \cite{StienstraGKZHypergeometricStructures2005}] \label{lem:gammaest}
      For every $C\in\mathbb C$ there are constants $\kappa,R\in\mathbb R_{>0}$ independent of $M$, such that
      \begin{align}
        \frac{1}{|\Gamma (C+M)|} \leq \kappa R^{|M|} |M|^{-M} \label{eq:gammest}
      \end{align}
      for all $M\in\mathbb Z$.
    \end{lemma}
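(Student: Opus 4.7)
The plan is to use Stirling's asymptotic formula together with the reflection identity for $\Gamma$, splitting the analysis according to the sign of $M$. First I observe that whenever $C+M$ is a non-positive integer, $1/\Gamma(C+M)=0$ and the inequality holds trivially, so one may restrict attention to those $M\in\mathbb Z$ for which $C+M$ avoids the poles of $\Gamma$. The strategy is then to prove the bound for $|M|\geq M_0$ (with $M_0$ depending on $C$) by asymptotic analysis, and absorb the finitely many remaining cases into the constant $\kappa$.

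For $M\geq M_0$ positive I would apply a quantitative form of Stirling, namely $|\Gamma(z)|\geq c\,|z|^{\Re(z)-1/2}e^{-\Re(z)}e^{-\pi|\Im(z)|/2}$ valid uniformly in a right half-plane $\Re(z)\geq \alpha>0$. Specializing to $z=C+M$, this yields $|\Gamma(C+M)|\geq c_1 M^{M+\Re(C)-1/2}e^{-M}$, hence $1/|\Gamma(C+M)|\leq c_2\,e^{M}M^{-M}M^{1/2-\Re(C)}$. Since the polynomial factor $M^{1/2-\Re(C)}$ is dominated by any exponential $\rho^{M}$ with $\rho>1$, it can be absorbed into a factor of the form $R^{M}$, giving the claim with $|M|^{-M}=M^{-M}$ on the right-hand side.

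For $M\leq -M_0$ I would invoke the reflection formula in the form
\begin{equation*}
\frac{1}{\Gamma(C+M)}=\frac{\sin(\pi(C+M))}{\pi}\,\Gamma(1-C-M).
\end{equation*}
Since $M\in\mathbb Z$, the addition formula gives $\sin(\pi(C+M))=(-1)^{M}\sin(\pi C)$, so the sine factor is bounded by a constant depending only on $C$. Writing $m=|M|=-M$, Stirling now supplies an upper bound $|\Gamma(1-C+m)|\leq c_3\,m^{m+1/2-\Re(C)}e^{-m}$. Combining these gives $1/|\Gamma(C+M)|\leq c_4\,|M|^{|M|}|M|^{1/2-\Re(C)}e^{-|M|}$, and since $|M|^{-M}=|M|^{|M|}$ for $M<0$, the subexponential remainder is absorbed by choosing $R$ large enough.

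The finitely many $M$ with $|M|<M_0$ that avoid the poles contribute a finite collection of pointwise inequalities that can be satisfied by enlarging $\kappa$. The only mildly delicate point is the uniformity of the Stirling constants in $M$ for fixed $C$, in particular when $\Im(C)$ is large so that the sector-dependent factor $e^{-\pi|\Im(z)|/2}$ interacts with the lower bound; however, because $\Im(C+M)=\Im(C)$ is constant in $M$, this factor contributes only a $C$-dependent constant and does not interfere with the extraction of $R^{|M|}|M|^{-M}$. Once this uniformity is verified, the rest is routine.
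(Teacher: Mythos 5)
Your proof is correct, but it takes a genuinely different route from the paper's. The paper avoids complex Stirling asymptotics altogether: for $M>0$ it expands $\Gamma(C+M)=\Gamma(C)\prod_{j=0}^{M-1}(C+j)$, bounds each factor from below via the reverse triangle inequality to extract $M!\,Q^{M}$ with $Q$ depending only on $C$, and applies Stirling only to $M!$; for $M<0$ it bounds the reciprocal product $\prod_{j=1}^{|M|}|C-j|^{-1}$ crudely by $(|C|+|M|)^{-|M|}$; the case $C\in\mathbb Z$ is then handled separately by reducing to $C-\tfrac12$. You instead use the uniform Stirling lower bound for $|\Gamma(z)|$ in a right half-plane when $M\gg 0$ and the reflection formula when $M\ll 0$, absorbing the finitely many remaining $M$ into $\kappa$. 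Your treatment of negative $M$ is arguably cleaner: the factor $\sin(\pi(C+M))=(-1)^{M}\sin(\pi C)$ automatically produces the zeros of $1/\Gamma$ at the poles, so integer $C$ requires no special case, and your observation that $\Im(C+M)=\Im(C)$ is constant along the relevant ray correctly defuses the only delicate uniformity issue in the Stirling estimates. What the paper's argument buys is elementarity and explicit constants ($\kappa=|\Gamma(C)|^{-1}$, $R=\max(1+|C|,eQ^{-1})$) valid for every $M$ with no threshold $M_{0}$; what yours buys is brevity and a uniform treatment of all $C$, at the cost of citing the uniform complex Stirling bounds and leaving the constants implicit. One cosmetic point: for $M=0$ the right-hand side involves $0^{0}$, which the paper explicitly sets equal to $1$; you should state that convention when you absorb the small-$|M|$ cases into $\kappa$.
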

    \begin{proof}
      Firstly, consider the non-integer case $C\notin \mathbb{Z}$. For $M>0$ it is
      \begin{align}
        |\Gamma(C+M)| &= |\Gamma(C)| \prod_{j=0}^{M-1} |(C+j)| \geq |\Gamma(C)| \prod_{j=0}^{M-1} ||C|- j| \nonumber\\
        &= |\Gamma(C)| \prod_{j=1}^M \left|\frac{|C|-j+1}{j}\right| j \geq M! Q^M |\Gamma(C)|
      \end{align}
      where $Q=\min \left|\frac{|C|-j+1}{j}\right| > 0$ and we used a variation of the triangle inequality $|a+b| \geq  ||a| - |b||$. With Stirling's approximation one obtains further
      \begin{align}
        |\Gamma(C+M)| &\geq |\Gamma(C)| \sqrt{2\pi} Q^M M^{M+\frac{1}{2}} e^{-M} \geq |\Gamma(C)| \sqrt{2\pi} \left(\frac{Q}{e}\right)^M M^M \point
      \end{align}
      In contrast, for $M<0$ and the triangle inequality $|a-b| \leq |a| + |b|$ one obtains
      \begin{align}
        |\Gamma(C+M)| &= |\Gamma(C)| \prod_{j=1}^{|M|} \left|\frac{1}{C-j}\right| \geq |\Gamma(C)| \prod_{j=1}^{|M|} \left|\frac{1}{|C|+j}\right| \geq |\Gamma(C)| \prod_{j=1}^{|M|} \left|\frac{1}{|C|+|M|}\right| \nonumber\\
        &= \left(1+\frac{|C|}{|M|}\right)^{-|M|} |M|^{M} |\Gamma(C)| \geq |\Gamma(C)| (1+|C|)^{-|M|} |M|^{M} \point
      \end{align}
      By the setting $\kappa = |\Gamma(C)|^{-1}$ and $R=\max \left( 1+|C|,e Q^{-1}\right)$ one can combine both cases to equation (\ref{eq:gammest}). The case $M=0$ is trivially satisfied whereat we set $0^0:=1$.
      
      Consider now the case where $C\in\mathbb Z$. If $C+M\leq 0$ the $\Gamma$-function has a pole and the lemma is trivial satisfied. For $C+M\geq 1$ it is $\Gamma(C+M) \geq \left(C+M-\frac{3}{2}\right) \left(C+M-\frac{5}{2}\right) \cdots \frac{1}{2} = \frac{\Gamma(C+M-\frac{1}{2})}{\Gamma(\frac{1}{2})} = \frac{1}{\sqrt \pi} \Gamma(C+M-\frac{1}{2})$ which recurs to the non-integer case with $C^\prime=C-\frac{1}{2}\notin \mathbb Z$.
    \end{proof}
    
    In order to estimate products of ``self-exponential'' functions $a^a$ the following inequality is helpful.  
  
    \begin{lemma} \label{lem:apowera}
      The following estimation holds for $a_i\in\mathbb{R}_{>0}$
      \begin{align}
        \left(\sum_{i=1}^N a_i\right)^{\sum_{i=1}^N a_i} \geq \prod_{i=1}^N a_i^{a_i} \geq \left(\frac{1}{N} \sum_{i=1}^N a_i\right)^{\sum_{i=1}^N a_i}
      \end{align}
      where $N\in\mathbb N$.
    \end{lemma}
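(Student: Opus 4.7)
The plan is to take logarithms throughout and reduce the claimed double inequality to a statement about $f(x) = x\log x$. Setting $S := \sum_{i=1}^N a_i$, the inequalities
\begin{align*}
S^S \geq \prod_{i=1}^N a_i^{a_i} \geq \left(\tfrac{S}{N}\right)^S
\end{align*}
are equivalent, after taking $\log$, to $S\log S \geq \sum_i a_i \log a_i \geq S \log(S/N)$. Both sides will then be handled separately by elementary convexity arguments.

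For the upper bound, I would use the obvious monotonicity of the logarithm together with the fact that $a_i \leq S$ for every $i$, since all other summands are positive. This gives $\log a_i \leq \log S$ (with no sign issue even when $S<1$, since the inequality is between two real numbers), and multiplying by $a_i > 0$ and summing over $i$ yields $\sum_i a_i \log a_i \leq S \log S$, i.e.\ the left inequality after re-exponentiation.

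For the lower bound, the key observation is that $f(x) = x \log x$ is convex on $(0,\infty)$ (its second derivative is $1/x > 0$). Applying Jensen's inequality with equal weights $1/N$ to the points $a_1,\ldots,a_N$ gives
\begin{align*}
f\!\left(\frac{1}{N}\sum_{i=1}^N a_i\right) \leq \frac{1}{N}\sum_{i=1}^N f(a_i),
\end{align*}
which rewritten is $\frac{S}{N}\log\frac{S}{N} \leq \frac{1}{N}\sum_i a_i \log a_i$, so that $S\log(S/N) \leq \sum_i a_i \log a_i$. Exponentiating gives the right half of the claim. Neither step presents any real obstacle; the only point to be a bit careful about is that the argument must work uniformly in the sign of $\log S$ and of $\log(S/N)$, but since both inequalities are derived in additive form before exponentiation, this is automatic.
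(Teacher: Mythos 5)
Your proof is correct, and it takes a genuinely different route from the paper for the nontrivial half of the inequality. The paper dismisses the left inequality as trivially true (essentially your observation that $a_i\leq S$), but for the right inequality it runs a Cauchy-style forward--backward induction: it first establishes the case $N=2$ by reducing $a_1^{a_1}a_2^{a_2}\geq\left(\frac{a_1+a_2}{2}\right)^{a_1+a_2}$ to a Bernoulli inequality in the ratio $r=a_1/a_2\geq 1$, then iterates to handle $N=2^n$, and finally treats arbitrary $N$ by padding the list with $2^n-N$ copies of the mean $\Delta=\frac{1}{N}\sum_i a_i$ and checking that the extra factors cancel. Your argument replaces all of this with a single application of Jensen's inequality to the convex function $f(x)=x\log x$ with equal weights, which is shorter, avoids the case analysis entirely, and makes transparent that the statement is just the convexity of $x\log x$ in disguise; it also generalizes immediately to arbitrary positive weights. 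The paper's approach buys only elementarity --- it never invokes convexity or Jensen, relying solely on the Bernoulli inequality --- at the cost of a more intricate induction. Your handling of the sign of $\log S$ and $\log(S/N)$ by keeping everything additive until the final exponentiation is the right way to avoid the one pitfall in this reduction.
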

    \begin{proof}
      The left inequality is trivially true. The right inequality will be proven by considering different cases of $N$. For $N=2$ without loss of generality it is $\frac{a_1}{a_2}:=r \geq 1$. The latter is then equivalent to the Bernoulli inequality
      \begin{align}
        a_1^{a_1} a_2^{a_2} \geq \left(\frac{a_1+a_2}{2}\right)^{a_1+a_2} \Leftrightarrow  \left(1+\frac{r-1}{1+r}\right)^{r+1} \geq r \point
      \end{align}
      For $N=2^n$ with $n\in\mathbb N$ the lemma can be reduced to the case $N=2$ by an iterative use. All the other cases can be reduced to a $2^n$-case by adding the mean value $\Delta := \frac{1}{N} \sum_{i=1}^N a_i$
      \begin{align}
        \left(\Delta^\Delta\right)^{2^n-N}  \prod_{i=1}^N a_i^{a_i} \geq \left(\frac{\sum_{i=1}^N a_i + (2^n-N)\Delta}{2^n}\right)^{\sum_{i=1}^N a_i + (2^n-N)\Delta} = \left(\frac{\sum_{i=1}^N a_i}{N}\right)^{\sum_{i=1}^N a_i} \Delta^{(2^n-N)\Delta}
      \end{align}
      with $2^n-N>0$.
    \end{proof}
  
    Now, one can estimate the summands of the $\Gamma$-series to find a convergent majorant.
  
    \begin{theorem}[Convergence of the $\Gamma$-series \cite{Gelfand.Graev.ZelevinskiHolonomicSystemsEquations1987}] \label{thm:GammaConverge}
      There is always a non-vanishing interval $I=(a,b)\neq \emptyset$ with $a,b\in\mathbb R^N_{>0}$ such that the $\Gamma$-series of equation (\ref{eq:GammaPowerSeriesVarphi}) converges absolutely for $|z|\in I$.
    \end{theorem}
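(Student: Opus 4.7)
The plan is to bound each summand of the $\Gamma$-series in absolute value and construct a convergent geometric majorant, so that a non-empty region $I$ can be read off explicitly. After pulling out the overall $\lambda$-independent prefactor $z_\sigma^{-\Aas^{-1}\beta}$, I would focus on estimating
\begin{align*}
  |S_\lambda| \;=\; \left|\frac{z_\sigma^{-\Aas^{-1}\Aabs\lambda}\, z_{\bar\sigma}^\lambda}{\lambda!\,\Gamma\bigl(1-\Aas^{-1}\beta-\Aas^{-1}\Aabs\lambda\bigr)}\right| \comma
\end{align*}
where $\lambda\in\Lambda_k\subseteq\mathbb N_0^r$ and the $\Gamma$-factor is a vector-valued product over the $n+1$ rows indexed by $\sigma$.

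First I would apply Lemma \ref{lem:gammaest} componentwise, with $C_i:=1-(\Aas^{-1}\beta)_i$ and $M_i(\lambda):=-(\Aas^{-1}\Aabs\lambda)_i$, to get the estimate
\begin{align*}
  \frac{1}{|\Gamma(C_i+M_i)|} \;\leq\; \kappa_i\, R_i^{|M_i|}\, |M_i|^{-M_i} \point
\end{align*}
In parallel, the denominator $\lambda!=\prod_j\lambda_j!$ is bounded below by Stirling, producing a decay factor of the shape $\prod_j \lambda_j^{-\lambda_j}$. The next step is to combine all the ``self-exponential'' contributions $\prod_i |M_i|^{-M_i}\cdot \prod_j \lambda_j^{-\lambda_j}$ by means of Lemma \ref{lem:apowera}, grouping positive and negative $M_i$ separately so that each growing factor $|M_i|^{|M_i|}$ (which arises when $(\Aas^{-1}\Aabs\lambda)_i>0$) is absorbed against decay pieces of the same total scale. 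This is exactly where the identity $\sum_i(\Aas^{-1}\Aabs)_{ij}=1$ from Lemma \ref{lem:AA1} enters: it guarantees that the total ``mass'' produced by $\Aas^{-1}\Aabs\lambda$ matches the total mass $|\lambda|$ coming from $\lambda!$, so that the cumulative $|\cdot|^{|\cdot|}$-type factors balance out.

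After this bookkeeping, the estimate collapses to a bare product $|S_\lambda|\leq \kappa\prod_{j=1}^{r} c_j^{\lambda_j}$, where the constants $c_j$ depend on the kinematic data only through the dimensionless combinations
\begin{align*}
  |x_j|\;:=\;|z_{\bar\sigma}|_j \prod_i |z_\sigma|_i^{-(\Aas^{-1}\Aabs)_{ij}}
\end{align*}
together with the constants $\kappa_i, R_i$ from Lemma \ref{lem:gammaest}. The $\Gamma$-series is therefore dominated termwise by $\kappa\prod_j \sum_{\lambda_j\geq 0} c_j^{\lambda_j}$, which is a product of $r$ geometric series. Requiring each $|x_j|$ to lie below an explicit threshold makes every $c_j<1$, producing the desired non-empty product of intervals $I\subset\mathbb R^N_{>0}$ on which absolute convergence holds.

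The main obstacle I expect is the sign handling inside $M_i(\lambda)$: for rows $i$ in which $(\Aas^{-1}\Aabs)_{ij}$ can be negative for some $j$, Lemma \ref{lem:gammaest} produces the wrong-sign growth $|M_i|^{|M_i|}$ instead of decay, and one must verify that this growth is tightly compensated by the $\lambda_j^{-\lambda_j}$ factors from $\lambda!$. The systematic way around this is to split $\lambda$ according to which columns of $\Aas^{-1}\Aabs$ are positive or negative in a given row, apply Lemma \ref{lem:apowera} within each block, and then use the column-sum relation of Lemma \ref{lem:AA1} to see that everything recombines into the clean geometric bound above. Restricting $\Lambda_k\subseteq\mathbb N_0^r$ ensures only non-negative $\lambda_j$ appear, so the only genuine signs to track are those of the matrix entries themselves.
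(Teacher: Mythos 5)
Your proposal is correct and follows essentially the same route as the paper: Lemma \ref{lem:gammaest} for the reciprocal Gamma factors, Lemma \ref{lem:apowera} to balance the self-exponential growth against the decay coming from $\lambda!$ and the rows with positive exponent, and the column-sum identity of Lemma \ref{lem:AA1} to make that balance exact, yielding a geometric majorant in the dimensionless variables $x_j$. The only cosmetic difference is that the paper absorbs $\lambda!$ into the Gamma product by augmenting the exponent matrix to $C=\begin{pmatrix}\mathbbm{1}_r\\-\Aas^{-1}\Aabs\end{pmatrix}$, whose columns sum to zero, so a single application of the two lemmas handles all $N$ factors at once instead of treating $\lambda!$ by a separate Stirling bound.
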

    \begin{proof}
      The sum in equation (\ref{eq:GammaPowerSeriesVarphi}) can be written in the form
      \begin{align}
        \sum_{\lambda\in\Lambda_k} \frac{x^\lambda}{\Gamma(a+C\lambda)}
      \end{align}
      where  $x_j = \frac{(z_{\bar\sigma})_j}{\prod_i (z_\sigma)_i^{(\Aas^{-1} \Aabs)_{ij}}}$, $a := (1,\ldots,1,1-\Aas^{-1}\beta)\in\mathbb C^{N}$, $C:=\begin{pmatrix} \mathbbm{1}_r \\ -\Aas^{-1}\Aabs\end{pmatrix}\in\mathbb Q^{N\times r}$ and $r=N-n-1$. Due to the form of $\Lambda_k$ it is $C \lambda\in\mathbb Z^{N\times r}$.
      
      Furthermore by lemma \ref{lem:AA1} it is also $\sum_{i=1}^N C_{ij} = 0$, which will imply $D:=\sum_{C_{ij}>0} C_{ij}\lambda_j = - \sum_{C_{ij}>0} C_{ij}\lambda_j = \frac{1}{2} |\sum_{ij} C_{ij}\lambda_j| $. Thus, one can estimate by lemma \ref{lem:gammaest}
      \begin{align}
        \left| \prod_{i=1}^N \frac{1}{\Gamma(a_i+\sum_{j=1}^r C_{ij} \lambda_j)}\right| \leq K \prod_{i=1}^N R_i^{|\sum_{j=1}^r C_{ij}\lambda_j|} \left|\sum_{j=1}^r C_{ij}\lambda_j\right|^{-\sum_{j=1}^r C_{ij} \lambda_j} \point
      \end{align}
      With lemma \ref{lem:apowera} one can estimate further
      \begin{align}
       \prod_{i=1}^N \left|\sum_{j=1}^r C_{ij}\lambda_j\right|^{-\sum_{j=1}^r C_{ij} \lambda_j} \leq D^D N_2^D D^{-D} = N_2^D
      \end{align}
      by splitting the product in the $N_1$ factors with positive exponent $-\sum_{j=1}^r C_{ij}\lambda_j>0$ and the $N_2$ factors with negative exponent $-\sum_{j=1}^r C_{ij}\lambda_j<0$. With $R_{\textrm{max}}=\max_i R_i$ we get our final estimation
      \begin{align}
        \left| \prod_{i=1}^N \frac{1}{\Gamma(a_i+\sum_{j=1}^r C_{ij} \lambda_j)}\right| \leq K N_2^D R_{\textrm{max}}^{2D} \point
      \end{align}
      Thus, the $\Gamma$-series can be estimated by a geometric series and there is always a non-vanishing region of absolute convergence.
    \end{proof}
    
    Therefore, $\Gamma$-series have always a non-vanishing region of convergence and provide actual solutions of the GKZ hypergeometric system. In addition to it, it was shown in \cite{Gelfand.Graev.RetakhGeneralHypergeometricSystems1992} that all $\Gamma$-series of a triangulation have a common region of absolute convergence.

    \subsection{Characteristics of Several Simple Graphs}
    
    In order to classify different Feynman graphs with regard to their complexity from a hypergeometric perspective, we collate some characterizing numbers for several standard Feynman graphs in table \ref{tab:charac}. According to theorem \ref{thm:FeynSeries} one expect $\vol_0 \Delta_G$ series with dimension $r:=N-n-1$ in a hypergeometric series representation. Every of those series contains $(n+1)$ $\Gamma$-functions. Further, every regular, unimodular triangulation $\mathcal T_u$ gives a way to write the Feynman integral as Horn hypergeometric functions.
    
    If the number of master integrals $\mathfrak C = (-1)^n \chi((\mathbb C^\star)^n \setminus \{G=0\})$ \cite{Bitoun.Bogner.Klausen.PanzerFeynmanIntegralRelations2019} is less than the volume of the Newton polytope $\mathfrak C < \vol_0 \Delta_G$ one expects linear dependencies in the limit from the generalized Feynman integral to the ``ordinary'', non-generic Feynman integral according to section \ref{ssec:nongeneric}.

      \begin{sidewaystable}
	\begin{tabular}{l>{\ttfamily\footnotesize}c >{$}c<{$}>{$}c<{$}>{$}c<{$}>{$}c<{$}|>{$}c<{$}>{$}c<{$}|>{$}c<{$}>{$}c<{$}>{$}c<{$}|>{$}c<{$}>{$}c<{$}|>{$}c<{$}}
		  topology	&  \textrm{\normalsize Nickel index} & L	& n	& \text{\#legs} & \text{\#masses}	& N	& r	& \vol_0 \Delta_G & |\mathcal{T}|	& |\mathcal{T}_u| & |U| & |F| & \mathfrak C \\
		  \hhline{======|==|===|==|=}
		  bubble & e11|e:n00|n & 1 & 2 & 2 & 0 & 3 & 0 & 1 & 1 & 1 & 2 & 1 & 1\\
		  bubble & e11|e:nn0|n & 1 & 2 & 2 & 1 & 4 & 1 & 2 & 2 & 2 & 2 & 2 & 2\\
		  bubble & e11|e:nnn|n & 1 & 2 & 2 & 2 & 5 & 2 & 3 & 5 & 3 & 2 & 3 & 3\\
		  \hline
		  sunset & e111|e:n000|n & 2 & 3 & 2 & 0 & 4 & 0 & 1 & 1 & 1 & 3 & 1 & 1\\
		  sunset & e111|e:nn00|n & 2 & 3 & 2 & 1 & 6 & 2 & 3 & 6 & 6 & 3 & 3 & 2\\
		  sunset & e111|e:nnn0|n & 2 & 3 & 2 & 2 & 8 & 4 & 6 & 68 & 44 & 3 & 5 & 4 \\
		  sunset & e111|e:nnnn|n & 2 & 3 & 2 & 3 & 10 & 6 & 10 & 826 & 466 & 3 & 7 & 7\\
		  \hline
		  amputated cap & e112|2|e:n000|0|n & 2 & 4 & 2 & 0 & 8 & 3 & 5 & 20 & 20 & 5 & 3 & 2 \\
		  amputated cap & e112|2|e:n00n|0|n & 2 & 4 & 2 & 1 & 10 & 5 & 8 & 448 & 432 & 5 & 5 & 3\\
		  amputated cap & e112|2|e:n00n|n|n & 2 & 4 & 2 & 2 & 13 & 8 & 14 & 91052 & 43864 & 5 & 8 & 5 \\
		  \hline
		  vertex & e12|e2|e:n00|n0|n & 1 & 3 & 3 & 0 & 6 & 2 & 4 & 3 & 3 & 3 & 3 & 4\\
		  vertex & e12|e2|e:nn0|n0|n & 1 & 3 & 3 & 1 & 7 & 3 & 5 & 16 & 15 & 3 & 4 & 5\\
		  vertex & e12|e2|e:nnn|n0|n & 1 & 3 & 3 & 2 & 8 & 4 & 6 & 68 & 44 & 3 & 5 & 6\\
		  vertex & e12|e2|e:nnn|nn|n & 1 & 3 & 3 & 3 & 9 & 5 & 7 & 261 & 99 & 3 & 6 & 7\\
		  \hline
		  dunce's cap & e12|e22|e:n00|n00|n & 2 & 4 & 3 & 0 & 9 & 4 & 8 & 42 & 42 & 5 & 4 & 4\\
		  dunce's cap & e12|e22|e:nn0|n00|n & 2 & 4 & 3 & 1 & 11 & 6 & 11 & 2388 & 1968 & 5 & 6\\
		  dunce's cap & e12|e22|e:nnn|n00|n & 2 & 4 & 3 & 2 & 13 & 8 & 14 & 91052 & 43864 & 5 & 8\\
		  \hline		  
		  box & e13|e2|e3|e:n00|n0|n0|n & 1 & 4 & 4 & 0 & 10 & 5 & 11 & 102 & 102 & 4 & 6 & 11\\
		  box & e13|e2|e3|e:nn0|n0|n0|n & 1 & 4 & 4 & 1 & 11 & 6 & 12 & 1689 & 1260 & 4 & 7 & 12\\
		  box & e13|e2|e3|e:nnn|n0|n0|n & 1 & 4 & 4 & 2 & 12 & 7 & 13 & 14003 & 8004 & 4 & 8 & 13\\
		  box & e13|e2|e3|e:nnn|nn|n0|n & 1 & 4 & 4 & 3 & 13 & 8 & 14 & 87657 & 34143 & 4 & 9 & 14\\
		  box & e13|e2|e3|e:nnn|nn|nn|n & 1 & 4 & 4 & 4 & 14 & 9 & 15 & 469722 (192) & 114276 (12) & 4 & 10 & 15 \\
		  \hline
		  kite & e13|23|e3:n00|00|n0 & 2 & 5 & 2 & 0 & 16 & 10 & 42 & > 9 \cdot 10^6 &  > 9 \cdot 10^6 & 8 & 8 & 3\\
		  \hline
		  banana & e1111|e:n0000|n & 3 & 4 & 2 & 0 & 5 & 0 & 1 & 1 & 1 & 4 & 1 & 1\\
		  banana & e1111|e:nn000|n & 3 & 4 & 2 & 1 & 8 & 3 & 4 & 24 & 24 & 4 & 4\\
		  banana & e1111|e:nnn00|n & 3 & 4 & 2 & 2 & 11 & 6 & 10 & 2486 & 1618 & 4 & 7\\
		  banana & e1111|e:nnnn0|n & 3 & 4 & 2 & 3 & 14 & 9 & 20 & 522206 (3952) & 248420 (2194) & 4 & 10\\
		  banana & e1111|e:nnnnn|n & 3 & 4 & 2 & 4 & 17 & 12 & 35 & > 5 \cdot 10^6 & > 5 \cdot 10^6 & 4 & 13 & 15\\
  	\end{tabular}
  	\caption{Classification of some basic Feynman integrals, with loop number $L$, number of edges $n$, $N$ the number of monomials in $G$, $r$ the corank of $\Aa$, $\vol_0 \Delta_G$ the dimension of solution space, $|\mathcal T|$ the number of regular triangulations, $|\mathcal T_u|$ the number of regular unimodular triangulations, $|U|$ the number of monomials in the first Symanzik polynomial, $|F|$ the number of monomials in the second Symanzik polynomial and $\mathfrak C$ the number of master integrals. Numbers in brackets count non regular triangulations.} \label{tab:charac}
      \end{sidewaystable}

 \nocite{*}
  \pagebreak
 \printbibliography

\end{document}

